\newcommand{\nonstoc}[1]{#1}
\newtheorem{theorem}{Theorem}[section]
\newtheorem{definition}{Definition}
\newtheorem{claim}{Claim}
\newtheorem{lemma}[theorem]{Lemma}
\newtheorem{corollary}[theorem]{Corollary}
\newtheorem{observation}{Observation}
\newcommand{\qed}{\mbox{\ \ \ }\rule{6pt}{7pt} \bigskip}
\renewcommand{\comment}[1]{}
\newenvironment{proof}{\noindent{\em Proof:}}{\hfill\qed\\}
\newenvironment{proofsketchof}[1]{\noindent{\em Proof Sketch of #1:}}{\hfill\qed\\}
\newenvironment{proofof}[1]{\noindent{\em Proof of #1:}}{\hfill\qed\\}
\newcommand{\argmax}{\operatorname{argmax}}
\newcommand{\val}[1]{v_{#1}}
\newcommand{\Aa}{A1}
\newcommand{\Ab}{A2}
\newcommand{\Ac}{A3}
\newcommand{\Ad}{A4}
\newcommand{\utiltarget}{\pi}
\newcommand{\outcome}{o}
\newcommand{\optoutcome}{o^*}
\newcommand{\sbid}[1]{\utiltarget_{#1}}
\newcommand{\optsbid}[1]{\utiltarget^*_{#1}}
\newcommand{\level}{L}
\newcommand{\cefset}{\mathcal{C}}
\newcommand{\ncefset}{\overline{\cefset}}
\newcommand{\allsbid}{\mathrm{\utiltarget}}
\newcommand{\nearcef}{\cefset_\epsilon}
\newcommand{\nearncef}{\ncefset_\epsilon}
\newcommand{\numbidders}{n}
\newcommand{\bidders}{[\numbidders]}
\newcommand{\outcomes}{\mathcal{O}}
\newcommand{\altbidderset}{\mathcal{B}}
\newcommand{\witness}{\outcome^{w}}
\newcommand{\bidderidx}{j}
\newcommand{\altbidderidx}{j'}
\newcommand{\levelidx}{i}
\newcommand{\lowlevelidx}{\levelidx^{-}}
\newcommand{\highlevelidx}{\levelidx^{+}}
\newcommand{\altlevelidx}{\levelidx'}
\newcommand{\lowbound}{b^-}
\newcommand{\highbound}{b^+}
\newcommand{\totalbid}{\mathbf{B}}
\begin{document}
\title{A Dynamic Axiomatic Approach to First-Price Auctions}

\author{Darrell Hoy\thanks{Northwestern University, Evanston, IL. Much of this work was done while the author was an intern at eBay Research Labs. }
\and Kamal Jain\thanks{eBay Research Labs, San Jose, CA.}
\and Christopher A. Wilkens\thanks{University of California at Berkeley, Berkeley, CA. Much of this work was done while the author was an intern at eBay Research Labs. This work was supported in part by NSF award CCF-0964033}
}

\date{}

\maketitle{}

\begin{abstract} The first-price auction is popular in practice for its simplicity and transparency. Moreover, its potential virtues grow in complex settings where incentive compatible auctions may generate little or no revenue. Unfortunately, the first-price auction is poorly understood in theory because equilibrium is not {\em a priori} a credible predictor of bidder behavior.

We take a dynamic approach to studying first-price auctions: rather than basing performance guarantees solely on static equilibria, we study the repeated setting and show that robust performance guarantees may be derived from simple axioms of bidder behavior. For example, as long as a loser raises her bid quickly, a standard first-price auction will generate at least as much revenue as a second-price auction.

We generalize this dynamic technique to complex pay-your-bid auction settings: as long as losers do not wait too long to raise bids, a first-price auction will reach an envy-free state that implies a strong lower-bound on revenue; as long as winners occasionally experiment by lowering their bids, the outcome will near the boundary of this envy-free set so bidders do not overpay; and when players with the largest payoffs are the least patient, bids converge to the egalitarian equilibrium. Significantly, bidders need only know whether they are winning or losing in order to implement such behavior.

Along the way, we find that the auctioneer's choice of bidding language is critical when generalizing beyond the single-item setting, and we propose a specific construction called the {\em utility-target auction} that performs well. The utility-target auction includes a bidder's final utility as an additional parameter, identifying the single dimension along which she wishes to compete. This auction is closely related to profit-target bidding in first-price and ascending proxy package auctions and gives strong revenue guarantees for a variety of complex auction environments. Of particular interest, the guaranteed existence of a pure-strategy equilibrium in the utility-target auction shows how Overture might have eliminated the cyclic behavior in their generalized first-price sponsored search auction if bidders could have placed more sophisticated bids.
\end{abstract}



\section{Introduction}

In 1961, Vickrey~\cite{V61} initiated the formal study of auctions. He first considered common auctions of the day --- including  the first-price auction, the Dutch auction, and the English auction --- and studied their equilibria. Vickrey observed that the English auction was, in theory, more robust because each player had a strategy that dominated all others regardless of other players' bids. As a solution, he proposed\footnote{While Vickrey was the first to discover the second-price auction in the economics literature, it has been used in practice as early as 1893~\cite{L00}.} the second-price auction as a means to achieve the same robustness in a sealed-bid format. The subsequent development of auction theory largely followed Vickrey's paradigm: existing auctions were evaluated in terms of their equilibria, meanwhile the field of mechanism design emerged with dominant strategy incentive compatibility as a {\em sine qua non}.

Fifty years later, it is apparent that Vickrey's analysis does not always give best guide to implementing a real auction. In mechanisms without dominant strategies, Vickrey's original concern still stands --- equilibrium is a highly questionable predictor of outcome due (at least in part) to players' informational limitations~\cite{V61,H89}. Neither is dominant strategy incentive compatibility a magic solution: incentive compatible mechanisms have sufficiently many drawbacks that their real attractiveness rarely matches theory --- the simple and elegant second-price auction has earned the title ``Lovely but Lonely''~\cite{AM06} for its sparse use. Even the supposition that bidders will play strategies that are theoretically dominant is discredited by a wide variety of practical issues~\cite{K02}.

Dynamic analysis offers a powerful complement to Vickrey's static approach. For example, certain behavior will be clearly irrational when an auction is repeated. Such reasoning was used by Edelman and Schwarz~\cite{ES10} in the context of the generalized second-price (GSP) ad auction --- they analyzed a dynamic game to derive bounds on reasonable outcomes of the auction, then studied the static game under the assumption that these bounds were satisfied. Dynamic settings also introduce new pitfalls: Edelman and Ostrovsky~\cite{EO07} showed that the instability of Overture's generalized first-price (GFP) ad auction could be attributed to its lack of a pure-strategy equilibrium.

We study repeated first-price auctions and show that they offer powerful performance guarantees. We begin with a static perspective and observe that the equilibrium properties of the auction depend significantly on the types of bids that bidders can express. We propose a generalization of the first-price auction called the {\em utility-target auction} that is closely related to profit-target bidding in first-price and ascending proxy package auctions~\cite{M04, DM07}. Like these package auctions, we show that the utility-target auction possess many advantages over incentive compatible mechanisms in a static equilibrium analysis, including revenue, simplicity, and transparency. More significantly, we show that {\em the same performance guarantees may be derived using only a few simple behavioral axioms and limited information} in a repeated setting. These dynamic results are particularly powerful because they do not require an a priori assumption that the auction will reach equilibrium --- for example, assuming only that losers will not wait too long to raise their bids {\em the auctioneer's revenue satisfies a natural lower bound regardless of whether bidders' behavior converges to equilibrium}. Moreover, bidders need only know if they are winning or losing to implement the dynamics. We build on these axioms to demonstrate behavior that offers progressively stronger performance guarantees, culminating with a set of axioms that together imply convergence to the egalitarian equilibrium.

\paragraph{First-Price Auctions\nonstoc{.}} The virtues of a first-price auction --- and other auctions in the pay-your-bid family --- arise from its simplicity. From the bidders' perspective, the pay-your-bid property offers transparency, credibility, and privacy: not only is the auction easy to understand, but it ensures that the auctioneer cannot cheat (say, by unreasonably inflating the reserve price in a repeated auction) and allows a bidder to participate without expressing her true willingness to pay.

The auctioneer can also benefit from this simplicity because players' bids represent guaranteed revenue. By comparison, the revenue from a dominant strategy incentive compatible auction is almost always less than the bids and, in the most general settings, may even be zero~\cite{AM06,Rob79}. Supposing a first-price auction reaches equilibrium, a variety of work presents settings where they generate more revenue for the seller than their incentive compatible brethren~\cite{M04,LST12,HJW} (though they may also generate less revenue~\cite{MR00}).

Yet, running a first-price auction is risky. While first-price auctions have been quite successful in settings like treasury bill and procurement auctions, Overture's generalized first-price (GFP) auction for sponsored search advertising was erratic: bids rapidly rose and fell in a sawtooth pattern, rendering the auction unpredictable and depressing revenue~\cite{EO07}. As a result, the sponsored search industry has moved to a generalized second-price (GSP) auction that leverages the intuition of the second-price auction to disincentivize small adjustments to a player's bid. 

The challenges of a first-price auction are many and complex. Vickrey identified a major source of risk in the first-price single-item auction: since a rational bidder's optimal bid depends on other players' bids, actual behavior will depend on beliefs about others' strategies. A first-price auction also requires bidders to strategize, a task that is may be difficult and expensive. At best, players will be in a Bayesian equilibrium, and, at worst, they will be completely unpredictable. Indeed, predicting the outcome of a first-price auction lies at the center of a lively debate between experimental and theoretical economists~\cite{H89}.

Experience with GFP highlights another potential pitfall of first-price auctions: when generalized beyond the single-item setting, a first-price auction may not have a pure-strategy equilibrium. Edelman and Ostrovsky~\cite{EO07} showed that this was the case with GFP and demonstrated how it generated the rapid sawtooth behavior seen in practice. Our goal is to demonstrate that how proper design coupled with dynamic arguments can support strong performance guarantees.

\paragraph{The Utility-Target Auction\nonstoc{.}} The equilibria and performance of a pay-your-bid auction depend on its implementation. Within the pay-your-bid constraint, the auctioneer chooses the form of players' bids, potentially restricting or broadening the bids that players may express.

The historical performance of the GFP ad auction exemplifies the importance of choosing a good bidding language. In the GFP auction, advertisers placed a single bid and paid the bid price for each click regardless of where their ads were shown. In retrospect, the rapid sawtooth motion observed in bids is not surprising because the auction had no pure-strategy equilibrium~\cite{EOS07,EO07}; however, we show that a pure-strategy equilibrium would have existed if bidders could have placed more expressive bids, such as bidding different prices for each slot.

A natural question arises: what are good bidding languages and how complicated must a language be to offer good performance? In GFP, the bidding language is precisely sufficient to represent any possible valuation function; hence, it is possible that bids may need to be more expressive than the space of valuation functions.

We show that the overhead required for a good bidding language is at most a single value: it is sufficient to ask bidders for their valuation function and their final desired utility. We call such an auction a {\em utility-target auction}: a player's bid includes a specification of her value for every outcome and a single number representing the utility-target that she requests regardless of the outcome. Her payment is her claimed value for the final outcome minus the utility-target that she requested, and the auctioneer chooses the outcome that maximizes the total payment. In essence, the utility-target auction isolates the single dimension (utility) along which a bidder truly wishes to strategize.

We begin with a static analysis of the utility-target auction's equilibria. We first show that the utility-target auction is quasi-incentive compatible: a bidder never has an incentive to misreport her valuation function --- it is always sufficient for her to manipulate the utility-target she requests. Moreover, we show that a pure-strategy equilibrium always exists and that the egalitarian equilibrium is efficiently computable. These results are closely related to profit-target equilibria in package auctions~\cite{M04}.

Next, we show that the utility-target auction offers good equilibrium performance. Similar to the approach of Edelman, Ostrovsky, and Schwarz~\cite{EOS07} on the generalized second-price (GSP) auction, we show that all equilibria satisfying a natural envy-free criterion have good performance. First, such equilibria are efficient and generate at least as much revenue as the incentive compatible Vickrey-Clarke-Groves (VCG) mechanism. Moreover, they generate revenue even when the incentive compatible mechanisms fail --- the revenue of the envy-free equilibria of a utility-target auction all meet an intuitive benchmark we call the second-price threat, even settings where the VCG mechanism may make little or no revenue. Again, this bound is related to the core property of profit-target equilibria in package auctions~\cite{M04, DM07}.

\paragraph{Dynamic Analysis through Behavioral Axioms\nonstoc{.}} A significant novelty of our work is our use of simple behavioral axioms to prove guarantees on the performance of utility-target auctions.

Dynamic arguments are generally fraught with peril: in addition to being difficult to prove, more complex auctions (or markets, or games) require more complex bidding behavior to converge to an equilibrium and therefore sacrifice robustness. For example, Walrasian t\^atonnement\footnote{To justify market equilibrium as a predictor of actual market behavior, Leon Walras described a dynamic procedure called t\^atonnement that might converge to it.}~\cite{W54} is perhaps the earliest concrete dynamic procedure proposed in economics --- it converges in general markets when modeled as a particular continuous process~\cite{S41,ABH59} but may or may not converge as a discrete process~\cite{BM92,CF08}. More recent results have sought stronger guarantees, e.g. by showing that players' behavior will converge to equilibrium in repeated games as long as their learning strategies are ``adaptive and sophisticated''~\cite{MR91} or no-regret~\cite{HM00,EMN09}. However, these properties are sufficiently complicated that it is difficult to evaluate whether players' strategies indeed satisfy them in practice.

In contrast, we build simple behavioral axioms and use them to prove performance guarantees. Our first axioms are that (a) a bidder who is losing will raise her bid to try to win, and (b) a bidder who is losing is more impatient than a bidder who is winning. After formalizing these axioms in the context of utility-target auctions, we show that the auction will eventually reach an outcome that satisfies a natural notion of envy-freeness and, by extension, a natural second-price type bound on revenue. Significantly, this result neither implies nor requires that players' bids converge to a steady-state. Moreover, bidder behavior requires only knowing whether one is winning or losing, not the precise bids of other players.

Next, we show that bidders will not overpay if two more axioms are also satisfied, namely that (c) bidders who are winning will try to lower their bid to save money. Axioms (a)-(c) guarantee that bids will ultimately remain close to the boundary between envy-free and non-envy-free outcomes, a boundary which contains the envy-free equilibria. These axioms offer a degree of robustness, since bids will seek this boundary even as bids and ads change.

Finally, we show that bids will converge to the egalitarian equilibrium --- the equilibrium that distributes utility most evenly --- if a fourth axiom is satisfied. The fourth axiom concerns the timing of raised bids: (d) the bidder who has the most value at risk is the least patient and therefore raises her bid first. When bidder behavior satisfies all five axioms (a)-(d), we show that bids will converge to the egalitarian equilibrium. Together, these results offer powerful guarantees about the performance of a utility-target auction in a repeated setting.

\paragraph{Related Work\nonstoc{.}} Our utility-target auction is most closely related to first-price package auctions \cite{BW86} and the ascending proxy auction~\cite{M04}. Profit-target bidding in these auctions is closely related to quasi-truthful bidding in utility-target auctions, and the static properties we prove in Section~\ref{sec:eq} all have direct analogues. In contrast, the utility-target auction can be applied beyond the package auction setting (e.g. to ad auctions), and our dynamic analysis is entirely new, a more general confirmation of Milgrom's postulate that profit-target equilibria ``may describe a central tendency for some kinds of environments''~\cite{M04}.

Auctions in which a player's bid directly specifies her payment are known as {\em pay-your-bid} auctions. The first-price auction, as well as the Dutch an English auctions, are members of this family. Our utility-target auction is closely related to first-price and ascending proxy package auctions~\cite{M04}. Engelbrecht-Wiggans and Kahn~\cite{EK98} explored multi-unit, sealed-bid pay-your-bid auctions and found their equilibria to be substantially different from the standard first-price auction --- the core issue they encounter is the same one arising in GFP.

A key reason repeated auctions may admit more robust performance guarantees is that bidders can learn about others' valuations. A similar informational exchange is present in and a motivation for classic ascending auctions. In addition to his discussion of ascending proxy auctions~\cite{M04}, Milgrom offers a broad discussion of this literature in~\cite{M00}. Some recent work studies ascending auctions for position auctions like sponsored search~\cite{EOS07,ABHLT10}.

Our work can also be seen through the lens of {\em simple versus optimal} mechanisms~\cite{HR09}. The general goal of this line of research is to design a mechanism that is simple and transparent while (possibly) sacrificing efficiency or revenue. For example, Hart and Nisan analyze the tradeoff between the number of different bundles offered to a buyer and an auction's performance~\cite{HN12}. By comparison, our results show that a first-price auction can guarantee good performance when the bidding complexity is only slightly larger than that of the valuation functions.


\newcommand{\ex}{\mathbf{E}}

\section{Definitions and Preliminaries}\label{sec:s-prelim}

The {\em utility-target auction} is a generalization of the first-price auction. Its key feature is an extra utility-target parameter in the bid --- this parameter highlights the key dimension along which bidders care to compete. It gives bidders sufficient flexibility to guarantee the existence of pure-strategy equilibria while minimizing the communication required between the bidders and the auctioneer.

\subsection{First-Price and Pay-Your-Bid Auctions\nonstoc{.}} An auction is a protocol through which players bid to select an outcome. A standard sealed-bid auction can be decomposed into three stages: (1) each player $i$ submits a bid $b_i$, (2) the auctioneer uses players' bids to pick an outcome $o$ from a set $\outcomes$, and finally (3) each player $i$ pays a price $p_i$. The final utility of player $i$ is given by $v_i(o)-p_i$, where $v_i(o)\geq 0$ denotes $i$'s value for the outcome $o$, i.e. $v_i\in V_i$ is $i$'s valuation function (drawn from a publicly known set $V_i$).

From this perspective, the standard first-price auction is described as follows: (1) each player submits a single number $b_i\in\Re$, (2) the auctioneer chooses to give the item to the player $i^*$ who submits the largest bid $b_i$, and (3) the winner $i^*$ pays $b_{i^*}$ and everyone else pays zero. For comparison, the second-price auction is identical to the first-price auction except that the price paid is equal to the second-highest value of $b_i$.

When the outcomes are few, we will use $v_i$ and $b_i$ to denote the profile of values and bids across outcomes, e.g., $v_i=(1, 1.5, 0)$.

When considering settings beyond the single-item auction there are many ways to generalize the first-price auction. Even within the single-item setting, the auctioneer could choose an arbitrary encoding for players' bids. Moreover, the auctioneer might choose an encoding that changes the space of possible bids, e.g by forcing bidders to place integer bids when values are actually real numbers. In such cases, the principle feature that we wish to preserve is that the winner ``pays what she bid,'' or alternatively that a player's bid precisely specifies her payment. Formally, we say that such an auction has the pay-your-bid property:
\begin{definition}
An auction has the {\em pay-your-bid property} if the payment $p_i$ depends only on the outcome $o$ and $i$'s bid $b_i$ (it does not directly depend on others' bids).
\end{definition}
The first-price auction as described above clearly satisfies this property while a second-price auction does not.

Not all sealed-bid pay-your-bid auctions are equivalent. Edelman et al.~\cite{EOS07} showed that GFP, where the set of possible bids is precisely $V_i$, did not have a pure-strategy equilibrium:
\begin{observation}
The pay-your-bid property does not guarantee the existence of a pure-strategy equilibrium in a sealed-bid auction when the space of bids is the same as the space of valuation functions.
\end{observation}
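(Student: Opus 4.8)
The plan is to exhibit a concrete counterexample: a sealed-bid pay-your-bid auction in which the bidding language coincides with the space of valuation functions, and then show that no profile of bids can be a pure-strategy equilibrium. The cleanest instance to use is the generalized first-price (GFP) auction itself, since the observation is explicitly attributed to \cite{EOS07} and the statement only claims \emph{existence} of such an auction. I would set up a sponsored-search style position auction with two slots of differing click-through rates and two (or three) bidders, each bidding a single per-click price $b_i$, with slots allocated greedily by bid and each winner paying her own bid per click. Here the bid space is exactly a single real number per bidder, which for a per-click valuation is precisely the space of valuation functions, so the hypothesis of the observation is met.

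First I would fix explicit click-through rates $\alpha_1 > \alpha_2 > 0$ and explicit per-click values $v_1, v_2$ for two bidders, chosen so that the ``gap'' incentives point in opposite directions. Second, I would argue that in any candidate pure-strategy profile, the bidder holding the top slot always has a strictly profitable deviation to lower her bid infinitesimally (as long as she remains above the lower bidder, she keeps the top slot but pays strictly less per click), so no equilibrium can have the top bidder bidding strictly above the threshold that separates the slots. Third, I would argue the symmetric upward pressure: the bidder in the lower slot, whenever her value for the top slot net of the marginal price exceeds her value for the lower slot, strictly prefers to outbid the current top bidder, so she will raise her bid. The combination means bids can never settle: the top bidder always wants to shade down toward the competing bid, while the loser always wants to jump just above it, producing exactly the sawtooth non-convergence.

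The main obstacle, and the step requiring the most care, is ruling out \emph{all} candidate equilibria rather than just the ``interior'' ones — in particular, boundary and tie-breaking cases. I would need to handle the degenerate profiles where bids coincide (ties at the slot boundary) and where a bidder drops out entirely, showing that each such configuration still admits a profitable deviation under the chosen tie-breaking rule. Because the best response to any opponent bid is an open-interval condition (bid strictly above to win the slot, but as low as possible to minimize payment), no best response exists as an attained maximum, which is the underlying reason equilibrium fails; I would make this ``no attained best response'' phenomenon explicit, as it is both the crux of the argument and the feature that distinguishes GFP from the utility-target auction developed later. Since the cited result \cite{EOS07} already establishes the non-existence for this exact auction, I can invoke it directly, so the proof reduces to verifying that GFP is a sealed-bid pay-your-bid auction whose bid space equals the space of valuation functions, which is immediate from the description above.

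\qed
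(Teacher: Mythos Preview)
Your proposal is correct and matches the paper's approach exactly: the paper does not supply a proof at all but simply cites the GFP auction of \cite{EOS07} as the witnessing counterexample, noting that there the bid space is precisely $V_i$. Your final paragraph --- invoke \cite{EOS07} and verify that GFP is pay-your-bid with bid space equal to the valuation space --- is all the paper does, so the detailed case analysis you sketch beforehand, while sound, goes beyond what is needed.
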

Moreover, as we discuss in Section~\ref{sec:adauction}, any pay-your-bid ad auction where bids are restricted to a subset of $V_i$ must suffer in terms of its welfare and revenue guarantees. Thus, it is import to consider auctions that allows bids $b_i\not\in V_i$. This motivates us to introduce the utility-target auction, a sealed-bid pay-your-bid auction that allows such bids and always has pure-strategy equilibria with strong performance guarantees.

\subsection{Utility-Target Auctions}A utility-target auction is a sealed-bid pay-your-bid auction with a special bidding language. A player's bid specifies payments using two pieces of information: her valuation function and the amount of utility she requests (a single real number). Her payment for an outcome is her (claimed) valuation for that outcome minus the utility that she specified in her bid. Formally:
\begin{definition}
A {\em utility-target auction} for a finite outcome space $\outcomes$ is defined as follows:
\begin{itemize}
\item A bid is a tuple $b_i=(x_i,\utiltarget_i)$ where $x\in V_i$ is a function mapping outcomes $\outcome\in\outcomes$ to nonnegative values and $\utiltarget$ is a real number. We call the parameters $x_i$ and $\utiltarget_i$ the {\em value bid} and {\em utility-target bid} respectively.
\item A bidder's effective bid for outcome $o$ is
\[b_i(o)=\max(x_i(o)-\utiltarget_i,0)\enspace.\]
Note this may generate $b_i\not\in V_i$ when the set $V_i$ is sufficiently restricted.
\item The auctioneer chooses the outcome $\outcome^*\in\outcomes$ that maximizes $\sum_{i\in\bidders}b_i(o)$. Ties are broken in favor of the most-recent winning outcome when applicable.
\item When the outcome is $o$, bidder $i$ pays $p_i(o)=b_i(o)$ and derives utility $u_i(o)=v_i(o)-b_i(o)$.  Note that if a bidder reports $x_i=v_i$, then $u_i(o)=\utiltarget_i$ whenever $v_i(o)\geq \utiltarget_i$.
\end{itemize}
\end{definition}
A generic utility-target auction is illustrated in Algorithm~\ref{alg:sa}.

\begin{algorithm}[tb]\label{alg:sa}
\SetKwInOut{Input}{input}\SetKwInOut{Output}{output}
\SetKwIF{WP}{ElseWP}{Otherwise}{with probability}{}{with probability}{otherwise}{end}
\SetKw{KwSet}{Set}
\SetKwRepeat{DoUntil}{do}{until}
\Input{Players' bids $b_i=(x_i,\utiltarget_i)$}
\Output{An outcome $o^*$ and first-price payments $p_i$.}

\BlankLine
\nl Let $b_i(o)=\max(0,x_i(o)-\utiltarget_i)$\tcp*[r]{$b_i(o)$ is $i$'s effective bid for outcome $o$.}
\nl Compute $o^*=\argmax_o\sum_{i\in\bidders}b_i(o)$\tcp*[r]{Choose the outcome with the highest total bid.}
\nl For all $i$, set $p_i=b_i(o^*)$\tcp*[r]{Each player pays what she bid.}
\caption{A generic utility-target auction.}
\end{algorithm}

\section{Quasi-Truthful Bidding}

An idealist's intuition for the utility-target auction is that players truthfully reveal their valuation function through their value bids (i.e. they bid bid $x_i=v_i$) and then use the utility-target bid $\utiltarget_i$ to strategize. Clearly, bidders need not follow this ideal; however, it turns out that they have no incentive to do otherwise --- the utility-target auction is quasi-truthful in the sense that for any bid a player might consider, there is another bid in which she reveals $v_i$ truthfully and obtains at least as much utility:
\begin{lemma}[Quasi-Truthfulness]\label{lem:quasitruthful} Fix the total bid of players $j\neq i$ for all outcomes, i.e. fix $\sum_{j\in[n]\setminus\{i\}}b_j(o)$ for all $\outcome$, and suppose ties are broken according to a fixed total-ordering on outcomes. If bidder $i$ gets $u_i^I$ by bidding $(x_i^I,\utiltarget_i^I)$, then she gets the same utility $u_i^I$ by bidding $(v_i,u_i^I)$.

\end{lemma}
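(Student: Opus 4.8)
The plan is to compare the outcome and payment under the original bid $(x_i^I,\utiltarget_i^I)$ with those under the proposed truthful bid $(v_i,u_i^I)$, holding the aggregate bid $S(\outcome):=\sum_{j\neq i}b_j(\outcome)$ of the other players fixed. Write $b_i^I$ for $i$'s effective bid under the original bid, let $o^I$ be the outcome the auctioneer selects, and recall $u_i^I=v_i(o^I)-b_i^I(o^I)$. Since effective bids are nonnegative, $b_i^I(o^I)\geq 0$ gives the inequality $v_i(o^I)\geq u_i^I$ that I will use repeatedly. Under the truthful bid the effective bid becomes $\tilde b_i(\outcome)=\max(v_i(\outcome)-u_i^I,0)$, so the utility at any selected outcome $\outcome$ equals $v_i(\outcome)-\tilde b_i(\outcome)=\min(v_i(\outcome),u_i^I)$.

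The upper bound is then immediate: whichever outcome the auctioneer selects under the truthful bid, the utility $\min(v_i(\outcome),u_i^I)$ is at most $u_i^I$, so the truthful bid can never do strictly better than $u_i^I$. For the matching lower bound, the first key step is the identity $\tilde b_i(o^I)=b_i^I(o^I)$: since $v_i(o^I)-u_i^I=b_i^I(o^I)\geq 0$ the $\max$ is not truncated, so the two effective bids agree at $o^I$. Consequently the total at $o^I$ is unchanged, so it still equals the old maximum $M:=\max_{\outcome}\big(b_i^I(\outcome)+S(\outcome)\big)$, and the new maximum $M'$ therefore satisfies $M'\geq M$.

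I then argue that the outcome $o'$ selected under the truthful bid must satisfy $v_i(o')\geq u_i^I$, which together with the upper bound pins the utility to exactly $u_i^I$. Since $o'$ is a maximizer, $\tilde b_i(o')+S(o')=M'\geq M\geq b_i^I(o')+S(o')$, hence $\tilde b_i(o')\geq b_i^I(o')\geq 0$. If $v_i(o')\geq u_i^I$ we are done; otherwise $\tilde b_i(o')=0$, which forces $b_i^I(o')=0$ and, comparing the totals above, $S(o')=M'=M$. In that event $o'$ and $o^I$ are both maximizers of total value $M$ under \emph{both} bids. This is the delicate case, and it is exactly where the fixed tie-breaking hypothesis does the work: because the auctioneer selected $o^I$ over $o'$ under the original bid, the fixed total order must rank $o^I$ ahead of $o'$, so it would again select $o^I$ under the truthful bid, contradicting the selection of $o'\neq o^I$ (and $o'=o^I$ is impossible here since $v_i(o^I)\geq u_i^I$).

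The main obstacle is precisely this tie-breaking subtlety. Away from it the argument is a routine comparison of totals, but without a bid-independent tie-breaking rule the truthful bid could in principle be pushed onto an outcome with $v_i(o')<u_i^I$ and strictly lose utility; ruling this out cleanly requires tracking which outcomes are maximizers under each bid, not merely comparing the maximal totals.
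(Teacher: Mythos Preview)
Your proof is correct and follows essentially the same line as the paper's: both hinge on the identity $\tilde b_i(o^I)=b_i^I(o^I)$ (so the total at $o^I$ is unchanged) and then use the fixed tie-breaking order to rule out the auctioneer switching to an outcome $o'$ with $v_i(o')<u_i^I$; the paper packages this as ``if the outcome changes, the total at the new outcome must have strictly increased,'' while you phrase it as an upper/lower bound with a contradiction in the tied case. One tiny wording nit: in the tied case you conclude the auctioneer ``would again select $o^I$,'' but all you need (and all that actually follows) is that it cannot select $o'$, since $o^I$ is a co-maximizer ranked ahead of $o'$.
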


Significantly, this implies bidder $i$ always has a quasi-truthful best-response.

\begin{proof}
Since ties are broken according to a fixed total ordering, the outcome is fully specified by the total bids for each outcome (i.e. by $\sum_{i\in[n]}b_i(\outcome)$ for all $\outcome$). Thus, given $\sum_{j\in[n]\setminus\{i\}}b_j(o)$ and a bid $b_i^I=(x_i^I,\utiltarget_i^I)$ for $i$, the outcome $\outcome^I$ is uniquely defined. Let $\utiltarget_i^I$ be the utility $i$ gets by bidding $(x_i^I,\utiltarget_i^I)$, i.e.
\[u_i^I=v_i(o^I)-b_i(o^I)=v_i(o^I)-\max(x_i^I(o^I)-\utiltarget_i^I,0)\enspace.\]
Now suppose $i$ bids $b_i^Q=(v_i,u_i^I)$ instead of $(x_i^I,\utiltarget_i^I)$. There are two possible results of this change:
\begin{itemize}
\item {\em The outcome doesn't change.} If the outcome doesn't change, then $i$ gets the same utility by construction.
\item {\em The outcome changes to $o^Q\neq o^I$.} Notice that $i$ did not change the amount bid for outcome $o^I$, so the total bid for $o^I$ did not change. Given this and the tie-breaking rule, the only way the outcome can switch from $o^I$ to $o^Q$ is if the total bid for $o^Q$ strictly increased. Given that $\sum_{j\in[n]\setminus\{i\}}b_j(\outcome^Q)$ is fixed, this implies $i$'s bid for $\outcome^Q$ increased, i.e. $b_i^Q(o^Q)>b_i^I(o^Q)\geq0$.

Next, by definition of a utility-target auction, $b_i(\outcome)>0$ implies $x_i(\outcome)>\utiltarget_i(o)$. Since $b_i^Q(o^Q)\geq 0$, this implies $v_i(o^Q)>\utiltarget_i^Q$, from which it immediately follows that $i$'s final utility in $o^Q$ will be $\utiltarget_i^Q$.
\end{itemize}
In either case, $i$'s final utility is precisely $u_i^I$, so $i$ is indifferent between bidding $(x_i^I,\utiltarget_i^I)$ and $(v_i,u_i^I)$.
\end{proof}


\section{Static Equilibrium Analysis}\label{sec:eq}

We begin by studying the utility-target auction from a static perspective and show that they offer strong revenue and welfare guarantees. First, we show that pure-strategy equilibria always exist:
\begin{theorem}\label{thm:cef-exist}
A utility-target auction with $n$ outcomes always has a pure-strategy cooperatively envy-free (defined below) equilibrium that is computable in time $poly(n)$.
\end{theorem}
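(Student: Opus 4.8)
The plan is to exploit the quasi-truthfulness lemma to collapse each bidder's strategy to a single number and then identify cooperatively envy-free equilibria with a well-structured, polynomially-describable region. First I would invoke Lemma~\ref{lem:quasitruthful} to restrict attention to quasi-truthful profiles, so that a bidder's strategy is a single utility-target $u_i \ge 0$, her effective bid for outcome $o$ is $\max(v_i(o)-u_i,0)$, and her realized utility from any winning $o$ is $\min(v_i(o),u_i)$. Under such a profile the total bid for $o$ is $\sum_{i\in\bidders}\max(v_i(o)-u_i,0)$, so the auction selects a welfare-maximizing outcome whenever targets are feasible; I would fix $\optoutcome\in\argmax_o\sum_i v_i(o)$ and write $W=\sum_i v_i(\optoutcome)$.

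Next I would translate ``no coalition wants to force a different outcome'' into constraints on the targets. A coalition preferring outcome $o$ can redirect the auction only if it covers the resulting total bid, and the tightest such coalition for a fixed $o$ is exactly $\{i : v_i(o)>u_i\}$, which the clipping in the effective bid selects automatically. Writing the revenue as $r=W-\sum_i u_i$ and using the identity $\max(v_i(o)-u_i,0)+u_i=\max(v_i(o),u_i)$, the a-priori exponential family of coalitional no-deviation conditions collapses to one inequality per outcome:
\[
\sum_{i\in\bidders}\max\bigl(v_i(o),u_i\bigr)\;\le\;W\qquad\text{for every }o\in\outcomes .
\]
I would take this polytope (with $u\ge 0$) as the set of cooperatively envy-free target vectors. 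It is immediately nonempty, since $u=0$ gives $\sum_i\max(v_i(o),0)=\sum_i v_i(o)\le W$; and the constraint for $o=\optoutcome$ forces $\max(v_i(\optoutcome),u_i)=v_i(\optoutcome)$, i.e. $u_i\le v_i(\optoutcome)$, so individual rationality at the chosen outcome is automatic, as is $r\ge 0$.

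The core of the argument is then to show that the equilibria are precisely the bidder-optimal (Pareto-undominated) points of this polytope, and to compute a canonical one. In one direction, if a target vector is Pareto-dominated I would let an affected bidder raise her target; because every outcome constraint then carries slack, $\optoutcome$ continues to win and she strictly gains, so she was not best-responding. Conversely, at a Pareto-optimal point, increasing some $u_i$ must make an outcome $o$ with $v_i(o)\le u_i$ overtake $\optoutcome$ --- the total bid for such an $o$ is insensitive to $i$'s target while the bid for $\optoutcome$ falls. I would then argue that after any such deviation no outcome in which the deviator realizes more than $u_i$ can win, since any outcome giving her more than $u_i$ would have carried a strictly larger total bid than $\optoutcome$ already in the original profile, contradicting that $\optoutcome$ won. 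This reconciliation of the auction's outcome re-selection with the polytope's geometry is the main obstacle, the delicate case being a deviation that reroutes to an outcome where the deviator has high value.

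Finally, for existence and complexity I would select the egalitarian point of the polytope, namely the lexicographically max-min target vector, which is Pareto-optimal and hence an equilibrium by the characterization above. Since the region is convex and described by exactly one constraint per outcome, I would compute it by water-filling: raise a common target level for all still-free bidders until some constraint $\sum_i\max(v_i(o),u_i)\le W$ becomes tight, freeze the bidders pinned by it, and recurse on the rest. Each phase locates a breakpoint among the $|\outcomes|$ constraints and freezes at least one bidder, so the procedure terminates after polynomially many phases and runs in time polynomial in the number of outcomes and bidders, establishing the $poly(n)$ bound.
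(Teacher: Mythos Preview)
Your approach is the paper's own: the water-filling procedure you describe is exactly Algorithm~\ref{alg:s-egal}, and since the paper explicitly omits the proof of Theorem~\ref{thm:cef-exist}, your polytope reformulation $\sum_i\max(v_i(o),u_i)\le W$ together with the Pareto-optimality characterization of equilibria supplies precisely the details the paper leaves out. The step you flag as ``the main obstacle'' is correct once the comparison is routed through the tight blocking outcome $o$ you already identified --- its total bid is unchanged by the deviation and equals the pre-deviation bid for $\optoutcome$, so any winning $o'$ with $v_i(o')>u_i$ (whose own bid strictly dropped) would indeed have had to strictly out-bid $\optoutcome$ originally.
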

Specifically, the egalitarian equilibrium exists and is efficiently computable by Algorithm~\ref{alg:s-egal} (proof omitted).

Next, we show that such cooperatively envy-free equilibria not only maximize welfare but offer as much revenue as the VCG mechanism as well as a new revenue benchmark we call the second-price threat (defined below):
\begin{theorem}\label{thm:cef-props}
Any cooperatively envy-free equilibrium of a utility-target auction
\begin{enumerate}
\item maximizes social welfare,
\item dominates the revenue of the VCG mechanism,
\item and has revenue lower-bounded by the second-price threat.
\end{enumerate}
\end{theorem}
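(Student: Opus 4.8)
The plan is to first normalize the equilibrium using quasi-truthfulness and then attack the three claims by applying the cooperative envy-free condition to successively more refined coalitions. By Lemma~\ref{lem:quasitruthful} I may assume every bidder value-bids truthfully, $x_i=\val{i}$, so that her effective bid is $b_i(\outcome)=\max(\val{i}(\outcome)-\utiltarget_i,0)$ and her realized utility is $u_i(\outcome)=\min(\val{i}(\outcome),\utiltarget_i)$. In either case $\val{i}(\outcome)\gtrless\utiltarget_i$ one checks $b_i(\outcome)+u_i(\outcome)=\val{i}(\outcome)$, which gives the conservation identity $R+\sum_i u_i(\optoutcome)=\sum_i \val{i}(\optoutcome)=W(\optoutcome)$, where $R=\sum_i b_i(\optoutcome)$ is the auctioneer's revenue and $W(\outcome)$ denotes social welfare. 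I will read cooperative envy-freeness as a no-blocking condition: no coalition $S$ together with the auctioneer can select an alternative outcome and divide its proceeds so that every member is strictly better off than at $\optoutcome$; equivalently, the coalition's current total payoff is at least the welfare $\max_{\outcome}\sum_{i\in S}\val{i}(\outcome)$ it could secure on its own.

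For claim (1) I would apply no-blocking to the grand coalition $S=\bidders$. If some $\outcome$ had $W(\outcome)>W(\optoutcome)$, then moving there gives the players and auctioneer strictly more total value to divide, so a redistribution makes everyone strictly better off, contradicting cooperative envy-freeness; hence $\optoutcome$ maximizes welfare. For claim (2), fix $i$ and apply no-blocking to $S=\bidders\setminus\{i\}$. Their current total payoff is $R+\sum_{j\neq i}u_j(\optoutcome)=W(\optoutcome)-u_i(\optoutcome)$ by the conservation identity, while on their own they secure $W_{-i}:=\max_\outcome\sum_{j\neq i}\val{j}(\outcome)$. No-blocking yields $u_i(\optoutcome)\le W(\optoutcome)-W_{-i}$, which by efficiency (claim (1)) is exactly $i$'s VCG utility. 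Summing over $i$ converts $\sum_i u_i(\optoutcome)\le\sum_i u_i^{VCG}$, via the conservation identity and the fact that both mechanisms are efficient, into $R\ge R^{VCG}$.

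Claim (3) is where I expect the real work to lie, and it is the main obstacle: a single canonical coalition no longer suffices. The second-price threat is a maximum over alternative outcomes of the value the bidders favoring that outcome could bring to bear, and since it may exceed the VCG benchmark it does not follow from claim (2). The approach is, for each alternative outcome $\outcome$, to apply the envy-free condition to the set of bidders who strictly gain by forcing $\outcome$, to bound the residual price this coalition collectively faces in terms of $R$ and the others' fixed effective bids, and then to take the worst case over $\outcome$ to recover the stated benchmark.

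The delicate points are bookkeeping the per-outcome prices each deviating coalition faces against the others' fixed bids $\sum_{j\notin S}b_j(\outcome)$, and verifying that the coalition realizing the second-price threat is precisely the one the benchmark singles out, so that the bound is tight in the single-item special case, where it must collapse to revenue at least the second-highest value. If the definition below only guarantees no-blocking for the grand coalition and for the complements of singletons, claims (1) and (2) go through unchanged, but I would need the condition for the general gaining-coalition to close claim (3).
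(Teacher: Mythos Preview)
Your core-based route is right in spirit, but there are two gaps.

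First, the quasi-truthfulness normalization is both unnecessary and not licensed here. The identity $b_i(\outcome)+u_i(\outcome)=\val{i}(\outcome)$ is just the definition $u_i(\outcome)=\val{i}(\outcome)-b_i(\outcome)$ in any pay-your-bid auction and needs no assumption on $x_i$. Meanwhile Lemma~\ref{lem:quasitruthful} is a single-bidder deviation statement: it does not say that simultaneously replacing every bid by a quasi-truthful one preserves the outcome, the CEF property, or revenue, so you cannot invoke it to normalize ``any CEF equilibrium.'' Drop the normalization; you never actually use it.

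Second, and this is the substantive gap, you \emph{assume} rather than \emph{prove} that the paper's CEF condition delivers your no-blocking inequalities. The paper defines CEF as the per-outcome inequality
\[
\sum_{i\in\bidders}\max\!\big((\val{i}(\outcome)-b_i(\outcome))-(\val{i}(\optoutcome)-b_i(\optoutcome)),\,0\big)\ \le\ \sum_{i\in\bidders}b_i(\optoutcome)-b_i(\outcome),
\]
not as a core condition ranging over all coalitions $S$. Your programme does go through once you supply the bridge: add $\sum_i b_i(\outcome)$ to both sides to get $\sum_i\max\!\big(\val{i}(\outcome)-u_i(\optoutcome),\,b_i(\outcome)\big)\le R$, and then for any $S$ lower-bound the left side by $\sum_{i\in S}(\val{i}(\outcome)-u_i(\optoutcome))$ using $b_i(\outcome)\ge0$ for $i\notin S$. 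That one line yields $R+\sum_{i\in S}u_i(\optoutcome)\ge\sum_{i\in S}\val{i}(\outcome)$, which is exactly what you need for all three claims.

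The paper instead manipulates the CEF inequality directly for each claim: drop the $\max$ for welfare-maximality; specialize to the outcome $o^i$ maximizing $\sum_{j\ne i}\val{j}$ and isolate bidder $i$'s terms to get per-bidder VCG dominance $b_i(\optoutcome)\ge\sum_{j\ne i}(\val{j}(o^i)-\val{j}(\optoutcome))$ (so your summing step is superfluous --- you already have $u_i(\optoutcome)\le u_i^{\mathrm{VCG}}$ bidder by bidder); and for the second-price threat, the same ``add $\sum_i b_i(\outcome)$ to both sides'' step together with $b_i(\outcome),b_i(\optoutcome)\ge0$ immediately gives $\sum_i\max(\val{i}(\outcome)-\val{i}(\optoutcome),0)\le R$. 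In particular your worry that claim~(3) is where the real work lies is misplaced: the gaining coalition you anticipate is precisely the one the CEF definition already singles out, and the bound falls out in one line with no delicate bookkeeping.
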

We formalize and prove the theorem below.

\subsection{Cooperatively Envy-Free Equilibria}

While a typical utility-target auction may have many equilibria, some of them are unrealistic in repeated auctions. In particular, it is possible to have an equilibrium in which a group of ``losers'' envy the ``winners'' --- the losers would be happy to collectively raise their bids to make an alternate outcome win, but the outcome is an equilibrium because no single bidder is willing to raise her bid high enough. In a repeated setting, one would expect all the losers to eventually raise their bids.

For example, consider a setting with three bidders $(A, B, C)$ and three outcomes $(1, 2, 3)$, in which the first two bidders are symmetric and value the first two outcomes and the third bidder values only the third outcome. Let the specific values, indexed by outcome, be 
\begin{align*}
v_A &= (1, 1.5, 0), \\
v_B &= (1, 1.5, 0),\\
v_D &= (0, 0, 2) .
\end{align*}

Now, let A and B bid for the first outcome, and C bid for the third outcome, with bids: $b_A = (1, 0, 0)$, $b_B =(1, 0, 0)$ and $b_C =(0, 0, 2)$.

Bidders $A$ and $B$ would prefer the second outcome to the first as they see a value of 1.5 instead of 1. Moreover, they would be happy to make the second outcome win by cooperating and each bidding $1+\epsilon$. However, since both are bidding $0$ for the second outcome, neither can unilaterally cause the second outcome to win, making this outcome an equilibrium. The problem in this example is that, at a total price of $2$, bidders $A$ and $B$ would prefer that the second outcome wins. In a sense, bidders $A$ and $B$ in the second outcome envy the deal they received in the first outcome. 

Hence, we are interested in bids such that players have no incentive to cooperatively deviate to get a better outcome. We will call such a set of bids \emph{cooperatively-envy free}. 

To define such a notion, we must also consider bidders who are happy with the winning outcome. Consider a four bidder setting with three possible outcomes, with the following values:
\begin{align*}
v_A &= (1, 1.5, 0),\\
v_B &= (1, 1.5, 0),\\
v_C &= (1, 0.5, 0),\\
v_D &= (0, 0, 2).
\end{align*}

In this case, $C$ cannot get a better deal from the second outcome, so she will not cooperate with $A$ and $B$. In order to win, $A$ and $B$ must collectively bid $1.75$ in the second outcome to make up the deficit between it and the winning outcome (which they are willing to do).
\begin{definition}
The set of bids $\{b_i\}_{i\in \bidders}$ are {\em cooperatively envy-free (CEF)} if there is no subset of bidders $\altbidderset\subseteq \bidders$ who would prefer to cooperatively pay the extra money required to make an alternate outcome $o$ win over the current winner $o^*$.

Formally, a set of bids is cooperatively envy-free if
\[\sum_{i\in\bidders}\max\left((v_i(o)-b_i(o))-(v_i(o^*)-b_i(o^*)),0\right) \leq \sum_{i\in\bidders} b_i(o^*)-b_i(o)\]
for all outcomes $o$.
\end{definition}
The CEF constraints are similar to the core property described by Milgrom~\cite{M04} in package auctions (as well as notions like group-strategyproofness); however, the notion of a CEF outcome is weaker. For example, it does not require that bidders are playing equilibrium strategies.

Equilibria that are CEF have nice properties analogous to those of core equilibria in package auctions. The following claims are straightforward and proven in the appendix of the full version of the paper:
\begin{claim}\label{clm:welfare-max}
CEF bids maximize welfare.
\end{claim}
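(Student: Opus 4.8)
The plan is to prove directly that the winning outcome $o^*$ maximizes the social welfare $\sum_{i\in\bidders} v_i(o)$. The engine of the argument is the identity $v_i(o)=u_i(o)+b_i(o)$, which holds for every bidder and every outcome by the definition of utility $u_i(o)=v_i(o)-b_i(o)$. Summing over bidders, the welfare of any outcome decomposes into total utility plus total revenue,
\[\sum_{i\in\bidders} v_i(o)=\sum_{i\in\bidders} u_i(o)+\sum_{i\in\bidders} b_i(o)\enspace.\]
I would then fix an arbitrary outcome $o$ and compare its welfare against that of the winner $o^*$; using the decomposition, the welfare gap is
\[\sum_{i\in\bidders} v_i(o)-\sum_{i\in\bidders} v_i(o^*)=\sum_{i\in\bidders}\bigl(u_i(o)-u_i(o^*)\bigr)-\Bigl(\sum_{i\in\bidders} b_i(o^*)-b_i(o)\Bigr)\enspace.\]

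The central step is to observe that the aggregate utility gain from switching to $o$ is controlled by the CEF constraint. Since $t\le\max(t,0)$ for every real $t$, we have $u_i(o)-u_i(o^*)\le\max\bigl(u_i(o)-u_i(o^*),0\bigr)$ for each $i$; summing over bidders and applying the cooperative-envy-free inequality for the outcome $o$ yields
\[\sum_{i\in\bidders}\bigl(u_i(o)-u_i(o^*)\bigr)\le\sum_{i\in\bidders}\max\bigl(u_i(o)-u_i(o^*),0\bigr)\le\sum_{i\in\bidders} b_i(o^*)-b_i(o)\enspace.\]
Substituting this bound into the welfare gap makes the two revenue terms cancel, leaving $\sum_{i\in\bidders} v_i(o)-\sum_{i\in\bidders} v_i(o^*)\le 0$. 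Since $o$ was arbitrary, $o^*$ is welfare-maximizing, which is exactly the claim. Note that this argument never uses that $o^*$ is the auctioneer's total-bid maximizer; it only uses that $o^*$ is the outcome against which the CEF inequalities are stated.

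The only point requiring care — and the closest thing to an obstacle — is getting the direction of the relaxation right: replacing the signed sum $\sum_{i\in\bidders}(u_i(o)-u_i(o^*))$ by the one-sided sum $\sum_{i\in\bidders}\max(u_i(o)-u_i(o^*),0)$ can only increase it, which is precisely the direction needed so that the CEF inequality becomes applicable. Conceptually the proof reduces to the slogan \emph{welfare equals total utility plus total revenue}, combined with the reading of CEF as the statement that no coalition's total utility gain from a deviation can exceed the revenue shortfall it would have to cover. No case analysis or computation beyond these observations is needed.
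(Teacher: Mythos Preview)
Your proof is correct and is essentially the same argument as the paper's: both drop the $\max(\cdot,0)$ in the CEF constraint via $t\le\max(t,0)$, then cancel the $b_i$ terms to obtain $\sum_i v_i(o)\le\sum_i v_i(o^*)$. Your framing in terms of the decomposition $v_i=u_i+b_i$ adds a little narrative but does not change the underlying steps.
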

\begin{claim}\label{clm:vcg-dominance}
The revenue from CEF bids dominate that of the VCG mechanism: every player pays at least as much in the CEF equilibrium as she would in the VCG mechanism.
\end{claim}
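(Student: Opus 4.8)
The plan is to show, for each bidder $i$, that her payment $p_i = b_i(o^*)$ in the CEF equilibrium is at least her VCG payment, by extracting everything from a single well-chosen instance of the CEF inequality. First I would fix the bidder $i$ and let $o_{-i}$ denote a welfare-maximizing outcome for the remaining players, i.e. $o_{-i} \in \argmax_o \sum_{j \neq i} v_j(o)$. By Claim~\ref{clm:welfare-max} the auction's outcome $o^*$ is welfare-maximizing over all players, so $o^*$ is a valid VCG outcome and bidder $i$'s VCG payment is her externality $p_i^{VCG} = \sum_{j \neq i} v_j(o_{-i}) - \sum_{j \neq i} v_j(o^*)$. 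The entire goal then reduces to proving $b_i(o^*) \geq \sum_{j \neq i} v_j(o_{-i}) - \sum_{j \neq i} v_j(o^*)$.

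The heart of the argument is to apply the CEF inequality at the single alternate outcome $o = o_{-i}$ and then relax the left-hand side in the one way that isolates $i$. Concretely, I would lower-bound the sum of truncated gains by (a) discarding bidder $i$'s (nonnegative) term entirely and (b) replacing each surviving $\max(\cdot,0)$ by its argument, giving $\sum_{j\neq i}\big[(v_j(o_{-i})-b_j(o_{-i}))-(v_j(o^*)-b_j(o^*))\big] \le \sum_{k}\big(b_k(o^*)-b_k(o_{-i})\big)$. The key observation is that after regrouping value and bid terms, the quantity $\sum_{j\neq i}\big(b_j(o^*)-b_j(o_{-i})\big)$ appears identically on both sides and cancels, leaving exactly $\sum_{j\neq i}\big(v_j(o_{-i})-v_j(o^*)\big) \le b_i(o^*)-b_i(o_{-i})$. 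Recognizing the left-hand side as precisely $p_i^{VCG}$ and using $b_i(o_{-i}) \ge 0$ (effective bids are nonnegative) then yields $p_i^{VCG} \le b_i(o^*) - b_i(o_{-i}) \le b_i(o^*) = p_i$, which is the desired per-bidder domination.

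The main obstacle --- and the single insight the proof turns on --- is choosing the right alternate outcome together with the right relaxation: applying CEF at the group-optimal outcome $o_{-i}$ and dropping only player $i$'s truncated term is exactly what makes the coalition's bid differences cancel, reducing a statement about a coalition of deviators to the single-player externality that defines the VCG price. Relaxing every term by $\max(\cdot,0)\ge(\cdot)$, including $i$'s, would instead collapse the inequality to mere welfare maximization and discard all the useful content, so the asymmetric treatment of bidder $i$ is essential. Once this choice is made, the remaining manipulation is routine bookkeeping, and the nonnegativity of effective bids supplies the final slack.
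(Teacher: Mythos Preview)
Your proposal is correct and follows essentially the same route as the paper's own proof: apply the CEF inequality at the outcome $o_{-i}$ that maximizes the welfare of bidders other than $i$, drop bidder $i$'s nonnegative $\max$-term, replace the remaining $\max(\cdot,0)$ terms by their arguments, cancel the common bid differences for $j\neq i$, and finish with $b_i(o_{-i})\ge 0$. The paper merely orders the relaxations slightly differently (it retains bidder $i$'s $\max$-term until the last line and drops it together with $b_i(o^i)$), but the underlying computation is identical.
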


\subsection{The ``Second-Price Threat''}

The revenue of a CEF equilibrium also meets or exceeds a benchmark we call the second-price threat. The revenue of the second-price auction has a convenient intuition: the price paid by the winner should be at least as large as the maximum willingness to pay of any other bidder. We can ask the same question in more general settings: how much would ``losers'' be willing to pay to get an outcome $o$ instead of the socially optimal outcome $o^*$? In general, player $i$ should be willing to pay up to $v_i(o)-v_i(o^*)$ to help $o$ beat $o^*$, hence we can generalize the intuition of the second-price auction to give a natural lower bound on the revenue the auctioneer might hope to earn:
\begin{definition}
The {\em second-price threat} for outcome $\optoutcome$ is given by
\[\max_{o\in \outcomes}\sum_{i\in\bidders}\max(v_i(o)-v_i(o^*),0)\enspace.\]
\end{definition}
This bound is particularly powerful in cases where bidders share value for an outcome (cases where VCG would make little or no revenue). For example, consider the following 4-bidder, 2-outcome setting:
\begin{align*}
v_A &= (1, 0)\\
v_B &= (1, 0)\\
v_C &= (1, 0)\\
v_D &= (0, 2)
\end{align*}
In a VCG auction, nobody pays anything. However, a na\"{i}ve auctioneer might expect the first outcome to win, with $A$, $B$, and $C$ paying a total of \$2 (the second-price threat) since they are beating $D$.

The CEF constraints quickly imply that a CEF outcome generates at least as much revenue as the second-price threat:
\begin{claim}\label{clm:second-price-bound}
The revenue in any CEF outcome is lower-bounded by the second-price threat.
\end{claim}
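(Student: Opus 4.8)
The plan is to prove something slightly stronger than the claim: that for \emph{every} outcome $o$, the revenue $\sum_{i\in\bidders}b_i(o^*)$ collected in the CEF outcome $o^*$ already dominates the single-outcome quantity $\sum_{i\in\bidders}\max(v_i(o)-v_i(o^*),0)$. Taking the maximum over $o\in\outcomes$ then immediately yields the second-price threat. First I would note that, by definition of the utility-target auction, each player pays her effective bid, so the revenue in the winning outcome $o^*$ is exactly $\sum_{i\in\bidders}b_i(o^*)$. Fixing an arbitrary alternative outcome $o$, I would then start from the CEF constraint instantiated at $o$, namely $\sum_{i}\max\bigl((v_i(o)-b_i(o))-(v_i(o^*)-b_i(o^*)),0\bigr)\le\sum_{i}\bigl(b_i(o^*)-b_i(o)\bigr)$. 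Note that the left-hand side is phrased in terms of \emph{utilities}, whereas the second-price threat is phrased in terms of \emph{values}; bridging these two viewpoints is the crux of the argument.

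The key step is a term-by-term inequality: for each bidder $i$,
\[\max\bigl((v_i(o)-b_i(o))-(v_i(o^*)-b_i(o^*)),0\bigr)+b_i(o)\ge\max(v_i(o)-v_i(o^*),0)\enspace.\]
To establish it I would write $d=v_i(o)-v_i(o^*)$ and apply the elementary identity $\max(a,0)+c=\max(a+c,c)$ with $c=b_i(o)$. The left-hand side then collapses to $\max\bigl(d+b_i(o^*),\,b_i(o)\bigr)$. Because the effective bids $b_i(o)$ and $b_i(o^*)$ are nonnegative by construction, this quantity is at least $d+b_i(o^*)\ge d$ and also at least $b_i(o)\ge 0$, hence at least $\max(d,0)$, which is precisely bidder $i$'s contribution to the second-price threat.

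Summing this inequality over all bidders and feeding it into the CEF constraint finishes the argument: $\sum_{i}\max(v_i(o)-v_i(o^*),0)\le\sum_{i}\max(\cdot,0)+\sum_{i}b_i(o)\le\sum_{i}(b_i(o^*)-b_i(o))+\sum_{i}b_i(o)=\sum_{i}b_i(o^*)$, the revenue. Since $o$ was arbitrary, maximizing the left-hand side over $\outcomes$ gives revenue $\ge$ second-price threat.

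I expect the only genuine obstacle to be reconciling the utility-denominated CEF inequality with the value-denominated benchmark. Once the identity $\max(a,0)+c=\max(a+c,c)$ is in place, the nonnegativity of the effective bids does all of the remaining work and no case analysis is required; the rest is routine substitution. The one point worth double-checking is that the CEF constraint is applied at the \emph{same} winning outcome $o^*$ whose revenue we are bounding, and that $o^*$ is indeed the maximizer selected by the auctioneer, so that the right-hand side $\sum_i(b_i(o^*)-b_i(o))$ is well-defined and nonnegative.
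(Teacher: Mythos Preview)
Your proposal is correct and follows essentially the same route as the paper: both arguments add $b_i(o)$ inside the $\max$ via the identity $\max(a,0)+c=\max(a+c,c)$ to rewrite the CEF left-hand side as $\sum_i\max\bigl(v_i(o)-v_i(o^*)+b_i(o^*),\,b_i(o)\bigr)$, and then use nonnegativity of the effective bids to drop to $\sum_i\max(v_i(o)-v_i(o^*),0)$. The only cosmetic difference is that the paper manipulates the summed inequality directly, whereas you first isolate the per-bidder inequality and then sum.
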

Proof is given in the appendix of the full version of the paper.

\begin{algorithm}[tb]\label{alg:s-egal}
\SetKwInOut{Input}{input}\SetKwInOut{Output}{output}
\SetKwIF{WP}{ElseWP}{Otherwise}{with probability}{}{with probability}{otherwise}{end}
\SetKw{KwSet}{Set}
\SetKwRepeat{DoUntil}{do}{until}
\Input{A utility-target auction problem.}
\Output{The egalitarian equilibrium bids $b_i^*=(v_i,\utiltarget_i^*)$.}
\DontPrintSemicolon
\BlankLine
\nl Set all bids to $(v_i,0)$. Call the socially optimal outcome $o^*$.\;
\nl Increase $\utiltarget_i$ for all bidders uniformly until some bidder $i$ reaches $\utiltarget_i=v_i(o^*)$ or a CEF constraint would be violated for some outcome $o$.\;
\nl Fix the bids of the newly-constrained advertisers.\;
\nl Repeat (2) and (3), lowering only unfixed bids until all bidders are fixed.\;
\caption{An algorithm for computing the egalitarian equilibrium in a utility-target auction.}
\end{algorithm}

\section{Utility-Target Auctions for Sponsored Search}
\label{sec:adauction}

Sponsored search advertising demonstrates the benefits of a utility-target auction. The standard auction in this setting is the generalized second-price (GSP) auction; however, it (and incentive-compatible VCG mechanisms) lack transparency: payments are complicated to compute and bidders must trust the auctioneer not to abuse their knowledge when an auction is repeated. Moreover, its performance may degrade when using more accurate models of user behavior~\cite{RT12} and advertiser value~\cite{HJW}. It can have misaligned incentives when parameters are estimated incorrectly~\cite{WS12}. Some of these problems would be solved by a first-price auction; however, Overture's implementation of GFP demonstrated that such schemes might be highly unstable. A utility-target auction offers the benefits of a pay-your-bid auction without the instability of GFP.

\subsection{The Utility-Target Ad Auction} We illustrate a utility-target auction in the standard model of sponsored search: $n$ advertisers compete for $m\leq n$ slots associated with a fixed keyword. An advertiser's value depends on the likelihood of a click, called the click-through-rate (CTR) $c$, and the value $v$ to the advertiser of a user who clicks. The CTR $c$ is separable into a parameters $\beta_i$ that depends on the advertiser and $\alpha_j$ that depends on the slot, so the expected value to advertiser $i$ for having her ad shown in slot $j$ is $c_{i,j}v_i=\alpha_j\beta_iv_i$. As is standard, we assume that slots are naturally ordered from best ($j=1$) to worst ($j=m$), i.e. $\alpha_{j}\geq \alpha_{j'}$ for all $j<j'$. Without loss of generality, we assume bidders are ordered in decreasing order of bid, i.e. $b_1\geq b_2\geq\dots\geq b_n$.

The auctioneer chooses a matching of advertisements to slots and charges an advertiser a per-click price $ppc_i$. For example, in the GFP auction, advertisers submitted bids $b_i$ representing their per-click payment and paid $ppc_i=b_i$ whenever a their ads were clicked. Similarly, in the standard GSP auction, bidder $i$ is charged according bid of the next highest bidder.\footnote{The designers of the GSP auction intended it to inherit the incentive compatibility of the second-price auction. It does not; however, it has the nice property that bidder $i$ pays the minimum amount required to win the slot that she received.} To account for differences in CTRs, this quantity is normalized by $\beta$ so that bidder $i$ pays a per-click price of $ppc_i=\frac{\beta_i}{\beta_{i+1}}b_{i+1}$.

In a {\em utility-target auction}, bidders submit both their per-click value $x_i$ and the utility-target bid $\utiltarget_i$ (the utility that they request). The auctioneer picks the assignment $j(i)$ maximizing
\[\sum_{i\in\bidders}\max(0,\alpha_{j(i)}\beta_ix_i-\utiltarget_i)\]
and charges $i$ so that her expected payment is
\[\ex[p_i]=\max(0,\alpha_{j(i)}\beta_ix_i-\utiltarget_i)\enspace.\]

There are at least two interesting ways the utility-target auction can be implemented. The first implementation charges
\[ppc_i=\max\left(0,x_i-\frac{\utiltarget_i}{\alpha_{j(i)}\beta_i}\right)\]
to achieve the desired expected payment. In effect, it uses the utility request $\utiltarget_i$ to compute a different per-click bid for each slot. A practical downside to this implementation is that the payments are still somewhat complicated from the bidders' perspectives; however, the auctioneer could mitigate this problem by publishing CTRs and displaying the per-click payments in the bidding interface.

An alternative implementation of the utility-target auction pays a rebate of $\utiltarget_i$ regardless of whether a click occurred and charges precisely $ppc_i=x_i$ when a click occurs. This auction is even simpler from the bidders' perspective; however, when a click does not occur the auctioneer will be paying the bidder (in expectation the bidder still pays the auctioneer). This implementation of the utility-target auction is illustrated in Algorithm~\ref{alg:aa}.

Such a utility-target auction offers many benefits over existing auction designs like GSP and VCG. As noted earlier, a first-price auction directly increases transparency and simplicity from the bidders' perspective. Even if bidders reveal their true valuation functions $v_i$, the pay-your-bid property ensures that increasing a reserve price will not increase payments unless bidders subsequently raise their bids.

The auction also easily generalizes to more complicated bidding languages. Whereas the welfare and revenue performance of GSP degrades (albeit gracefully)~\cite{RT12} when considering externalities imposed by the presence of competing ads, the reasonable (CEF) equilibria of the utility-target auction guarantee good performance. An open question of~\cite{HJW} is to find an auction that performs well in multi-slot settings when multiple bidders can benefit from clicks on the same ad (e.g. Microsoft and Samsung both benefit from an ad for a Samsung laptop running Windows) --- the utility-target auction offers good revenue guarantees in these `coopetitive' ad auctions with multiple slots. The utility-target auction is also less sensitive to estimation errors in the CTRs. As shown in~\cite{WS12}, incentive-compatibility can be broken because the auctioneer only knows estimates of the $\alpha$ and $\beta$ parameters. Informally, the utility-target auction is much less sensitive to such errors because the payments need not explicitly depend on the auctioneer's estimates.

\subsection{Utility-Target vs. GFP} Juxtaposing GFP with the utility-target auction illustrates the benefits of a more complex bidding language. GFP is identical to the utility-target ad auction except that bids contain only the per-click payment $x_i$ and not the utility-target bid $\utiltarget_i$. Consequently, a player's bid necessarily offers the same per-click payment regardless of the slot won by the bidder. By comparison, the utility-target auction permits bids that encode a different per-click payment depending on the slot in which an ad is shown.

In retrospect, it is easy to see that different per-click bids are important for a good pure-strategy equilibrium. In GFP, all advertisers who are shown must bid so that $\beta_ix_i$ is the same, otherwise some bidder can lower her value of $x_i$ without changing her assignment; however, if this is true, then some bidder can move up to the top slot by bidding $x_i+\epsilon$. In fact, any bidding language that requires the same per-click payment for all slots could not have a pure-strategy equilibrium unless the potential benefit of being in the top slot was less than the effective bid increment required to get there. This necessarily weakens any revenue guarantees and, worse, implies that the auction cannot differentiate between the winning bidders to pick the best ordering of ads.

As noted earlier, the existence of a pure-strategy equilibrium is directly related to the dynamic performance of the auction. Edelman and Ostrovsky~\cite{EO07} discuss how the lack of such an equilibrium naturally leads to sawtooth cycling behavior in GFP, as bidders alternate between increasing their bids to compete for higher slots and decreasing their bids to avoid overpaying for the slots they have. They also show that this cyclic behavior potentially reduced revenue below that of the VCG mechanism. In contrast, Theorem~\ref{thm:cef-exist} shows that utility-target auctions have pure-strategy equilibria, and Theorem~\ref{thm:cef-props} shows that revenue at equilibrium dominates the VCG mechanism; moreover, our dynamic results show that bids will naturally approach this equilibrium (or the set of such equilibria) as bidders adjust their utility targets.

\begin{algorithm}[tb]\label{alg:aa}
\SetKwInOut{Input}{input}\SetKwInOut{Output}{output}
\SetKwIF{WP}{ElseWP}{Otherwise}{with probability}{}{with probability}{otherwise}{end}
\SetKw{KwSet}{Set}
\SetKwRepeat{DoUntil}{do}{until}
\Input{Bids $b_i=(x_i,\utiltarget_i)$.}
\Output{An assignment of advertisements to slots, per-click payments $ppc_i$, and unconditional payments $r_i$.}
\BlankLine
\nl For each bidder $i$ and slot $j$, compute $\ex[p_{i,j}]=\max(\alpha_j\beta_ix_i-\utiltarget_i,0)$\;
\nl Compute assignment $j(i)$ of advertisements to slots that maximizes $\sum_{i\in\bidders}\ex[p_{i,j(i)}]$\;
\nl \eIf{$\ex[p_{i,j(i)}]>0$}{
Always pay $i$ the rebate $r_i=\utiltarget_i$\;
Whenever $i$'s ad is clicked, charge $ppc_i=x_i$\;}
{Do not charge/pay anything to $i$\;}
\caption{Autility-target auction for search advertising.}
\end{algorithm}


\section{Dynamic Analysis}
In this section, we consider the behavior of utility-target auctions in a very simple dynamic setting, and under very simple assumptions. We show that a few rules and simple knowledge of whether one is winning or losing are enough to guarantee the revenue and welfare bounds from Theorem~\ref{thm:cef-props}.

Following Lemma~\ref{lem:quasitruthful}, we assume that bidders are quasi-truthful and report bids of the form $(v_i,\utiltarget_i)$. Bidders compete using the utility-target terms $\sbid{i}$ and employ strategies to optimize their utility.

\paragraph{Winners and Losers\nonstoc{.}} Our dynamic axioms are based on a natural decomposition of bidders into winners and losers. In a standard first-price auction, the winner is the bidder who gets what he wants --- the item --- and the losers are those who do not get what they want. In a utility-target auction, a bidder effectively reports her valuation $v_i$ and requests that the auctioneer give her a certain utility $\sbid{i}$. This suggests partitioning bidders into winners and losers based on whether a bidder gets the utility she requests, giving us the following formal definition:
\begin{definition}[Winners and Losers]
A {\em winner} is a bidder who gets the utility she requests, i.e. if $o^b$ is the outcome of the auction, then $i$ is a winner if and only if
\[u_i(o^b)=v_i(o^b)-b_i(o^b)=\utiltarget_i\enspace.\]
Any bidder who is not a winner is a {\em loser}.  
\end{definition}
\begin{observation}
Bidder $i$ is a winner if and only if $v_i(o^b)\geq \utiltarget_i$ and is always a winner when $\utiltarget_i=0$. When bidder $i$ is a loser, $u_i(o^b)<\utiltarget_i$.
\end{observation}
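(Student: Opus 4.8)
The plan is to reduce everything to a single direct computation of $u_i(o^b)$ under the quasi-truthful assumption $x_i = v_i$, and then split into two cases according to the sign of the quantity inside the $\max$ that defines the effective bid. Recall that under quasi-truthfulness the effective bid is $b_i(o^b) = \max(v_i(o^b) - \utiltarget_i, 0)$, so the realized utility is
\[u_i(o^b) = v_i(o^b) - b_i(o^b) = v_i(o^b) - \max(v_i(o^b) - \utiltarget_i, 0).\]
The entire Observation follows by evaluating this expression in the two cases $v_i(o^b) \geq \utiltarget_i$ and $v_i(o^b) < \utiltarget_i$.

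First I would handle the case $v_i(o^b) \geq \utiltarget_i$. Here the argument of the $\max$ is nonnegative, so $\max(v_i(o^b) - \utiltarget_i, 0) = v_i(o^b) - \utiltarget_i$, the two $v_i(o^b)$ terms cancel, and we are left with $u_i(o^b) = \utiltarget_i$. By the definition of a winner this means $i$ is a winner, establishing the ``if'' direction of the first claim. In the complementary case $v_i(o^b) < \utiltarget_i$ the argument of the $\max$ is strictly negative, so $b_i(o^b) = 0$ and hence $u_i(o^b) = v_i(o^b) < \utiltarget_i$. In particular $u_i(o^b) \neq \utiltarget_i$, so $i$ is a loser; this simultaneously yields the ``only if'' direction of the first claim (by contraposition, a winner must satisfy $v_i(o^b) \geq \utiltarget_i$) and the third claim (a loser has $u_i(o^b) = v_i(o^b) < \utiltarget_i$). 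Finally, the $\utiltarget_i = 0$ case is immediate: since valuations are nonnegative we have $v_i(o^b) \geq 0 = \utiltarget_i$, which by the first claim makes $i$ a winner.

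I expect no genuine obstacle here, as the statement is just an unpacking of the definitions of effective bid, payment, and winner; the only subtlety is recognizing that the $\max$ operator produces exactly the two-case dichotomy above. The one hypothesis worth invoking explicitly is the nonnegativity of valuations, guaranteed by the definition of the utility-target auction (where the value bid maps outcomes to nonnegative values), which is what closes the $\utiltarget_i = 0$ case.
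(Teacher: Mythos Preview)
Your proof is correct and is precisely the straightforward unpacking of the definitions that the paper intends; the paper itself states this as an Observation without proof, treating it as immediate from the definition of winners together with the quasi-truthful assumption made at the start of the dynamic analysis section. Your explicit invocation of nonnegativity of valuations for the $\utiltarget_i=0$ case and of quasi-truthfulness ($x_i=v_i$) for the computation of $b_i(o^b)$ are exactly the right hypotheses to name.
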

Note that this definition does not coincide with the standard definition of winners and losers in a single item auction because a bidder who does not get the item is still a winner if $\utiltarget_i=0$.

\paragraph{Raising and Lowering Bids\nonstoc{.}} In a dynamic setting, we want to think about how winners and losers manipulate $\utiltarget_i$. In the utility-target auction, the effective bid $b_i$ (what bidder $i$ is actually offering to pay) and the utility-target term $\utiltarget_i$ move in opposite directions, so when we talk about raising $i$'s bid we are talking about decreasing the utility-target term $\utiltarget_i$:
\begin{definition}[Raising and Lowering Bids]
We say that bidder $i$ {\em raises her bid} from $(v_i,\utiltarget_i)$ if she chooses a new bid $(v_i,\utiltarget_i')$ where $\utiltarget_i'< \utiltarget_i$, i.e. she raises her bid if she {\em decreases her utility-target bid}.

Similarly, a bidder who {\em lowers her bid} correspondingly {\em increases her utility-target bid} from $\utiltarget_i$ to $\utiltarget_i'>\utiltarget_i$.
\end{definition}
Importantly, our definition of winners and losers shares a natural property with the standard definition: winners cannot benefit by offering to pay more, and losers cannot benefit by offering to pay less:
\begin{claim}\label{clm:winlower}
Fixing other players' bids, a loser cannot increase her utility by raising her bid. Likewise, a winner cannot increase her utility by lowering her bid.
\end{claim}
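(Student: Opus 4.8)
The natural engine for this claim is the identity, valid whenever bidder $i$ is quasi-truthful (value bid $x_i=v_i$), that her utility in any outcome collapses to
\[u_i(o)=v_i(o)-\max(v_i(o)-\utiltarget_i,0)=\min\bigl(v_i(o),\utiltarget_i\bigr),\]
which for each fixed outcome is weakly increasing in the utility-target $\utiltarget_i$. Since raising a bid lowers $\utiltarget_i$ and lowering a bid raises it, the plan would be to combine this per-outcome monotonicity with the optimality of the auctioneer's chosen outcome to control the realized utility after a one-sided deviation.

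Carrying this out, however, the plan does not go through, and I expect the main obstacle to be fatal: as worded the statement is false, its two cases having been swapped relative to the property asserted one sentence earlier (``winners cannot benefit by offering to pay more, and losers cannot benefit by offering to pay less''). The winner-lowering case fails even without any change of outcome: a winner earns $u_i=\utiltarget_i$, and if she lowers her bid to $\utiltarget_i'>\utiltarget_i$ while her margin is large enough that the same outcome $o^b$ stays optimal, she earns $\min(v_i(o^b),\utiltarget_i')=\utiltarget_i'>\utiltarget_i$ whenever $v_i(o^b)\ge\utiltarget_i'$ --- strictly more. For the loser-raising case, take two outcomes and two quasi-truthful bidders with $v_i=(1,10)$ and $v_k=(7.5,0)$, $\utiltarget_k=0$, $\utiltarget_i=3$: then $b_i=(0,7)$ and $b_k=(7.5,0)$, so the first outcome wins and $i$ is a loser earning $1<3$. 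Raising her bid to $\utiltarget_i'=2$ makes $b_i=(0,8)$, the second outcome wins, and $i$ earns $\min(10,2)=2>1$. Both ``cannot increase'' conclusions are therefore violated.

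What is provable --- and what I take the claim to intend --- is exactly the statement in the text: a winner cannot gain by raising, and a loser cannot gain by lowering. The winner-raising half is immediate from the identity, since after raising to $\utiltarget_i'<\utiltarget_i$ her utility in every outcome is at most $\utiltarget_i'<\utiltarget_i=u_i$, regardless of whether the outcome moves. The loser-lowering half is where the real work lies. Writing $S_{-i}(o)=\sum_{j\ne i}b_j(o)$, lowering weakly decreases all of $i$'s effective bids; I would combine the old optimality $S_{-i}(o^b)\ge S_{-i}(o')+b_i(o')$ with the new optimality $S_{-i}(o')+b_i'(o')\ge S_{-i}(o^b)$ (both using $b_i(o^b)=b_i'(o^b)=0$, since a loser has $v_i(o^b)<\utiltarget_i<\utiltarget_i'$) and split on $v_i(o')$ versus $v_i(o^b)$: when $v_i(o')\le v_i(o^b)$ the bound $u_i'=\min(v_i(o'),\utiltarget_i')\le v_i(o^b)$ is automatic, and when $v_i(o')>v_i(o^b)$ the two inequalities squeeze $b_i'(o')=b_i(o')$ and force a tie between $o^b$ and $o'$ whose totals are unchanged by the deviation, so the fixed tie-break still selects $o^b$, a contradiction. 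The one delicate point is precisely this tie-breaking step; adopting a fixed total order on outcomes (as in the proof of Lemma~\ref{lem:quasitruthful}) rather than the ``most-recent winner'' rule keeps it clean.
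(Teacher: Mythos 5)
Your diagnosis is correct, and there is in fact no argument in the paper to compare yours against: the paper dismisses Claim~\ref{clm:winlower} with ``the claim is straightforward to prove'' and never supplies a proof. As printed, the claim does have its two cases transposed. The sentence introducing it (``winners cannot benefit by offering to pay more, and losers cannot benefit by offering to pay less'') and the sentence that later invokes it (``Following Claim~\ref{clm:winlower}, a winner cannot benefit by raising her bid and a loser cannot benefit by lowering it'') both state the transposed version, which is the true one and the one the dynamics actually use. Your counterexamples to the literal statement check out: in your two-outcome instance the loser starts with $b_i=(0,7)$ against $b_k=(7.5,0)$, earns $1$, and after raising to $\utiltarget_i'=2$ the second outcome wins with $b_i=(0,8)$ and she earns $2$; and a winner with enough margin who lowers her bid keeps the same outcome and collects $\utiltarget_i'>\utiltarget_i$. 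Indeed, if the literal statement were true it would contradict the paper's own motivation for axioms (\Aa) and (\Ac), which exist precisely because losers can gain by raising and winners can gain by lowering.

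Your proof of the corrected statement is sound. The winner-raising half is immediate from $u_i(o)=\min\bigl(v_i(o),\utiltarget_i\bigr)$: every outcome now yields at most $\utiltarget_i'<\utiltarget_i$. For the loser-lowering half, your squeeze $S_{-i}(o^b)\geq S_{-i}(o')+b_i(o')$ and $S_{-i}(o')+b_i'(o')\geq S_{-i}(o^b)$, combined with $b_i'(o')\leq b_i(o')$, forces equality throughout, so $o^b$ and $o'$ were and remain exactly tied and the tie-break still selects $o^b$. One remark on your final caveat: you do not need to replace the paper's tie-breaking rule with a fixed total order here, because $o^b$ \emph{is} the most-recent winning outcome, so the paper's own rule favors it; either rule closes the argument. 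Your case analysis also proves slightly more than stated: since a loser's downward deviation leaves every total bid for $o^b$ unchanged and weakly decreases all others, the winning outcome, and hence her utility, is entirely unchanged, not merely non-improved.
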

The claim is straightforward to prove.

Our definition of winners and losers also has a new property that is important:
\begin{claim}\label{clm:losealwaysraise}
A loser can always raise her bid in a way that weakly increases her utility.
\end{claim}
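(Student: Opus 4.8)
The plan is to first pin down the structure of a loser under the dynamic setting's standing assumption of quasi-truthful bidding, and then exhibit a concrete bid-raise and verify it can never decrease her utility. Since $x_i=v_i$, bidder $i$'s effective bid is $b_i(o)=\max(v_i(o)-\utiltarget_i,0)$. The first step is the key structural observation: a loser bids zero on the current winning outcome $o^b$. Indeed, being a loser means $u_i(o^b)<\utiltarget_i$, which by the winners/losers observation forces $v_i(o^b)<\utiltarget_i$; hence $b_i(o^b)=\max(v_i(o^b)-\utiltarget_i,0)=0$, and her current utility is exactly $u_i(o^b)=v_i(o^b)-0=v_i(o^b)$, a quantity strictly below her requested target. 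Everything else leans on this fact.

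Next I would propose the explicit move: lower the utility target to any $\utiltarget_i'$ with $v_i(o^b)<\utiltarget_i'<\utiltarget_i$ (for instance $\utiltarget_i-\epsilon$ with $0<\epsilon<\utiltarget_i-v_i(o^b)$), which is a \emph{bid raise} by definition. Holding the others' totals $\sum_{j\in[n]\setminus\{i\}}b_j(o)$ fixed, the crucial monotonicity is that decreasing $\utiltarget_i$ weakly increases $b_i(o)$ on \emph{every} outcome, while leaving $b_i(o^b)=0$ unchanged (because $v_i(o^b)<\utiltarget_i'$). Since the auctioneer maximizes the total bid $\sum_{j\in[n]\setminus\{i\}}b_j(o)+b_i(o)$ and the total on $o^b$ does not move, I would then split into two cases. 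In the first case the winning outcome stays $o^b$: bidder $i$ still pays zero and her utility is unchanged at $v_i(o^b)$, so it weakly increases. In the second case the outcome switches to some $o'\neq o^b$. Because ties break toward the incumbent $o^b$ and $o^b$'s total is unchanged, $o'$ can overtake it only if $o'$'s total \emph{strictly} increased, which forces $b_i(o')$ to have strictly increased and hence $v_i(o')>\utiltarget_i'$. But this exactly means $i$ is a \emph{winner} on $o'$ and collects her new target $\utiltarget_i'>v_i(o^b)$ — a strict improvement. In both cases the new utility is at least $v_i(o^b)=u_i(o^b)$, which proves the claim.

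The main obstacle to get right is ruling out a ``bad'' outcome switch, i.e., a move to some $o'$ on which she would earn strictly less than $v_i(o^b)$. The plan neutralizes this by combining three ingredients: effective bids are monotone in $-\utiltarget_i$; her bid on the incumbent outcome $o^b$ stays pinned at zero throughout the perturbation; and the tie-breaking rule favors $o^b$, so any usurping outcome must be one on which \emph{her own} bid strictly rose, which can only occur where she is a winner and therefore collects utility $\utiltarget_i'$. Keeping $\utiltarget_i'$ strictly above $v_i(o^b)$ is precisely what guarantees this winner-utility is an improvement rather than a loss, so the only genuinely delicate point is the (easy) choice of how far to lower the target. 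I would close by noting that this argument also reveals the contrast the claim is meant to highlight: unlike a standard first-price loser, who can become a winner only by paying a price that may exceed her value, a utility-target loser raising her bid can always secure at least her current utility and often strictly more.
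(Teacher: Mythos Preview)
Your proof is correct and follows the same underlying logic as the paper's, which compresses the whole argument into a one-line invocation of Lemma~\ref{lem:quasitruthful}: a loser with current utility $u_i<\utiltarget_i$ who raises her bid to $(v_i,u_i)$ receives utility exactly $u_i$ by that lemma. You have essentially unpacked that lemma's case analysis inline (monotonicity of effective bids, the incumbent's total staying fixed, and the tie-breaking forcing any usurping outcome to be one where $i$ becomes a winner) and chosen a target strictly above $v_i(o^b)$ rather than equal to it, which is harmless but unnecessary.
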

\begin{proof} Suppose $i$ is a loser bidding $(v_i,\utiltarget_i)$ and receiving utility $u_i<p_i$. If she raises her bid to $(v_i,u_i)$, Lemma~\ref{lem:quasitruthful} says that she will receive utility of precisely $u_i$, making her a winner.\end{proof}

In our model, bidders locally adjust their bids by $\epsilon$. To mimic settings where auctions happen frequently and no two bidders move simultaneously, bid changes are modeled as asynchronous events. As noted earlier, our model assumes players bid quasi-truthfully, that is, they always submit their true valuation functions in their bids. As a result, the history of the auction is characterized by a sequence of utility-target vectors $\utiltarget^0,\dots$.

We assume that $0\leq \inf_o v_i(o)$ and $\sup v_i(o)<\infty$, so utility-targets will always lie in the finite interval $[0,\sup v_i(o)]$. Unless a player's utility-target hits the boundary of this interval, all bid changes are made in increments of $\epsilon$.

\paragraph{Notions of Convergence\nonstoc{.}} We will show that progressively stronger assumptions imply progressively stronger convergence guarantees. Our first results show that bids will eventually be close to the set of CEF (or non-CEF) bids. As noted earlier, the utility-targets $\allsbid$ are sufficient to characterize bidders' strategies, so we define $\cefset$ to be the set of all such utility-target:
\begin{definition}[The CEF Set]
\label{def:allcefbids}
$\cefset$ is the set of all utility-target vectors $\allsbid$ where the quasi-truthful bids $(v_i,\utiltarget_i)$ produce a cooperatively envy-free outcome.\footnote{Note that membership in $\cefset$ depends on both the vector $\utiltarget$ and the outcome chosen by the auction. This is because certain utility-target vectors $\utiltarget$ will be in $\cefset$ if ties are broken in favor of $o^*$ but not if ties are broken in favor of a suboptimal outcome.}

The set $\ncefset$ is the set of all utility-target vectors which are {\em not} CEF, i.e. $\ncefset=\Re_+^n\setminus\cefset$.
\end{definition}
Significantly, $\cefset$ is never empty. In particular, it always contains the 0 vector ($0^n\in\cefset$).

Since bidders are continually experimenting with their bids, it is not realistic to expect bids to explicitly converge to $\cefset$; rather, they will remain close. For a set of bids $\utiltarget$, let $\utiltarget_\epsilon$ denote the set of bids that are close to some vector in $\utiltarget$, i.e.
\begin{definition}
\label{def:nearcefbids}
Let $S_\epsilon$ be the set of all utility-targets $\utiltarget$ which are close to some vector in $\utiltarget$ coordinate-wise. Formally,
\[S_\epsilon = \{\allsbid\  |\ \exists \utiltarget'\in S\mbox{ s.t. }||\utiltarget-\utiltarget'||_\infty\leq\epsilon\}\enspace.\]
\end{definition}
In particular, we will care about the sets $\nearcef$ and $\nearncef$, the sets representing bids close to being CEF and close to being not CEF, respectively.

Next we define the convergence of an auction to utility-target bids $\utiltarget$:
\begin{definition}
An auction {\em converges to a set of utility-targets $S$} if, for any $\delta>0$, there exists a sufficiently small bid adjustment parameter $\epsilon$ for which the auction always reaches a utility-target $\utiltarget$ such that all future bids are in $S_\delta$.
\end{definition}

Our strongest result wull show that bids converge to the egalitarian equilibrium:

\begin{definition}
The egalitarian equilibrium is the CEF equilibrium which distributes utility as evenly as possible. Formally, for each equilibrium let $u_{\uparrow}$ be the vector of bidders' utilities with its coordinates sorted in increasing order. The {\em egalitarian equilibrium} is the one for which $u_{\uparrow}$ is lexicographically maximized.
\end{definition}
An auction converges to the egalitarian equilibrium $\utiltarget^{E}$ if it converges to $\{\utiltarget^{E}\}$.

\subsection{Axioms and Results}

Our convergence theorems show that progressively stronger assumptions about bidder behavior lead to progressively stronger convergence results.

Our first axiom of bidder behavior captures some intuition about how winners and losers behave. Following Claim~\ref{clm:winlower}, a winner cannot benefit by raising her bid and a loser cannot benefit by lowering it, so we suppose that they never do this. Additionally, a loser who is actively engaged in the auction should raise her bid if it is beneficial. By Claim~\ref{clm:losealwaysraise} we know that a loser can always raise her bid in a way that is weakly beneficial, so we suppose that a loser will always try to raise her bid.

\begin{itemize}

\item[(\Aa).] {\em A losing bidder will raise her bid in an effort to win; a loser will not lower her bid and a winner will not raise her bid.} Formally, if the current utility-target is $\allsbid$ and $i$ is a loser, then $i$ must raise her bid at some point in the future unless she becomes a winner through the actions of other bidders.

\end{itemize}

Anecdotal evidence suggests that advertisers bidding in an ad auction generally expend substantial effort to launch advertising campaigns but are much slower to change things once they appear to work. Our second axiom generalizes this idea by supposing that winners (who, by definition, get the utility-target they request) view the outcome of the auction as a success while losers are unhappy with the results:

\begin{itemize}
\item[(\Ab).] {\em A bidder who is losing is more impatient than a bidder who is winning.} Formally, if the current utility-target is $\allsbid$ and a set of bidders $L\subseteq[n]$ are losers, then the next time bids change it will necessarily be because some loser $i\in L$ raised her bid.

\end{itemize}

Our third axiom is analogous to (\Aa) but for winners --- a winner who is actively engaged should lower her bid from time to time to see if she can win at a lower bid.
\begin{itemize}
\item[(\Ac).] {\em A winner will try lowering her effective bid to win at a lower price. Specifically, if a bidder is currently a winner, then she must lower her bid at some point in the future unless she becomes a loser through the action of another player.} Formally, if the current utility-targets are $\allsbid$ and $i$ is a winner, then $i$ must lower her bid at some point in the future unless she becomes a loser through the actions of other bidders.

\end{itemize}

Our final axiom concerns the relative timing of events. Intuition and anecdotal evidence suggests that larger bidders who have more at stake tend to invest more heavily in active bidding strategies. This axiom roughly represents that intuition:
\begin{itemize}
\item[(\Ad).] {\em Between two losers, the bidder with the higher utility-target is more impatient.} Formally, if the current utility-targets are $\allsbid$ and bidders $i$ and $j$ are both losers, then $i$ will raise her bid before $j$ if $\utiltarget_i > \utiltarget_j$.

\end{itemize}

These simple properties of bidder behavior imply the following convergence results. Proofs follow in Section \ref{sec:dynamicsbehavior} and Appendix \ref{sec:sel-proofs}.
\begin{theorem}
\label{thm:converge-sink}
If losing bidders will only raise their effective bids (\Aa) and are more impatient than winning bidders (\Ab), the auction converges to the set of bids that are cooperatively envy-free (i.e. bids will be in $\nearcef$).
\end{theorem}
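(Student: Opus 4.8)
\section*{Proof proposal}

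The plan is to pin down an exact correspondence between the winner/loser partition and membership in the CEF set $\cefset$, and then to run a monotone--potential argument on precisely the moves that $\Aa$ and $\Ab$ permit. The structural heart of the theorem is the claim that \emph{every all-winner state is CEF}, equivalently \emph{a non-CEF state always contains a loser}. To prove it, note that since bidders are quasi-truthful the utility $i$ derives from any outcome $o$ is $u_i(o)=v_i(o)-\max(v_i(o)-\utiltarget_i,0)=\min(v_i(o),\utiltarget_i)$, and recall that $i$ is a winner exactly when $v_i(o^b)\geq\utiltarget_i$, i.e. when $u_i(o^b)=\utiltarget_i$. If $i$ strictly prefers some alternative $o$ to the realized outcome $o^b$, then $\min(v_i(o^b),\utiltarget_i)<\min(v_i(o),\utiltarget_i)\leq\utiltarget_i$, which forces $v_i(o^b)<\utiltarget_i$, so $i$ is a loser. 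Hence if every bidder is a winner then no bidder strictly prefers any alternative, every left-hand side $\sum_{i\in\bidders}\max(u_i(o)-u_i(o^b),0)$ is $0$, and the state is CEF. Contrapositively, a violated CEF constraint has strictly positive left-hand side (its right-hand side $\sum_{i\in\bidders}(b_i(o^b)-b_i(o))\geq 0$ because $o^b$ maximizes the total bid), so some bidder strictly prefers the deviating outcome and is therefore a loser.

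Next I would run the dynamics. By $\Ab$, as long as a loser exists the next bid change is a loser raising, and by $\Aa$ such a move only \emph{decreases} some $\utiltarget_i$ (by $\epsilon$, or to the boundary $0$); while a loser is present no axiom permits any $\utiltarget_i$ to increase. Thus throughout any maximal interval in which losers are present the potential $\Phi(\allsbid)=\sum_{i\in\bidders}\utiltarget_i\geq 0$ is non-increasing and strictly drops at each change, and each coordinate is confined to $[0,\sup_o v_i(o)]$ in $\epsilon$-steps (reaching $0$ makes $i$ a winner by the observation), so only finitely many moves occur in that interval. When they cease, no loser can remain, since otherwise $\Aa$ together with Claim~\ref{clm:losealwaysraise} would force a further raise; hence the state is all-winners and, by the structural lemma, lies in $\cefset\subseteq\nearcef$. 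This already establishes the theorem if winners never move.

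The remaining issue --- and the reason the guarantee is $\nearcef$ rather than exact membership in $\cefset$ --- is that $\Aa,\Ab$ do not forbid a \emph{winner} from lowering her bid (raising $\utiltarget_i$) once all bidders are winners. The key fact that tames this is that raising a single $\utiltarget_k$ by $\epsilon$ is $1$-Lipschitz in each effective bid, hence changes the gap $\sum_{i\in\bidders}b_i(o^b)-\sum_{i\in\bidders}b_i(o)$ by at most $\epsilon$; so if the move flips the winner from $o^b$ to $o'$, then $o'$ now leads by at most $\epsilon$. Moreover any bidder $k'$ turned into a loser by the flip satisfies $v_{k'}(o^b)\geq\utiltarget_{k'}>v_{k'}(o')$, so lowering $\utiltarget_{k'}$ raises $b_{k'}(o^b)$ and pushes the winner back toward $o^b$; by $\Aa$--$\Ab$ these new losers raise, and (asynchrony means no second winner move intervenes, and the outcome change is $O(\epsilon)$) an $O(n\epsilon)$ correction restores an all-winner, hence CEF, state within $\ell_\infty$-distance $O(n\epsilon)$ of the CEF point we left. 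Thus every future state is either in $\cefset$ or within $O(n\epsilon)$ of it, and choosing $\epsilon$ small relative to $\delta$ keeps all future states in $\cefset_\delta$, which is exactly convergence to $\nearcef$.

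I expect the last step to be the main obstacle. Because a single bid change can flip the winning outcome and thereby flip other bidders between winner and loser, the partition is not static, and one must simultaneously argue that $\Phi$ still makes monotone progress during loser phases and that winner-triggered flips are self-correcting at scale $O(\epsilon)$ --- including careful bookkeeping for the most-recent-winner tie-breaking rule, which can require a constant number of extra $\epsilon$-steps to strictly re-flip the outcome. Making the $\epsilon$--$\delta$ bound uniform over all move orderings consistent with the axioms (so that the excursions never compound) is where the real care is required.
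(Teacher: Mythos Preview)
Your first two pieces are correct and match the paper exactly: the structural lemma that every all-winner state lies in $\cefset$ is the paper's Lemma~\ref{lem:allwin-cef}, and the potential argument that any loser-phase terminates in finitely many steps is Lemma~\ref{lem:allwin-finite}. The gap is precisely where you flag it, and it is a genuine gap rather than bookkeeping.

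Your last step tries to control the $\ell_\infty$-distance of the current state to the \emph{previous} CEF point $\allsbid'$. But during the loser phase following a winner-lowering, raises only decrease coordinates of $\allsbid$, and nothing in your argument caps how many raises occur or spreads them across bidders; a single bidder may raise many times, so the current state can be far from $\allsbid'$ in $\ell_\infty$. Your Lipschitz observation bounds the gap between two specific outcomes' total bids, not this distance, and the claim that the correction terminates in $O(n)$ steps is unsupported: each raise can flip the winner to yet another $o''$, creating a fresh loser set, and the cascade need not be short. Even if you could bound the endpoint, the theorem requires every \emph{intermediate} state to lie in $\nearcef$, which you do not address.

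The paper's device is a structural fact you never invoke: $\cefset$ is \emph{leftward-closed} (Lemma~\ref{lem:cefconvex}), i.e., $\allsbid\in\cefset$ and $0\leq\Delta\leq\allsbid$ imply $\allsbid-\Delta\in\cefset$. Since the most recent bid-lowering moved from some CEF $\allsbid'$ to $\allsbid''=\allsbid'+\epsilon e_i$ and only raises have occurred since, the current state is $\allsbid=\allsbid''-\Delta$ with $\Delta\geq 0$; leftward-closure gives $\allsbid'-\Delta\in\cefset$, and $\allsbid$ differs from that point by at most $\epsilon$ in the single coordinate $i$, so $\allsbid\in\nearcef$. This holds for \emph{every} intermediate state, with the tight constant $\epsilon$, and requires no control over the length of the loser phase, over which outcomes win along the way, or over tie-breaking. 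That lemma is the missing idea; with it your worries about compounding excursions evaporate.
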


\begin{theorem}
\label{thm:converge-out-sink}
If winners try to lower their effective bids (\Ac) and losers try to raise but not lower their effective bids (\Aa), the auction converges to the set of bids that are non-cooperatively envy-free (i.e. bids will be in $\nearncef$).
\end{theorem}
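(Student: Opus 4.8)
The plan is to prove convergence in its contrapositive form: for every tolerance $\delta$ and all sufficiently small step sizes $\epsilon$, the dynamics eventually reach a state after which the utility-target vector stays within $\delta$ (in $\|\cdot\|_\infty$) of $\ncefset$, i.e. it is never ``$\delta$-deep'' inside $\cefset$. Writing the slack of the CEF constraint for an alternative outcome $o$ against the current winner $\optoutcome$ as
\[\sigma(o)=\sum_{i\in\bidders}\bigl(b_i(\optoutcome)-b_i(o)\bigr)-\sum_{i\in\bidders}\max\bigl((v_i(o)-b_i(o))-(v_i(\optoutcome)-b_i(\optoutcome)),\,0\bigr),\]
a state is CEF exactly when $\sigma(o)\ge 0$ for all $o$, and I would track the minimum slack $\sigma^\ast=\min_{o}\sigma(o)$ as a potential; being $\delta$-deep in $\cefset$ corresponds, up to the Lipschitz constant of $\sigma$ in $\utiltarget$, to $\sigma^\ast$ exceeding a fixed threshold $\delta'$.

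The first and most important step is a structural lemma isolating who controls $\sigma$. Using $b_i(\optoutcome)=0$ for a loser and $u_i(\optoutcome)=\utiltarget_i$ for a winner, a short case analysis shows that a loser's contribution to each $\sigma(o)$ equals $-\max(v_i(o)-v_i(\optoutcome),0)$, which is \emph{independent of her utility-target}, while a winner contributes nothing to the second sum and her contribution to the first is \emph{non-increasing in $\utiltarget_i$}. Hence only winners' targets affect CEF membership, and a winner lowering her effective bid (raising $\utiltarget_i$, axiom~\Ac) weakly decreases every $\sigma(o)$, pushing the state toward $\ncefset$, whereas losers' moves under~\Aa{} leave $\sigma^\ast$ untouched.

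I would then run the potential argument. While $\sigma^\ast>\delta'$ the winning outcome cannot change: a flip requires some alternative's bid-gap $\sum_i(b_i(\optoutcome)-b_i(o))$ to reach $0$, but since $\sigma(o)$ equals this gap minus a nonnegative term, a gap within $\epsilon$ of $0$ forces $\sigma(o)\le\epsilon<\delta'$, a contradiction. With $\optoutcome$ fixed, \Ac{} forces every winner to keep raising $\utiltarget_i$ while \Aa{} forbids any winner from lowering it, so winners' targets are pinned upward toward their caps $v_i(\optoutcome)$ (a winner who overshoots becomes a loser and is pulled back by~\Aa, oscillating within $\epsilon$ of the cap). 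Consequently every $b_i(\optoutcome)$ shrinks to $O(\epsilon)$, the configuration approaches the all-zero bid profile, and there $\sigma^\ast=-\max_{o}\sum_i\max(v_i(o)-v_i(\optoutcome),0)$, the negative of the second-price threat. Except in the degenerate case where $\optoutcome$ dominates every outcome for every bidder (handled separately, where \Aa{} merely drives losers toward the trivial boundary), this is strictly negative, so $\sigma^\ast$ must fall below $\delta'$: the dynamics cannot remain $\delta$-deep in $\cefset$ forever.

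Finally I would show the bound $\sigma^\ast\le\delta'$ is permanent. In any phase with $\optoutcome$ fixed, $\sigma^\ast$ is non-increasing (winners only lower bids, losers do not move it), so it cannot climb back above $\delta'$; and at a flip the near-tie makes the new minimum slack satisfy $\sigma^\ast\le\sum_i(b_i(o')-b_i(\optoutcome))=O(\epsilon)\le\delta'$ with respect to the new winner $o'$. Thus once $\sigma^\ast\le\delta'$ is reached it persists, placing all future bids in $\nearncef$. I expect the main obstacle to be the no-stall engine of the third step: establishing rigorously that \Ac{} drives the total winning bid all the way down, so that $\sigma^\ast$ actually crosses $\delta'$ rather than stalling at an interior binding constraint, while winners and losers asynchronously swap roles near their caps and the identity of the binding outcome shifts. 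The clause of~\Aa{} forbidding winners to raise their bids is precisely what supplies the monotonicity that makes both this engine and the permanence argument go through.
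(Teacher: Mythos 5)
Your structural observation is correct and genuinely elegant: for a fixed winner $\optoutcome$ and alternative $o$, a loser's total contribution to the slack $\sigma(o)$ really is the constant $-\max(v_i(o)-v_i(\optoutcome),0)$, and a winner's contribution is non-increasing in $\utiltarget_i$; your ``engine'' (every target is driven to within $\epsilon$ of $v_i(\optoutcome)$, so the slack of any outcome that some bidder strictly prefers becomes negative) is essentially the paper's Lemma~\ref{lem:notallwin-finite}. The gap is in your final conversion from ``$\sigma^\ast\le\delta'$'' to ``the state is in $\nearncef$.'' The set $\nearncef$ is defined by coordinate-wise $\|\cdot\|_\infty$-proximity of the utility-target vector to $\ncefset$, and Lipschitzness of $\sigma$ gives only the direction you do not need: proximity to $\ncefset$ implies small slack, not conversely. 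The converse fails because the per-bidder contribution to $\sigma(o)$ has flat regions: a winner with $\utiltarget_i<v_i(o)\le v_i(\optoutcome)$ contributes the fixed amount $v_i(\optoutcome)-v_i(o)$, which no perturbation of $\utiltarget_i$ smaller than $v_i(o)-\utiltarget_i$ can reduce. Concretely, with two outcomes, $v_1=(1,\,0.5)$, $\utiltarget_1=0$, $v_2=(0,\,0.5-\eta)$, $\utiltarget_2=0$, one computes $\sigma(o)=\eta$, arbitrarily small, while the nearest point of $\ncefset$ is at $\|\cdot\|_\infty$-distance about $0.5$. So even with careful bookkeeping of the $O(n\epsilon)$ fluctuations (note $\sigma^\ast$ is not literally non-increasing within a phase: a loser's raise that carries her across $v_i(\optoutcome)$ into winner territory increases the slack by up to $\epsilon$), your potential argument establishes that the minimum CEF slack stays small, which is strictly weaker than the theorem's conclusion.

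States like the counterexample are plausibly unreachable after the first visit to $\ncefset$ --- a winner sitting deep in a flat region is exactly a winner with $b_i(\optoutcome)$ bounded away from zero, whom \Ac{} will not leave alone --- but proving this requires tracking where each coordinate sits relative to the last non-CEF state, and that is precisely the information your scalar potential discards. It is also exactly what the paper's proof does: it backtracks to the most recent $\utiltarget'\in\ncefset$ and builds an explicit witness $\tilde\utiltarget$ with $\|\tilde\utiltarget-\utiltarget\|_\infty\le\epsilon$ coordinate by coordinate, using the facts that anyone who raised since then was a loser (so her bid for $\optoutcome$ did not move, only her bid for the violating outcome grew) and anyone who lowered was a winner. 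To repair your proposal you would need an additional lemma showing that, after the first entry into $\ncefset$, every bidder's target remains within $O(\epsilon)$ of the regime where her contribution to the binding constraint has slope $-1$ or is saturated; at that point you will have essentially reconstructed the paper's coordinate-level argument.
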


Combining Theorems \ref{thm:converge-sink} and \ref{thm:converge-out-sink} shows that bids will converge to the frontier of the CEF set. The strict Pareto frontier of this set is the set of CEF equilibrium bids.
\begin{corollary}
If losing bidders will try raising their bids (\Aa), losers are less patient than winners (\Ab), and winners try lowering their bids (\Ac), the auction converges to the boundary between CEF and non-CEF bids (bids will be in the set $\nearcef\cup\nearncef$).
\end{corollary}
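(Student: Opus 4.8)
The plan is to obtain the corollary as a direct combination of Theorems~\ref{thm:converge-sink} and~\ref{thm:converge-out-sink}, observing that the three axioms assumed here subsume the hypotheses of both. Theorem~\ref{thm:converge-sink} requires only (\Aa) and (\Ab), while Theorem~\ref{thm:converge-out-sink} requires only (\Aa) and (\Ac); since the corollary assumes all of (\Aa)--(\Ac), any execution of the dynamics consistent with these axioms simultaneously satisfies the premises of both theorems. Hence such an execution inherits both conclusions: its bids converge to $\cefset$ and converge to $\ncefset$.

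The one point that needs care is that the definition of convergence supplies a separate step size for each target set. First I would fix $\delta>0$ and invoke each theorem to produce thresholds $\epsilon_1$ and $\epsilon_2$ below which the auction is guaranteed to reach, respectively, a state after which all bids lie in $\cefset_\delta$ and a state after which all bids lie in $\ncefset_\delta$. Setting $\epsilon=\min(\epsilon_1,\epsilon_2)$ and noting that the guarantees of both theorems continue to hold for every adjustment parameter at most their threshold, a single execution with this common step size must reach both such states. Taking the later of the two times, after that point every bid lies in $\cefset_\delta\cap\ncefset_\delta$. Since $\delta$ was arbitrary, the auction converges to this intersection, which is exactly the set of bids within $\delta$ of both a CEF and a non-CEF vector --- that is, a neighborhood of the boundary separating $\cefset$ from $\ncefset$.

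I expect essentially all the difficulty to reside in the two theorems being combined; the corollary itself is a short logical argument. The only genuine obstacle is verifying that a common step size drives both convergence statements, which reduces to checking that each theorem's conclusion is monotone in the adjustment parameter (i.e. holds for all sufficiently small parameters rather than a single isolated value), as is standard for these dynamics. Finally, I would record that the limiting region $\nearcef\cap\nearncef$ is precisely a neighborhood of the frontier of the CEF set, whose strict Pareto frontier is the collection of CEF equilibrium bids; this connects the convergence statement to the equilibrium characterization and explains why reaching this boundary yields the revenue and welfare guarantees of Theorem~\ref{thm:cef-props}.
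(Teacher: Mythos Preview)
Your approach matches the paper's, which offers no proof beyond the remark preceding the corollary that ``Combining Theorems~\ref{thm:converge-sink} and~\ref{thm:converge-out-sink} shows that bids will converge to the frontier of the CEF set''; your care in choosing a common step size $\epsilon=\min(\epsilon_1,\epsilon_2)$ simply fills in a detail the paper leaves implicit. One small note: you derive $\nearcef\cap\nearncef$ rather than the printed $\nearcef\cup\nearncef$, but the intersection is clearly what the ``boundary'' language intends (the union is all of $\Re_+^n$ and would be vacuous), so this appears to be a typo in the statement and your reading is the intended one.
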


Finally, adding \Ad{} induces convergence to a particular equilibrium:
\begin{theorem}
\label{thm:converge-egal}
If losing bidders will raise their effective bids (\Aa), winning bidders will try lowering their effective bids (\Ac), and the most impatient bidder is the losing bidder bidding for the highest utility-target (\Ab, \Ad), then bids will converge to the Egalitarian envy-free equilibrium.
\end{theorem}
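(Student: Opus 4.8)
The plan is to build on the Corollary combining Theorems~\ref{thm:converge-sink} and~\ref{thm:converge-out-sink}, which already guarantees (under \Aa, \Ab, \Ac) that bids converge to the boundary between $\cefset$ and $\ncefset$, i.e.\ to the Pareto frontier of the CEF set, which is exactly the set of CEF equilibria. The only remaining task is to show that the extra timing axiom \Ad\ singles out the \emph{egalitarian} point of this frontier. First I would record the clean dynamic picture produced by (\Ab) and (\Ad) together: whenever at least one loser is present, the very next bid change is made by the loser with the largest utility-target (she raises her bid, i.e.\ decreases her $\utiltarget_i$), and only when every bidder is a winner does (\Ac) eventually force some winner to push her utility-target up. Thus the process alternates between an \emph{ascending} phase, in which winners raise targets until someone overshoots into $\ncefset$, and a \emph{correcting} phase, in which the highest-target loser concedes until the configuration is CEF again.

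Second, I would characterize the target point through the water-filling structure of Algorithm~\ref{alg:s-egal}. The egalitarian equilibrium is the unique CEF equilibrium whose utility profile, sorted in increasing order, is lexicographically maximal; equivalently it is the leximin point obtained by raising all targets uniformly and freezing each bidder as her CEF constraint binds. At this point every bidder is a winner and $u_i=\utiltarget_i$, so it suffices to show that the dynamics drive the utility profile to this leximin-optimal vector $\utiltarget^{E}$.

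Third, and at the heart of the argument, I would exhibit a leximin progress measure and show it is monotone. Take the utility vector sorted in increasing order, compared lexicographically; call it $u_\uparrow$. Axiom (\Ad) is precisely what makes $u_\uparrow$ non-decreasing: since every correction is made by the loser with the \emph{largest} utility-target, concessions always fall on high-utility bidders and never on a bidder holding a current minimum, so no correcting move lowers $u_\uparrow$. Axiom (\Ac) supplies strict progress: at any frontier configuration below $\utiltarget^{E}$, leximin-optimality fails, so the smallest-utility bidder is not at her cap $v_i(\optoutcome)$ and, being a winner, will raise her target; this push enters $\ncefset$ and the \Ad-ordered correction transfers utility from a strictly higher-utility bidder down to her, strictly increasing $u_\uparrow$. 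Since $u_\uparrow$ is bounded and $\utiltarget^{E}$ is its unique maximizer, and since no non-egalitarian frontier point is stable, the configuration must converge to $\utiltarget^{E}$; plugging this into the convergence definition (choosing $\epsilon$ small relative to $\delta$ to absorb the overshoot and the $O(\epsilon)$ granularity) completes the argument.

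The main obstacle is the strict-progress lemma itself. One must track exactly which bidders become losers when the lowest-utility bidder raises her target: a discrete change of the winning outcome can flip several bidders into losers simultaneously, and one has to verify, using the structure of the CEF constraints (which coalition actually binds), that the \Ad-ordered concession always lands on a strictly-higher-utility bidder, so that the move is a genuine leximin improvement rather than a neutral or harmful swap. The asynchronous, $\epsilon$-discretized overshoot-and-correct nature of the process compounds this: I must rule out limit cycles by showing that the experimentation of winners cannot repeatedly undo the leximin gains, i.e.\ that each completed ascend-correct cycle nets progress of order at least $\epsilon$ toward $\utiltarget^{E}$ until the configuration is within $O(\epsilon)$ of it. Establishing this discrete leximin-monotonicity, together with the uniqueness of the egalitarian maximizer, is the crux of the proof.
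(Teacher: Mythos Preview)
Your approach is genuinely different from the paper's. The paper does not use a leximin potential at all. Instead it partitions bidders into \emph{levels} $L_1,\dots,L_k$ by their egalitarian utility, defines explicit exponential bounds $b^-(i),b^+(i)$ (multiples of $\epsilon$) around each egalitarian target, and proves by induction on levels that once all lower-level bidders are trapped in their bounds, level-$i$ bidders become trapped in theirs. The key technical device is the notion of a \emph{witness outcome} for bidder $j$: an outcome tied with $o^*$ at the egalitarian bids in which every strictly-higher-level bidder still wins but $j$ does not. Existence of a witness is exactly what pins $j$'s target from above (Lemma~\ref{lem:sbid-upperbound}); the ordering axiom \Ad\ is used in the companion Lemma~\ref{lem:sbid-lowerbound} to show that $j$ cannot be forced below her lower bound, because any higher-target loser would have moved first and restored $o^*$.

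The gap in your plan is precisely the monotonicity step you flag as ``the main obstacle,'' and I do not think it can be closed at the level of generality you suggest. Axiom \Ad\ orders corrections only \emph{among the current losers}; it does not say that the set of losers excludes the minimum-target bidder. Concretely, when a winner $k$ raises her target and the outcome flips to some $o'$, the bidders who become losers are those with $v_i(o')<\utiltarget_i$, and nothing prevents the current minimum-target bidder $m$ from being one of them. The \Ad-ordered cascade will burn through higher-target losers first, but if none of those concessions flips the outcome back to something $m$ values, $m$ will eventually be the last loser standing and must concede, strictly decreasing $u_\uparrow$. Ruling this out is exactly what the paper's witness machinery accomplishes: the structure of the binding CEF constraint at the egalitarian point guarantees that higher-level bidders are always ``in'' the threatening outcome, so the cascade of concessions among higher-target bidders is enough to restore $o^*$ before $m$ has to move (and the exponential bounds quantify by how much). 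Your sketch invokes ``which coalition actually binds'' at just this point but supplies no tool to analyze it; without a witness-type lemma (or an equivalent structural fact about the CEF polytope), the leximin argument does not go through. A second, smaller issue: your ascend/correct dichotomy treats the ascending phase as a single winner pushing up, but under \Ac\ many winners may raise their targets before anyone becomes a loser, so the bidder whose push finally triggers the correction need not be the lowest-utility one; this further complicates the ``utility is transferred downward'' picture.
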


\subsection{Convergence Proofs}
\label{sec:dynamicsbehavior}

In this section we give proofs of Theorems \ref{thm:converge-sink} and \ref{thm:converge-out-sink}. Theorem \ref{thm:converge-egal} and omitted proofs may be found in Appendix \ref{sec:sel-proofs}. 
Throughout this section, we assume that there is a single welfare optimal outcome for clarity of presentation.

\begin{observation}\label{obs:lower-allwin}
Under assumptions \Aa{} and \Ab, a bidder will only lower her bid if all bidders are winners.
\end{observation}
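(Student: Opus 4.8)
The plan is to argue directly from the two axioms by contradiction, examining the status of the bidder who lowers her bid and the set of losers at the moment immediately preceding that event. Suppose for contradiction that some bidder $i$ lowers her bid at a state whose utility-target vector is $\allsbid$, and that not all bidders are winners at that state, i.e. the set of losers $L\subseteq\bidders$ is nonempty. I would then split into cases according to whether $i$ herself is a winner or a loser at $\allsbid$.

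First I would rule out that $i$ is a loser. Axiom (\Aa) states explicitly that a loser will not lower her bid (she will only raise it in an effort to win, or remain a loser until another player's action makes her a winner). Hence if $i$ is the one lowering her bid, $i$ must be a winner at $\allsbid$.

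Next I would invoke (\Ab) to reach the contradiction. Since $L$ is nonempty, axiom (\Ab) guarantees that the next time bids change it will necessarily be because some loser $j\in L$ raised her bid. But the act of $i$ lowering her bid is itself the next bid change, and it is performed by a winner lowering rather than a loser raising --- directly contradicting (\Ab). Therefore the assumption that $L\neq\emptyset$ is untenable, so every bidder must be a winner whenever any bidder lowers her bid, which is exactly the claim.

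I do not expect any genuine obstacle here: the observation is an immediate logical consequence of combining the two axioms --- (\Aa) tells us losers never lower their bids, while (\Ab) tells us that whenever a loser exists the next move belongs to a loser raising. The only care required is to reason about the state immediately \emph{before} the lowering event and to identify that lowering as ``the next bid change,'' so that the hypothesis of (\Ab) applies cleanly.
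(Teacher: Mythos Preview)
Your argument is correct and is precisely the reasoning the paper intends: the observation is stated without proof in the paper because it follows immediately from (\Aa) forbidding losers to lower and (\Ab) forcing the next move to come from a loser whenever one exists. Your case split and appeal to ``the next bid change'' make explicit exactly what the paper leaves implicit.
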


\begin{lemma}
\label{lem:allwin-cef} If all bidders are winners under utility-targets $\allsbid$, then $\allsbid$ is in the CEF set $\cefset$.
\end{lemma}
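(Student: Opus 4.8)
The plan is to show that the hypothesis ``all bidders are winners'' forces the left-hand side of every CEF constraint to vanish, after which the constraint holds trivially. The first step I would take is to record the utility of a quasi-truthful bidder at an arbitrary outcome. For a bid $(v_i,\utiltarget_i)$ the effective bid is $b_i(o)=\max(v_i(o)-\utiltarget_i,0)$, so
\[u_i(o)=v_i(o)-b_i(o)=v_i(o)-\max(v_i(o)-\utiltarget_i,0)=\min(v_i(o),\utiltarget_i)\leq\utiltarget_i.\]
This identity (which I would verify uniformly across the cases $v_i(o)\geq\utiltarget_i$ and $v_i(o)<\utiltarget_i$) shows that $\utiltarget_i$ is an upper bound on $i$'s utility over all outcomes, attained precisely when $v_i(o)\geq\utiltarget_i$ --- exactly the condition defining a winner.

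Next, let $o^*$ be the outcome chosen under $\allsbid$. The assumption that all bidders are winners means $u_i(o^*)=\utiltarget_i$ for every $i$, i.e. each bidder simultaneously achieves her maximal attainable utility. Combining this with the bound above, for any alternate outcome $o$ and any bidder $i$ we get $u_i(o)=\min(v_i(o),\utiltarget_i)\leq\utiltarget_i=u_i(o^*)$. Hence the per-bidder term appearing in the CEF definition, $(v_i(o)-b_i(o))-(v_i(o^*)-b_i(o^*))=u_i(o)-u_i(o^*)$, is nonpositive, so each summand $\max\big(u_i(o)-u_i(o^*),0\big)$ is zero. Therefore the entire left-hand side $\sum_{i\in\bidders}\max\big((v_i(o)-b_i(o))-(v_i(o^*)-b_i(o^*)),0\big)$ equals $0$ for every outcome $o$.

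Finally, I would observe that since the auctioneer selects $o^*$ to maximize the total effective bid $\sum_{i\in\bidders}b_i(o)$, we have $\sum_{i\in\bidders}b_i(o^*)\geq\sum_{i\in\bidders}b_i(o)$, so the right-hand side $\sum_{i\in\bidders}\big(b_i(o^*)-b_i(o)\big)$ is nonnegative for each $o$. Putting the two sides together yields $0\leq\sum_{i\in\bidders}\big(b_i(o^*)-b_i(o)\big)$, which is exactly the CEF inequality for outcome $o$; as this holds for all $o$, we conclude $\allsbid\in\cefset$.

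There is essentially no hard step here: the whole argument rests on recognizing that a winner is a bidder extracting her maximal possible utility $\utiltarget_i$, so when everyone is a winner no coalition can strictly gain by forcing any alternate outcome, collapsing the willingness-to-pay side of every CEF inequality to zero. The only points requiring mild care are confirming the algebraic identity $u_i(o)=\min(v_i(o),\utiltarget_i)$ in both cases and matching the sign conventions of the CEF definition so that the deficit side is correctly identified as nonnegative; both are routine.
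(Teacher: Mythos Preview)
Your proof is correct and follows essentially the same approach as the paper: both show that when all bidders are winners each term $\max\big((v_i(o)-b_i(o))-(v_i(o^*)-b_i(o^*)),0\big)$ vanishes (because $u_i(o)\leq\utiltarget_i=u_i(o^*)$), and then use the optimality of $o^*$ to conclude the right-hand side is nonnegative. Your explicit identity $u_i(o)=\min(v_i(o),\utiltarget_i)$ is a slightly more detailed version of the paper's ``by definition'' step, but the argument is otherwise the same.
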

\begin{proof}
If all bidders are winners, then we know that they are receiving precisely the utility-target they request when they bid $\sbid{}$. Intuitively, this means that raising bids necessarily implies receiving less utility.

Formally, if bids are $b_i=(v_i,\utiltarget_i)$ and the outcome of the auction is $o^b$, then we want to show that the CEF condition holds for any outcome $o$. Since all bidders are winners, we know $v_i(o^b)-b_i(o^b)= \utiltarget_i$. Moreover, $v_i(o)-b_i(o)\leq \utiltarget_i$ by definition, so $v_i(o)-b_i(o)\leq v_i(o^b)-b_i(o^b)$. Thus
\[\max\left((v_i(o)-b_i(o))-(v_i(o^b)-b_i(o^b)),0\right)=0\enspace.\]

Since $o^b$ is the outcome of the auction, we know $\sum_{i\in\bidders}b_i(o^b)\geq\sum_{i\in\bidders}b_i(o)$ for any outcome $o$. Thus, $0\leq\sum_{i\in\bidders}b_i(o^b)-b_i(o)$ and therefore
\[\sum_{i\in\bidders}\max\left((v_i(o)-b_i(o))-(v_i(o^b)-b_i(o^b)),0\right)\leq\sum_{i\in\bidders}b_i(o^b)-b_i(o)\]
as desired.
\end{proof}
Since \Aa{} and \Ab{} imply that a player will only lower her bid from $\sbid{}$ if all bidders are winners, an important corollary is that a bidder will only lower her bid if the current utility-target vector is in the CEF set $\cefset$:
\begin{corollary}
\label{cor:lowering-cef} Under assumptions \Aa{} and \Ab, if a player lowers her bid from $\sbid{}$, then $\sbid{}$ is in the CEF set $\cefset$.
\end{corollary}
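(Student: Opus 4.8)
The plan is to chain the two results immediately preceding the corollary, since the corollary is precisely their composition. First I would invoke Observation~\ref{obs:lower-allwin}: under assumptions \Aa{} and \Ab, the only circumstance in which any bidder lowers her bid is when all bidders are winners. Hence, if a player lowers her bid starting from the utility-target vector $\sbid{}$, it must be the case that every bidder is a winner under $\sbid{}$.

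Next I would apply Lemma~\ref{lem:allwin-cef} to this very configuration: since all bidders are winners under $\sbid{}$, the lemma directly yields $\sbid{}\in\cefset$. Composing the two implications gives the claim --- lowering a bid from $\sbid{}$ forces $\sbid{}$ into the CEF set.

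The only point requiring care is the bookkeeping of which utility-target vector each statement refers to: both the Observation and the Lemma must be instantiated at exactly the same vector $\sbid{}$, namely the configuration immediately before the bid is lowered. Observation~\ref{obs:lower-allwin} already certifies the all-winners property precisely at that pre-lowering configuration, and Lemma~\ref{lem:allwin-cef} takes the all-winners property at a fixed $\sbid{}$ as its hypothesis, so the two compose with no gap. There is essentially no genuine obstacle here; the substantive work was carried out in establishing the Observation and the Lemma, and the corollary follows in a single line from their composition.
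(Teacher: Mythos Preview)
Your proposal is correct and matches the paper's own reasoning exactly: the paper states the corollary immediately after Lemma~\ref{lem:allwin-cef} with the one-line justification that \Aa{} and \Ab{} force all bidders to be winners whenever someone lowers a bid (Observation~\ref{obs:lower-allwin}), whereupon Lemma~\ref{lem:allwin-cef} yields $\sbid{}\in\cefset$.
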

A corollary of Claim~\ref{clm:welfare-max} is that any set of CEF bids maximizes welfare, hence this implies that a player will only lower her bid if the welfare-optimal outcome is winning:
\begin{corollary}
\label{cor:lowering}
Under assumptions \Aa\, and \Ab, a bidder will only lower her bid if a welfare-optimal outcome $\optoutcome$ is winning.
\end{corollary}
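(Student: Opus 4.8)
The plan is to chain together the two preceding results. Corollary~\ref{cor:lowering} states that under axioms (\Aa) and (\Ab), a bidder lowers her bid only when a welfare-optimal outcome $\optoutcome$ is winning. The cleanest route is to combine Corollary~\ref{cor:lowering-cef} with Claim~\ref{clm:welfare-max}: the former tells us that any bid-lowering event can occur only from a utility-target vector $\sbid{}$ in the CEF set $\cefset$, and the latter asserts that CEF bids maximize social welfare. So the entire argument is a two-step implication: \emph{a bidder lowers her bid} $\Rightarrow$ \emph{the current bids are in $\cefset$} $\Rightarrow$ \emph{the winning outcome is welfare-optimal}.

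First I would invoke Corollary~\ref{cor:lowering-cef} directly: if a player lowers her bid from $\sbid{}$ under assumptions (\Aa) and (\Ab), then $\sbid{}$ lies in $\cefset$, meaning the quasi-truthful bids $(v_i, \utiltarget_i)$ produce a cooperatively envy-free outcome. This step is essentially free since it is the stated conclusion of the prior corollary. Next I would apply Claim~\ref{clm:welfare-max}, which says that CEF bids maximize welfare. Since the bids are CEF, the chosen outcome $o^b$ must be a welfare-maximizing outcome. By the standing assumption in this section that there is a single welfare-optimal outcome, $o^b$ coincides with $\optoutcome$, and hence $\optoutcome$ is the winning outcome precisely when a bidder lowers her bid.

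The one point requiring minor care is the interface between the \emph{bids} being CEF and the \emph{outcome} being welfare-optimal. Claim~\ref{clm:welfare-max} is phrased as ``CEF bids maximize welfare,'' which I interpret as: the outcome $o^b$ selected by the utility-target auction under CEF bids is a welfare-maximizing outcome over $\outcomes$. Under quasi-truthful bidding the value bids equal the true valuations $v_i$, so the welfare of the selected outcome is the true social welfare $\sum_i v_i(o^b)$, and CEF-maximality of welfare is exactly the statement that $o^b \in \argmax_o \sum_i v_i(o)$. With the single-optimal-outcome assumption this pins down $o^b = \optoutcome$.

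I do not anticipate a genuine obstacle here, as the corollary is a short logical consequence of results already established; the only subtlety is being explicit that ``CEF bids maximize welfare'' refers to the welfare of the auction's chosen outcome, and that quasi-truthfulness lets us equate claimed welfare with true welfare so that the maximizer is indeed the welfare-optimal outcome $\optoutcome$.
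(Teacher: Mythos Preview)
Your proposal is correct and matches the paper's own argument exactly: the paper also derives this corollary by combining Corollary~\ref{cor:lowering-cef} (lowering implies $\sbid{}\in\cefset$) with Claim~\ref{clm:welfare-max} (CEF bids maximize welfare), concluding that the winning outcome must be $\optoutcome$. Your added remarks about quasi-truthfulness and the single-optimal-outcome assumption are appropriate clarifications but not additional ideas beyond what the paper uses.
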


Another useful fact about $\cefset$ is that it is leftward-closed (the proof is in the appendix) and the natural corollary that $\ncefset$ is rightward-closed:
\begin{lemma}\label{lem:cefconvex}
If $\utiltarget$ is in the CEF set $\cefset$, then $\utiltarget-\Delta$ is in the CEF set $\cefset$ for any $\utiltarget\geq\Delta\geq0$.
\end{lemma}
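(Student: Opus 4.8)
The plan is to replace the CEF inequality by an equivalent form whose right-hand side is \emph{independent of the utility-targets}, and then use a coordinate-wise monotonicity. Throughout write $u_i(o)=v_i(o)-b_i(o)=\min(v_i(o),\utiltarget_i)$ for the utility bidder $i$ derives from outcome $o$ under the quasi-truthful bid $(v_i,\utiltarget_i)$, let $W(o)=\sum_i v_i(o)$ be the (target-independent) welfare of $o$, and let $o^*$ be the winning outcome under $\utiltarget$, which is welfare-optimal by Claim~\ref{clm:welfare-max}. Substituting $u_i$ and using $b_i(o^*)-b_i(o)=(v_i(o^*)-u_i(o^*))-(v_i(o)-u_i(o))$, the right-hand side of the CEF condition becomes $(W(o^*)-W(o))+\sum_i(u_i(o)-u_i(o^*))$. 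Writing $P(o)=\sum_i\max(u_i(o)-u_i(o^*),0)$ (the CEF left-hand side) and $N(o)=\sum_i\max(u_i(o^*)-u_i(o),0)$, the identity $P(o)-N(o)=\sum_i(u_i(o)-u_i(o^*))$ cancels the $P(o)$ terms, so the CEF condition for winner $o^*$ is equivalent to
\[N(o)\;=\;\sum_{i}\max(u_i(o^*)-u_i(o),0)\;\le\;W(o^*)-W(o)\qquad\text{for all }o\in\outcomes .\]
The point of this rewrite is that the new right-hand side no longer depends on the utility-targets at all.

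Next I would establish a monotonicity of $N$ under lowering targets. Write $\utiltarget'=\utiltarget-\Delta$ with $0\le\Delta\le\utiltarget$, so that each utility weakly shrinks: $u_i'(o)=\min(v_i(o),\utiltarget_i-\Delta_i)\le u_i(o)$. I would show the term-wise bound $\max(u_i'(o^*)-u_i'(o),0)\le\max(u_i(o^*)-u_i(o),0)$ for every $i$ and $o$. Fixing $a=v_i(o^*)$ and $b=v_i(o)$ and viewing $h(t)=\max(\min(a,t)-\min(b,t),0)$ as a function of the cap $t$, an elementary case check ($a\le b$ gives $h\equiv 0$; $a>b$ gives $h$ piecewise-linear and non-decreasing in $t$) shows $h$ is non-decreasing, so lowering the cap from $\utiltarget_i$ to $\utiltarget_i-\Delta_i$ can only decrease it. Summing over $i$ yields $N'(o)\le N(o)$ for all $o$, where $N'$ is the analogue of $N$ computed from the shrunken utilities $u_i'$.

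Combining the two steps gives $N'(o)\le N(o)\le W(o^*)-W(o)$ for all $o$. Since
\[\textstyle\sum_i b_i'(o^*)-\sum_i b_i'(o)=\bigl(W(o^*)-W(o)\bigr)-\sum_i\bigl(u_i'(o^*)-u_i'(o)\bigr)\;\ge\;\bigl(W(o^*)-W(o)\bigr)-N'(o)\;\ge\;0,\]
the outcome $o^*$ still (weakly) wins under $\utiltarget'$, and the displayed inequality is precisely the reformulated CEF condition for winner $o^*$; hence $\utiltarget'\in\cefset$. The one delicate point—the main obstacle—is that lowering targets could in principle change the winning outcome or create a tie, and CEF membership is tied to the \emph{chosen} winner (the footnote to Definition~\ref{def:allcefbids}). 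The reformulation is exactly what tames this: because its right-hand side $W(o^*)-W(o)$ is target-independent, I only need to rule out a non-welfare-optimal outcome being selected under $\utiltarget'$, which the single-welfare-optimum assumption of Section~\ref{sec:dynamicsbehavior} (or the fixed tie-breaking order, under which the welfare-optimal $o^*$ remains the winner) settles directly. The stated corollary that $\ncefset$ is rightward-closed is then the immediate contrapositive: if $\utiltarget+\Delta$ were CEF it would force $\utiltarget$ to be CEF as well.
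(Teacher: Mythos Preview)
Your proof is correct and takes a genuinely different --- and considerably cleaner --- route than the paper's. The paper argues by a two-stage decomposition of $\Delta$: first it treats reductions that keep every $\utiltarget_i$ above $v_i(o^*)$ (only ``losers'' move, and $b_i(o^*)$ is unchanged), verifying the raw CEF inequality by a lengthy term-by-term case analysis on whether $v_i(o)\lessgtr v_i(o^*)$; then it handles the residual reductions, where everyone is already a winner, by invoking the all-winners-implies-CEF lemma. Your reformulation $N(o)=\sum_i\max(u_i(o^*)-u_i(o),0)\le W(o^*)-W(o)$ dissolves this entirely: once the right-hand side is target-independent, the whole lemma reduces to the monotonicity of the scalar map $t\mapsto\max(\min(a,t)-\min(b,t),0)$, and no winner/loser decomposition is needed. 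What the paper's approach buys is that it stays close to the primitive CEF inequality and re-uses Lemma~\ref{lem:allwin-cef}; what yours buys is a one-idea proof that also makes the ``$o^*$ still wins'' step fall out of the same inequality. Both arguments lean on the unique welfare-optimum assumption (or a favorable tie-break) to fix the winner under $\utiltarget-\Delta$, and your treatment of that point is at least as careful as the paper's.

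One cosmetic slip worth tightening: the clause ``the displayed inequality is precisely the reformulated CEF condition for winner $o^*$'' overstates. The displayed line establishes that $o^*$ still wins under $\utiltarget'$; the reformulated CEF condition itself was already secured one sentence earlier by $N'(o)\le N(o)\le W(o^*)-W(o)$. The logic is sound --- just separate the two conclusions in the write-up.
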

\begin{corollary}\label{cor:ncefconvex}
If $\utiltarget$ is in the not-CEF set $\ncefset$, then $\utiltarget+\Delta$ is in the not-CEF set $\ncefset$ for $\Delta\geq0$.
\end{corollary}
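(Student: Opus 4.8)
The plan is to obtain this statement directly as the contrapositive of Lemma~\ref{lem:cefconvex}, so essentially no new work is required. First I would suppose, toward a contradiction, that $\utiltarget \in \ncefset$ yet $\utiltarget + \Delta \in \cefset$ for some $\Delta \geq 0$. The goal is then to apply the leftward-closure of $\cefset$ to the vector $\utiltarget + \Delta$, decreasing it by exactly $\Delta$ to recover $\utiltarget$ and contradict the assumption that $\utiltarget$ was not CEF.

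Before invoking the lemma I would verify its hypothesis $\utiltarget + \Delta \geq \Delta \geq 0$. Since membership in $\ncefset = \Re_+^n \setminus \cefset$ forces $\utiltarget \in \Re_+^n$, we have $\utiltarget \geq 0$ and hence $\utiltarget + \Delta \geq \Delta \geq 0$, so the hypotheses of Lemma~\ref{lem:cefconvex} are met with CEF vector $\utiltarget + \Delta$ and decrement $\Delta$. Applying the lemma yields
\[
(\utiltarget + \Delta) - \Delta = \utiltarget \in \cefset,
\]
which contradicts $\utiltarget \in \ncefset$. Therefore $\utiltarget + \Delta \in \ncefset$, as claimed.

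Since the argument is a one-line contraposition, I do not expect any genuine obstacle; the only point worth double-checking is the nonnegativity side-condition $\utiltarget + \Delta \geq \Delta$ needed to legally invoke Lemma~\ref{lem:cefconvex}, which holds precisely because $\ncefset \subseteq \Re_+^n$. Intuitively, the corollary says that raising any bidder's utility-target (equivalently, lowering her effective bid) can only push a non-envy-free profile further away from being CEF, which is exactly the monotone counterpart of the lemma's statement that lowering targets preserves CEF.
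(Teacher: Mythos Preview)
Your proof is correct and matches the paper's intent: the corollary is stated without proof as the ``natural corollary'' of Lemma~\ref{lem:cefconvex}, and the contrapositive you wrote out is exactly the implicit one-line argument. Your check that $\utiltarget\in\ncefset\subseteq\Re_+^n$ guarantees the side-condition $\utiltarget+\Delta\geq\Delta\geq0$ is the only thing to verify, and you handled it.
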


To prove that bids will converge, we first show that bids will not be stuck at arbitrarily low values:
\begin{lemma}\label{lem:allwin-finite}
Suppose the initial vector of utility-targets is $\utiltarget^0$. Under properties \Aa{} and \Ab, the auction will always reach a configuration in which all bidders are winners and will do so within $\left|\left\lceil\frac1\epsilon \utiltarget^0\right\rceil\right|_1$ steps.
\end{lemma}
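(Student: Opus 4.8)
The plan is to exhibit a monotone integer potential that strictly decreases with every bid change and is bounded below, so that an all-winner configuration must be reached within the claimed number of steps. The key structural fact I would establish first is that, under \Aa{} and \Ab, the utility-target vector $\allsbid$ is componentwise non-increasing until an all-winner state is reached. Indeed, as long as some loser exists, axiom \Ab{} forces the next bid change to be a loser raising her bid, and raising a bid means \emph{decreasing} that loser's utility-target by one increment of $\epsilon$ (or down to the boundary value $0$). No coordinate ever increases before the all-winner state, because by Observation~\ref{obs:lower-allwin} a bidder lowers her bid (increases her utility-target) only when all bidders are winners. Hence, before the process reaches all-winners, each step decreases exactly one coordinate of $\allsbid$ by a single $\epsilon$-notch and leaves all others fixed.

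Next I would define the potential $\Phi(\allsbid)=\sum_{i\in\bidders}\lceil \utiltarget_i/\epsilon\rceil$, which counts the number of $\epsilon$-decrements needed to drive every utility-target down to $0$. Since each coordinate $\utiltarget_i$ marches through the values $\utiltarget_i^0,\utiltarget_i^0-\epsilon,\dots$ until it hits the boundary $0$, the term $\lceil \utiltarget_i/\epsilon\rceil$ drops by exactly $1$ at each decrement of coordinate $i$; combined with the monotonicity above, every step strictly decreases $\Phi$ by $1$. As $\Phi$ is a nonnegative integer with initial value $\Phi(\allsbid^0)=\sum_{i\in\bidders}\lceil\utiltarget_i^0/\epsilon\rceil=\left|\left\lceil\frac1\epsilon\utiltarget^0\right\rceil\right|_1$, at most this many decrementing steps can occur.

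To conclude, I would argue that an all-winner state is actually reached rather than merely bounding the step count. Suppose for contradiction it never is; then a loser is always present, so by \Aa{} and \Ab{} a bid change occurs at every stage and $\Phi$ strictly decreases each time. After $\left|\left\lceil\frac1\epsilon\utiltarget^0\right\rceil\right|_1$ steps we would reach $\Phi=0$, i.e. $\utiltarget_i=0$ for all $i$; but a bidder with $\utiltarget_i=0$ is always a winner, so every bidder is a winner, contradicting the assumption. Therefore an all-winner configuration is reached within $\left|\left\lceil\frac1\epsilon\utiltarget^0\right\rceil\right|_1$ steps.

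The one point requiring care, which I expect to be a subtlety of bookkeeping rather than a genuine obstacle, is the boundary accounting underlying the ``one notch per step'' claim: when $\utiltarget_i^0$ is not an integer multiple of $\epsilon$, the final decrement to $0$ is smaller than $\epsilon$, and one must verify that $\lceil \utiltarget_i/\epsilon\rceil$ still drops by exactly $1$ on that last step so the ceiling in the stated bound is tight. The monotonicity claim itself is immediate from Observation~\ref{obs:lower-allwin} together with \Aa{}'s stipulation that winners never raise and losers never lower, so no further combinatorial work is needed there.
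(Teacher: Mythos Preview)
Your proposal is correct and follows essentially the same argument as the paper: both use Observation~\ref{obs:lower-allwin} (from \Aa{} and \Ab) to show utility-targets only decrease until all-winners, count the total number of $\epsilon$-decrements as $\left|\left\lceil\frac1\epsilon\utiltarget^0\right\rceil\right|_1$, observe that $\utiltarget_i=0$ forces bidder $i$ to be a winner, and use \Aa{} to ensure the process does not stall while a loser remains. Your potential-function phrasing is a cleaner formalization of the same counting idea the paper sketches more informally.
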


We can now prove the our first theorem, that bids will be close to $\cefset$ when \Aa{} and \Ab{} are satisfied.

\begin{proofof}{Theorem~\ref{thm:converge-sink}}
Lemma~\ref{lem:allwin-finite} implies that all bidders will be winners within a finite time. Once all bidders are winners, the only way bids will change is if someone lowers her bid. Thus, after a finite amount of time, we can conclude that either all bidders are winners or some bidder has lowered her bid.

Let $\utiltarget$ be the vector of utility-targets at any point after the first time all bidders are winners. If all bidders are still winners, then $\utiltarget\in\cefset$ by Lemma~\ref{lem:allwin-cef}. Otherwise, let $i$ be the most recent player to lower her bid, increasing the utility-target vector from $\allsbid'$ to $\allsbid''=\utiltarget'+\epsilon e_i$. We show that if $i$ raises her bid again then the resulting utility-targets must be CEF regardless of how bids have changed since $i$'s raise.

By construction, players have only raised their bids since $i$ lowered hers, so we can define $\Delta=\allsbid''-\allsbid$ where $\Delta\geq0$. Corollary~\ref{cor:lowering-cef} tells us that $\utiltarget'\in\cefset$. If $i$ raised her bid between $\utiltarget''$ and $\utiltarget$, then $\utiltarget\leq \utiltarget'$ and Lemma~\ref{lem:cefconvex} tells us that $\utiltarget\in\cefset$, so were done. Otherwise, we know $\allsbid''\geq\Delta\geq0$ and Lemma~\ref{lem:cefconvex} tell us that $\allsbid'-\Delta\in\cefset$. Therefore  $\allsbid=\allsbid'-\Delta+\epsilon e_j\in\nearcef$.
\end{proofof}

To prove Theorem~\ref{thm:converge-out-sink}, we need a lemma similar to Lemma~\ref{lem:allwin-finite} showing that the auction will reach a bid vector that is CEF:

\begin{lemma}\label{lem:notallwin-finite}
Under properties \Aa{} and \Ac, as long as there is some outcome $o$ and bidder $j$ such that $v_j(o)>v_j(o^*)$, the auction will always reach a configuration that is not CEF when $\epsilon$ is sufficiently small. If there is no such outcome $o$ and bidder $j$, then the auction may converge to $\utiltarget_j=v_j(o^*)$ instead.
\end{lemma}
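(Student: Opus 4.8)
The plan is to argue by contradiction: assume the auction never leaves $\cefset$, and show that this forces the revenue at $\optoutcome$ below the second-price threat, contradicting the CEF revenue bound (Claim~\ref{clm:second-price-bound}). The engine of the contradiction is that, under \Aa{} and \Ac, each bidder's utility-target is squeezed toward $v_i(\optoutcome)$ from both sides.

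First I would pin down the per-bidder dynamics while the outcome is $\optoutcome$. Since CEF implies welfare-optimality (Claim~\ref{clm:welfare-max}), if the auction stays in $\cefset$ then the winning outcome is always $\optoutcome$, and hence bidder $i$ is a winner exactly when $\utiltarget_i\le v_i(\optoutcome)$ --- a condition depending only on her own target. Axiom \Aa{} then says a winner's target never decreases and a loser's target never increases, while \Ac{} says a winner must strictly raise her target (by $\epsilon$) within finite time unless she first becomes a loser. Combining these: a bidder with $\utiltarget_i<v_i(\optoutcome)$ is a winner and is driven strictly upward by \Ac{}, with no downward force until she crosses $v_i(\optoutcome)$; a bidder with $\utiltarget_i>v_i(\optoutcome)$ is a loser and is driven downward by \Aa{} (the same mechanism used in Lemma~\ref{lem:allwin-finite}), but only in $\epsilon$-steps, so she is pushed back to within $\epsilon$ of $v_i(\optoutcome)$ and no further. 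I would conclude that the band $\{\utiltarget_i\ge v_i(\optoutcome)-\epsilon\}$ is entered in finite time (at most $\lceil v_i(\optoutcome)/\epsilon\rceil$ raises) and, once entered, is never left.

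Next I would combine the bidders. At the first time $t^*$ at which every bidder has simultaneously entered her band, each effective bid satisfies $b_i(\optoutcome)=\max(v_i(\optoutcome)-\utiltarget_i,0)\le\epsilon$, so the revenue $\sum_{i\in\bidders}b_i(\optoutcome)\le n\epsilon$. Because some $j$ and $o$ satisfy $v_j(o)>v_j(\optoutcome)$, the second-price threat $\max_{o}\sum_{i\in\bidders}\max(v_i(o)-v_i(\optoutcome),0)$ is a strictly positive constant $T$. Choosing $\epsilon<T/n$ makes the revenue at $t^*$ strictly less than $T$; by the contrapositive of Claim~\ref{clm:second-price-bound} the configuration at $t^*$ is not CEF, contradicting the standing assumption, so the auction must reach $\ncefset$. (If the outcome ever deviates from $\optoutcome$ before $t^*$, that is already non-CEF by Claim~\ref{clm:welfare-max}, so either branch reaches $\ncefset$.) For the second part, when no such $j,o$ exists we have $T=0$, the contradiction disappears, and the same band argument instead squeezes every target into $[v_i(\optoutcome)-\epsilon,v_i(\optoutcome)+\epsilon]$, i.e. the auction converges to $\utiltarget_j=v_j(\optoutcome)$.

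The main obstacle I anticipate is making the asynchronous, $\epsilon$-discretized band argument fully rigorous: I must verify that interleaving other bidders' moves cannot alter bidder $i$'s winner/loser status (this is precisely why I first fix the outcome at $\optoutcome$, so that status depends only on $\utiltarget_i$), that the downward excursions of a loser never drop more than $\epsilon$ below $v_i(\optoutcome)$ so the band is genuinely invariant, and that \Ac{} really forces a persistently-winning bidder to keep rising rather than stall. A secondary subtlety is the boundary case $\utiltarget_i=v_i(\optoutcome)$ together with the tie-breaking convention, which must be handled so that ``entering the band'' is well-defined and monotone.
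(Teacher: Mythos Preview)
Your proposal is correct and follows essentially the same route as the paper: both argue that if the outcome stays at $\optoutcome$, axioms \Aa{} and \Ac{} drive each $\utiltarget_i$ into the band $\{\utiltarget_i\ge v_i(\optoutcome)-\epsilon\}$, forcing $\sum_i b_i(\optoutcome)\le n\epsilon$ and hence (for small $\epsilon$) a CEF violation. The only cosmetic difference is that you package the final contradiction via Claim~\ref{clm:second-price-bound} (revenue below the second-price threat $T$ implies not CEF), whereas the paper directly exhibits the violated CEF inequality for the witnessing outcome $o$; these are the same computation, and your version is arguably cleaner.
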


\begin{proofsketchof}{Theorem~\ref{thm:converge-out-sink}}
By Lemma~\ref{lem:notallwin-finite}, the auction will typically reach a utility-target vector in $\ncefset$. Our primary goal is to show that the auction will be in $\nearncef$ from that point onwards.

Suppose $\utiltarget\in\cefset$. Let $\utiltarget'$ be the most recent utility-targets that were in $\ncefset$ and let $\utiltarget''\in\cefset$ be the utility-targets immediately after $\utiltarget'$. Let $i$ be the bidder who changed her bid between $\utiltarget'$ and $\utiltarget''$. Let $o$ be an outcome that violated the CEF constraints at $\utiltarget'$.

Consider a bidder $j$ whose bid is higher at $\utiltarget$ than at $\utiltarget''$. Notice that $j$ must be a loser to raise her bid, roughly $v_j(o^*)<\utiltarget_j$. It then follows from the definition of the utility-target auction that $b_j(o^*)$ essentially did not change from $\utiltarget'$ to $\utiltarget$. Moreover, $b_j(o)$ can only have increased from $\utiltarget'$ to $\utiltarget$, so $b_j(o)$ increased more than $b_j(o^*)$ did. Similar logic leads to an analogous conclusion for bidders whose bids were lower at $\utiltarget''$ than $\utiltarget$, roughly giving
\[b_j(o)-b_j(o^*)\geq b_j'(o)-b_j'(o^*)\]
for each bidder $j\neq i$. For bidder $i$, a similar inequality holds accounting for the fact that $i$ raised her bid from $\utiltarget'$ to $\utiltarget''$:
\[b_i(o)-b_i(o^*)\geq b_i'(o)-b_i'(o^*)-\epsilon\enspace.\]
Finally, given that the CEF constraint for $o$ was violated at $\utiltarget'$, these inequalities imply it must be nearly violated at $\utiltarget$.
\end{proofsketchof}

We have now shown that when losing bidders raise their effective bids and winning bidders lower their effective bids, bids remain close to the frontier of the CEF set $\cefset$. Adding in the specific behavior that the first player to raise their bid will be the losing bidder with the highest utility-target results in convergence to one specific equilibrium: the egalitarian equilibrium (Theorem \ref{thm:converge-egal}). The full proof is included in the appendix; we provide a sketch of it here.

\begin{proofsketchof}{Theorem~\ref{thm:converge-egal}}

Arrange bidders into levels $L_1, \ldots, L_k$ in increasing order of the utility each bidder gets at the egalitarian equilibrium.

For each level $\level_{i+1}$, bids from all bidders in the level will converge close to the egalitarian equilibrium once the bids of lower level bidders are sufficiently close to their egalitarian bids.

Thus, beginning with the bidders who get the least utility in equilibrium, and working on up to the lucky bidders with the most utility, bids will converge close to the egalitarian outcome.
\end{proofsketchof}

\section{Conclusion and Open Questions}

Pay-your-bid auctions --- and utility-target auctions in particular --- offer many advantages over incentive compatible mechanisms in terms of transparency and simplicity. Moreover, in many complex settings they even appear to generate more revenue.

Our work first shows that the bidding language is important in first-price auction design. In particular, it is both important and sufficient that bidders can compete in terms of their final utility. Also, a key feature in a repeated first-price auction is a pure-strategy equilibrium, something that GFP does not have~\cite{EOS07}. This is a question of design: the existence of pure-strategy equilibria may be guaranteed through a carefully crafted bidding language (e.g. the utility-target auction) that can encode different per-click payments for different ad slots.

More significantly, when players compete on utility, our results show that robust performance guarantees may be derived using only simple axioms of bidder behavior that merely require knowledge of whether one is winning or losing. These results are powerful because they do not require an a priori assumption that the auction is in equilibrium or full information about others' bids.

Yet, reflection raises a concern about utility-target auctions: {\em why should bidders reveal their true valuation functions in a repeated auction}? We claimed that first-price auctions were better because the auctioneer could not cheat, but it would seem that quasi-truthfulness is just as dangerous. In fact, a quasi-truthful pay-your-bid auction is still strongly preferable to a standard second-price auction: even if the auctioneer knows a bidder's true valuation function, it cannot immediately increase the amount of money the bidder pays. By comparison, the auctioneer in a second-price auction might force a bidder to pay her full value in the second round by increasing the reserve price. The auctioneer is welcome to engage in a game of chicken or a ``negotiation'' with the bidder to see if she is willing to raise her bid, but the pay-your-bid property ensures that final approval still rests with the bidder.

In practice, systems may also be designed to encourage competition on the utility-target term and thereby recover stability. For example, Overture exacerbated the instability of the GFP auction by offering an API automating the sawtooth behavior. If an API were offered to compete on the utility-target term, bidders would likely use the API and stability would be restored, regardless of whether they were reporting their true valuation functions.

Issues of quasi-truthfulness aside, our work also raises questions about dynamic axioms of bidder behavior. Our axioms may be simple and natural, but strict adherence to them is clearly unrealistic. In this vein, many interesting questions are open:
\begin{enumerate}
\item {\em How does the behavior of the auction change with small modifications to the axioms?} For example, we showed that bids would converge to the egalitarian equilibrium when the bidder with the most to gain raised first. Can we prove convergence to a different equilibrium by modifying players' delays? 
\item {\em Do the performance guarantees still hold if axioms only hold probabilistically or on average?} It seems unlikely that bidder behavior always satisfies any particular set of axioms. How do the dynamic guarantees change when axioms only hold most of the time?
\item {\em What dynamic axioms do bidders actually obey?} An interesting experimental question is to determine what axioms are actually satisfied by bidder behavior. For example, could one experimentally measure bidders' delays and combine this with an answer to (1) to predict a particular equilibrium outcome?
\end{enumerate}

\bibliographystyle{plain}
\bibliography{../../miaa}

\appendix
\section{Selected Proofs}
\label{sec:sel-proofs}

This appendix contains selected proofs that were previously omitted.


\subsection{CEF Claims}

\begin{proofof}{Claim~\ref{clm:welfare-max}} We want to show that $\sum_{i\in\bidders}v_i(o^*)\geq\sum_{i\in\bidders}v_i(o)$ for any outcome $o$ and CEF equilibrium $o^*$. The envy-freeness constraints give
\begin{align*}
\sum_{i\in\bidders}\max\left((v_i(o)-b_i(o))-(v_i(o^*)-b_i(o^*)),0\right)&\leq\sum_{i\in\bidders}b_i(o^*)-b_i(o)\\
\sum_{i\in\bidders}(v_i(o)-b_i(o))-(v_i(o^*)-b_i(o^*))&\leq\sum_{i\in\bidders}b_i(o^*)-b_i(o)\\
\sum_{i\in\bidders}v_i(o)-v_i(o^*)&\leq0
\end{align*}
as desired.
\end{proofof}

\begin{proofof}{Claim~\ref{clm:vcg-dominance}}
Define $o^i$ as the outcome that maximizes the welfare of bidders except $i$:
\[o^i=\argmax_o\sum_{j\neq i}v_j(o)\enspace.\]
Thus, the VCG price of player $i$ is $\sum_{j\neq i}v_j(o^i)-v_j(o^*)$.

Now, the envy-freeness constraints give
\begin{align*}
b_i(o^*)\geq&{} \sum_j\max\left((v_j(o^i)-b_j(o^i))-(v_j(o^*)-b_j(o^*)),0\right)\\
&{}+b_i(o^*)-\sum_{j}(b_j(o^*)-b_j(o^i))\\
b_i(o^*)\geq&{} \sum_{j\neq i}(v_j(o^i)-b_j(o^i))-(v_j(o^*)-b_j(o^*))\\
&{}-\sum_{j\neq i}(b_j(o^*)-b_j(o^i))+b_i(o^i)\\
&{}+\max\left((v_i(o^i)-b_i(o^i))-(v_i(o^*)-b_i(o^*)),0\right)\\
b_i(o^*)&{} \geq\sum_{j\neq i}(v_j(o^i)-v_j(o^*))+b_i(o^i)\\
&{}+\max\left((v_i(o^i)-b_i(o^i))-(v_i(o^*)-b_i(o^*)),0\right)\\
b_i(o^*)&{} \geq\sum_{j\neq i}(v_j(o^i)-v_j(o^*))\enspace.
\end{align*}
\end{proofof}

\begin{proofof}{Claim~\ref{clm:second-price-bound}}
We want to show that
\[\sum_{i\in\bidders}b_i(o^*)\geq\max_o\sum_{i\in\bidders}\max(v_i(o)-v_i(o^*),0)\enspace.\]
For any outcome $o$, the envy-freeness constraints give
\begin{align*}
\sum_{i\in\bidders}\max\left((v_i(o)-b_i(o))-(v_i(o^*)-b_i(o^*)),0\right)&\leq\sum_{i\in\bidders}b_i(o^*)-b_i(o)\\
\sum_{i\in\bidders}\max\left(v_i(o)-v_i(o^*)+b_i(o^*),b_i(o)\right)&\leq\sum_{i\in\bidders}b_i(o^*)\\
\sum_{i\in\bidders}\max\left(v_i(o)-v_i(o^*),0\right)&\leq\sum_{i\in\bidders}\sum_{i\in\bidders}b_i(o^*)
\end{align*}
as desired.
\end{proofof}

\subsection{Convergence Lemmas}

\begin{proofof}{Lemma~\ref{lem:cefconvex}}
Let $b$ be the bids at $\utiltarget$ and $b^\delta$ be the bids at $\utiltarget-\delta$. Note that Claim~\ref{clm:welfare-max} implies a welfare-optimal outcome $o^*$ is winning at $\utiltarget$.

First, suppose that all bidders for whom $\delta_i>0$ are winners at $\utiltarget$. In this case, $v_i(o^*)\geq \utiltarget_i$ and so $v_i(o^*)\geq \utiltarget_i-\delta_i$ and for any outcome $o$ we get
\begin{align*}
\sum_{i\in\bidders}b_i^\delta(o^*)&=\sum_{i\in\bidders}b_i(o^*)+\delta_i\\
&\geq\sum_{i\in\bidders}b_i(o)+\delta_i\\
&\geq\sum_{i\in\bidders}b_i^\delta(o)\enspace,
\end{align*}
implying $o^*$ is still winning at $b_i^\delta$. Since $v_i(o^*)\geq \utiltarget_i-\delta_i$, we can conclude that all bidders are winners, ergo $\utiltarget-\delta\in\cefset$ by Lemma~\ref{lem:allwin-cef}.

Now, suppose some bidders in $\utiltarget$ may be losers, but that the vector $\delta$ has the following property:
\[\delta_i\leq\max(\utiltarget_i-v_i(o*),0)\enspace.\]
This condition says that only losers will raise their bids, and they will not raise them enough to affect $b_i(o^*)$.

Our goal is to show
\[\sum_{i\in\bidders}b_i^\delta(o^*)-b_i^\delta(o)\geq\sum_{i\in\bidders}\max\left((v_i(o)-b_i^\delta(o))-(v_i(o^*)-b_i^\delta(o^*)),0\right)\enspace.\]
First, we see that $b_i^\delta(o)=b_i(o)$ as long as $v_i(o)\leq v_i(o^*)$. For any bidder $i$ we have
\[b_i^\delta(o)=\max(v_i(o)-\utiltarget_i-\delta_i,0)\]
which can only be nonzero if $v_i(o)>\utiltarget_i$. However, $b_i(o)$ can only change if $\delta_i>0$, which requires $\utiltarget_i>v_i(o^*)$ and thus $v_i(o)>\utiltarget_i>v_i(o^*)$. By construction, this also holds for $\utiltarget_i-\delta_i$:
\[v_i(o)>\utiltarget_i-\delta_i\geq v_i(o^*)\enspace.\]

Now, when $v_i(o)>\utiltarget_i-\delta_i\geq v_i(o^*)$, we have
\begin{align*}
(v_i(o)-b_i^\delta(o))-(v_i(o^*)-b_i^\delta(o^*))&=\min(v_i(o),\utiltarget_i-\delta_i)-\min(v_i(o^*),\utiltarget_i-\delta_i)\\
&=\utiltarget_i-\delta_i-v_i(o^*)\\
&\geq0\enspace.
\end{align*}
Importantly, if $\Delta(o)$ is the set of bidders for which $b_i^\delta(o)\neq b_i(o)$, we may conclude that
\[\sum_{i\in\bidders}\max\left((v_i(o)-b_i^\delta(o))-(v_i(o^*)-b_i^\delta(o^*)),0\right)=\]
\begin{align*}
=&{}\sum_{i\not\in\Delta(o)}\max\left((v_i(o)-b_i^\delta(o))-(v_i(o^*)-b_i^\delta(o^*)),0\right)\\
&{}+\sum_{i\in\Delta(o)}(v_i(o)-b_i^\delta(o))-(v_i(o^*)-b_i^\delta(o^*))\\
\end{align*}
and likewise
\[\sum_{i\in\bidders}\max\left((v_i(o)-b_i(o))-(v_i(o^*)-b_i(o^*)),0\right)=\]
\begin{align*}
=&{}\sum_{i\not\in\Delta(o)}\max\left((v_i(o)-b_i(o))-(v_i(o^*)-b_i(o^*)),0\right)\\
&{}+\sum_{i\in\Delta(o)}(v_i(o)-b_i(o))-(v_i(o^*)-b_i(o^*))\enspace.\\
\end{align*}
The desired CEF condition quickly follows, using the fact that bidders $i\not\in\Delta(o)$ did not change their bids:
\[\sum_{i\in\bidders}\max\left((v_i(o)-b_i^\delta(o))-(v_i(o^*)-b_i^\delta(o^*)),0\right)=\]
\begin{align*}
=&{}\sum_{i\not\in\Delta(o)}\max\left((v_i(o)-b_i^\delta(o))-(v_i(o^*)-b_i^\delta(o^*)),0\right)+\sum_{i\in\Delta(o)}(v_i(o)-b_i^\delta(o))-(v_i(o^*)-b_i^\delta(o^*))\\
=&{}\sum_{i\not\in\Delta(o)}\max\left((v_i(o)-b_i(o))-(v_i(o^*)-b_i(o^*)),0\right)+\sum_{i\in\Delta(o)}(v_i(o)-b_i(o))-(v_i(o^*)-b_i(o^*))\\
&{}+\sum_{i\in\Delta(o)}(b_i(o)-b_i^\delta(o))-(b_i(o^*)-b_i^\delta(o^*))\\
=&{}\sum_{i\in\bidders}\max\left((v_i(o)-b_i(o))-(v_i(o^*)-b_i(o^*)),0\right)+\sum_{i\in\bidders}(b_i(o)-b_i^\delta(o))-(b_i(o^*)-b_i^\delta(o^*))\\
\leq&{}\sum_{i\in\bidders}(b_i(o^*)-b_i(o)+\sum_{i\in\bidders}(b_i(o)-b_i^\delta(o))-(b_i(o^*)-b_i^\delta(o^*))\\
=&{}\sum_{i\in\bidders}b_i^\delta(o^*)-b_i^\delta(o)
\end{align*}
as desired.

Finally, for general $\delta$, split it as $\delta=\delta^1+\delta^2$ where
\[\delta^1_i=\min(\delta_i,\utiltarget_i-v_i(o^*))\enspace.\]
The vector $\delta^1$ satisfies the condition $\delta_i\leq\max(\utiltarget_i-v_i(o*),0)$, so $\utiltarget-\delta^1\in\cefset$. Moreover, all bidders are winners in $\utiltarget-\delta^1$, so
\[s-\delta^1-\delta^2=s-\delta\in\cefset\]
as desired.
\end{proofof}

\begin{proofof}{Lemma~\ref{lem:allwin-finite}}
Properties \Aa{} and \Ac{} imply that a bid will only be lowered if there are no losers. Thus, bids will only be raised (utility-targets decreased) until all bidders are simultaneously winners. Since any bidder $i$ is always a winner when bidding $\utiltarget_i=0$ and bidders never decrease their utility-targets when they are winners (\Aa), utility-target can be decreased at most $\left|\left\lceil\frac1\epsilon s\right\rceil\right|_1$ times before all bidders are winners. Moreover, since losers will always try to decrease their utility-targets (\Aa), the auction will never stall in a configuration where some bidder is a loser.
\end{proofof}

\begin{proofof}{Lemma~\ref{lem:notallwin-finite}} If the auction reaches a vector $\utiltarget$ that induces an outcome $o\neq o^*$, then $\utiltarget\in\ncefset$ and we are done. Thus, it remains to show that an auction will reach a vector $\utiltarget\in\ncefset$ even if the outcome is always $o^*$.

Consider a bidder $j$. By \Aa{} we know that $j$ will only decrease $\utiltarget_j$ if she is a loser and increase $\utiltarget_j$ if she is a winner. By \Ac{} we can conclude that $j$ will eventually decrease her bid until $\utiltarget_j\geq v_j(o^*)-\epsilon$, implying $b_j(o^*)\leq\epsilon$. Thus,
\[\sum_{i\in\bidders}b_i(o^*)-b_i(o)\leq n\epsilon\enspace.\]
Now, as long as there is some outcome $o$ and bidder $j$ such that $v_j(o)>v_j(o^*)$, when $\epsilon$ is sufficiently small it will be the case that 
\[n\epsilon<\sum_{i\in\bidders}\max\left((v_i(o)-b_i(o))-(v_i(o^*)-b_i(o^*)),0\right)\]
Unfortunately, this implies
\[\sum_{i\in\bidders}\max\left((v_i(o)-b_i(o))-(v_i(o^*)-b_i(o^*)),0\right)>\sum_{i\in\bidders}b_i(o^*)-b_i(o)\enspace,\]
and therefore $\utiltarget\in\ncefset$.

If there is no outcome $o$ and bidder $j$ such that $v_j(o)>v_j(o^*)$, then by similar logic bidders will increase their utility-targets until precisely $\utiltarget_j=v_j(o^*)$ (as a result of our restriction that bidders always bid $\utiltarget_j\leq\sup_ov_j(o^*)$).
\end{proofof}

\noindent{\bf Theorem \ref{thm:converge-out-sink} (Restatement).}
{\em 
If winners try to lower their effective bids (\Ac) and losers try raising but not lowering (\Aa), bids will eventually remain close to the boundary of the set of CEF bids or entirely outside it.
}\\

\begin{proofof}{Theorem~\ref{thm:converge-out-sink}}
By Lemma~\ref{lem:notallwin-finite}, the auction will eventually reach a utility-target vector in $\ncefset$ or the degenerate case where nobody is paying anything and $o^*$ is winning. In the degenerate case, bids converge to a point on the boundary of $\cefset$, so the theorem is true. For the standard case, we show that $\utiltarget\in\nearncef$ from the first time a bid in $\ncefset$ is reached.

If $\utiltarget\in\ncefset$, we are done, so suppose $\utiltarget\in\cefset$. Let $\utiltarget'$ be the most recent utility-targets that were in $\ncefset$ and let $\utiltarget''\in\cefset$ be the utility-targets immediately after $\utiltarget'$. Let $i$ be the bidder who changed her bid between $\utiltarget'$ and $\utiltarget''$. Corollary~\ref{cor:ncefconvex} implies that $i$ must have raised her bid between $\utiltarget'$ and $\utiltarget''$.

First, suppose that the outcome changed from $o'$ to $o''$ when $i$ raised her bid. Since $o^*$ must be the outcome of any CEF bid, we know that $o''=o^*$ and that the outcome does not change again before bids reach $\utiltarget$. Define the utility-target vector $\tilde s$ with associated bids $\tilde b$ as follows:
\[\tilde \utiltarget_j=\begin{cases}\min(\utiltarget_i,\utiltarget_i')&j=i\\
\utiltarget_j-\epsilon&\utiltarget_j>\utiltarget_j''\\
\utiltarget_j+\epsilon&\utiltarget_j<\utiltarget_j'\\
\utiltarget_j&\mbox{otherwise.}\end{cases}\]
Let $\delta_j=\tilde \utiltarget_j-\utiltarget_j''$. We argue later that $i$ will not increase her utility-target from $\utiltarget_i''=\utiltarget_i'+\epsilon$, so $|\tilde \utiltarget_j-\utiltarget_j|\leq\epsilon$ for all $j$. Thus, it is sufficient to show that $\tilde s\in\ncefset$.

Consider a bidder $j\neq i$ and suppose $\utiltarget_j>\utiltarget_j''$. By definition, we get a simple bound on $j$'s bid for $o'$:
\[\tilde b_j(o')\geq b_j'(o')-\delta_j\enspace.\]
We also know that $j$ lowered her bid at some point between $\utiltarget''$ and $\utiltarget$. Since $j$ would only increase her utility-target if she were a winner, she must have been a winner at some value $\geq \utiltarget_j-\epsilon=\tilde \utiltarget_j$. Thus, $\tilde \utiltarget_j=\utiltarget_j-\epsilon\leq v_j(o^*)$. We can thus upper-bound her bid for $o^*$:
\[\tilde b_j(o^*)\leq b_j'(o^*)-\delta_j\enspace.\]
Combining these two bounds and noting that $b_j''=b_j'$ for $j\neq i$ gives
\[\tilde b_j(o')-\tilde b_j(o^*)\geq b_j'(o')-b_j'(o^*)\enspace.\]

For bidders $j\neq i$ with $\utiltarget_j<\utiltarget_j''$, analogous reasoning based the fact that $j$ must have been a loser to decrease her utility-target gives
\[\tilde b_j(o')-\tilde b_j(o^*)\geq b_j'(o')-b_j'(o^*)\enspace.\]
For bidders $j\neq i$ with $\utiltarget_j=\utiltarget_j''$, we trivially have
\[\tilde b_j(o')-\tilde b_j(o^*)= b_j'(o')-b_j'(o^*)\enspace,\]
so it remains to consider bidder $i$.

For bidder $i$, we know that decreasing her utility-target from $\utiltarget_i'$ to $\utiltarget_i''$ increased her bid for $o^*$ more than it increased her bid for $o'$. This implies $v_i(o')<v_i(o^*)$ and $\utiltarget_i''<v_i(o*)$. Consequently, $i$ is a winner with $\utiltarget_i''$ at $o^*$ and will not decrease her utility-target further. Firstly, this implies that $|\tilde \utiltarget_i-\utiltarget_i|\leq\epsilon$. First, suppose $\utiltarget_i>\utiltarget_i''$. In this case, $\utiltarget_i\geq \utiltarget_i'$, and since $v_i(o^*)>v_i(o')$ we have
\[\tilde b_i(o')-\tilde b_i(o^*)\geq b_i'(o')-b_i'(o^*)\enspace.\]
Otherwise, $i$ does not change her bid from $\utiltarget''$ to $\utiltarget$, so $\tilde \utiltarget_i=\utiltarget_i'$ and therefore $\tilde b_i=b_i'$

Thus, for any bidder $j$ we have
\[\tilde b_j(o')-\tilde b_j(o^*)\geq b_j'(o')-b_j'(o^*)\enspace,\]
and thus
\[\sum_{j\in\bidders}\tilde b_j(o')-\sum_{j\in\bidders}\tilde b_j(o^*)\geq \sum_{j\in\bidders}b_j'(o')-\sum_{j\in\bidders}b_j'(o^*)\enspace.\]
Since $o'$ was winning at $b_j'$, this implies $o^*$ cannot be winning under $\tilde \utiltarget$, and therefore $\tilde \utiltarget\in\ncefset$. By construction, $|\tilde \utiltarget_j-\utiltarget_j|\leq\epsilon$, so this implies $\utiltarget\in\nearncef$.

So far, we showed that $\utiltarget\in\nearncef$ as long as the outcome changed when $i$ raised her bid. In the case where the outcome was already $o'=o^*$, we want to analyze the CEF constraints directly. Since $\utiltarget'\in\ncefset$, there is some outcome $o^w$ for which the CEF constraints are violated, i.e.
\[\sum_{i\in\bidders}b_i'(o^*)-b_i'(o^w)<\sum_{i\in\bidders}\max\left((v_i(o^w)-b_i'(o^w))-(v_i(o^*)-b_i'(o^*)),0\right)\enspace.\]
Observing that the outcome does not change from $\utiltarget'$ to $\utiltarget$, the logic from the case where $o'\neq o^*$ gives
\[\tilde b_j(o')-\tilde b_j(o^*)\geq b_j'(o')-b_j'(o^*)\]
for any bidder $j$. It immediately follows that
\[\sum_{i\in\bidders}\tilde b_i(o^*)-\tilde b_i(o^w)<\sum_{i\in\bidders}\max\left((v_i(o^w)-\tilde b_i(o^w))-(v_i(o^*)-\tilde b_i(o^*)),0\right)\enspace,\]
and so $\tilde \utiltarget\in\ncefset$ and $\utiltarget\in\nearncef$.
\end{proofof}

\subsection{Convergence to the Egalitarian Equilibrium}


\noindent{\bf Theorem \ref{thm:converge-egal} (Restatement).}
{\em 
If losing bidders will raise their effective bids (\Aa), winning bidders will try lowering their effective bids (\Ac), and the most impatient bidder is the losing bidder bidding for the highest utility (\Ab, \Ad), then bids will converge to the Egalitarian envy-free equilibrium.
}\\

\begin{proofof}{Theorem~\ref{thm:converge-egal}}
The proof will proceed as follows. We first categorize bidders into levels based on their utility in the egalitarian outcome. We define upper and lower bounds on utility-targets as multiples of $\epsilon$, the amount by which players change their bids. Then, we show that if for a given bidder $j$, the bid of every lower-utility bidder has converged to within their bounds, the bid of $j$ will also converge to within her bounds - first to at least her lower bound (Lemma~\ref{lem:sbid-lowerbound}), and then to at most her upper bound (Lemma~\ref{lem:sbid-upperbound}). Combining these via induction gives our final result that the bids of all players converge near their egalitarian outcome. 

Let $\optoutcome$ be the egalitarian outcome; let $\optsbid{i}$ be the corresponding utility-target of bidder $i$. Let $\mathbf{B}_{b}(\outcome)=\sum_{i\in\bidders} b_i(\outcome)$ be the total bid for a given outcome $\outcome$. Let $\mathbf{B}_{X}(\outcome) = \sum_{j\in \level_{X}} b_j(\outcome)$, and $\mathbf{B}^*_{X}(\outcome)$ be similarly defined.

First, consider all utility-targets in the egalitarian equlibrium; let $z_i$ be the $i$th smallest (distinct) utility-target. Let $\level_i$ be the set of all players with a utility-target of $z_i$ in the egalitarian equilibrium. We will use $\level(j)$ to denote the level of a bidder $j$.

We will show convergence by showing that there exist functions $\lowbound(\levelidx)$ and $\highbound(\levelidx)$ s.t. for any $j\in \level_{\levelidx}$, utility-targets converge into and remain in the interval $[\utiltarget_j^* - \epsilon \lowbound(\levelidx), \utiltarget_j^* + \epsilon \highbound(\levelidx)]$.

\textbf{Bidding Bounds.}
We now precisely define the bounding functions $\lowbound(\cdot)$ and $\highbound(\cdot)$.

\begin{definition}
\label{def:bounds}
\begin{align}
\lowbound(i)& =  2^{2|\level_{<i}| }\\
\highbound(i)& =  2^{2|\level_{<i}| + |\level_i|}
\end{align}
\end{definition}

These bounds are given specifically so that for any level $k$, the sum over upper bounds in lower levels is at most the lower bound in level $k$, and the sum over all lower bounds for lower (or equal) levels is at most the upper bound for level $k$. Intuitively, we are saying that lower-level bidders cannot over or under bid enough to make up for bidders in level $k$.
 
\begin{claim}
\label{clm:bbounds}
\begin{align}
\lowbound(k) & >  \sum_{i=0}^{k-1} |L_i| \highbound(i) 	\label{eq:bminusbound}\\
\highbound(k) & >  \sum_{i=0}^{k} |L_i| \lowbound(i) \label{eq:bplusbound}
\end{align}
\end{claim}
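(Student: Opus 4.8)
The plan is to rewrite both bounds in terms of cumulative level sizes and then reduce everything to two elementary inequalities for powers of two. Writing $n_i=|\level_i|$ and using $|\level_{<i+1}|=|\level_{<i}|+n_i$, Definition~\ref{def:bounds} yields the two recurrences $\highbound(i)=2^{n_i}\lowbound(i)$ and $\lowbound(i+1)=2^{2n_i}\lowbound(i)$; in particular $\lowbound(i)\le\highbound(i)\le\lowbound(i+1)$, and $\lowbound$ grows by a factor of at least $4$ per nonempty level. The two facts I need are $1+n2^n\le 2^{2n}$ and $1+n\le 2^n$, both for every integer $n\ge 0$: the second is standard, and since $n<2^n$ we have $n2^n<2^{2n}$, whence $1+n2^n\le 2^{2n}$ by integrality (at $n=0$ both read as equalities, which is harmless).

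First I would establish \eqref{eq:bminusbound} by induction on $k$. The base case $k=0$ pits the empty sum against $\lowbound(0)=2^{0}=1>0$. Assuming $\sum_{i=0}^{k-1}n_i\,\highbound(i)<\lowbound(k)$, I compute
\[
\sum_{i=0}^{k}n_i\,\highbound(i)<\lowbound(k)+n_k\,\highbound(k)=\lowbound(k)\bigl(1+n_k2^{n_k}\bigr)\le 2^{2n_k}\lowbound(k)=\lowbound(k+1),
\]
using $\highbound(k)=2^{n_k}\lowbound(k)$, then the fact $1+n_k2^{n_k}\le 2^{2n_k}$, then the recurrence for $\lowbound$. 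This closes the induction with a strict inequality.

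Then \eqref{eq:bplusbound} falls out without a second induction. Because $\lowbound(i)\le\highbound(i)$, the just-proved \eqref{eq:bminusbound} gives $\sum_{i=0}^{k-1}n_i\,\lowbound(i)\le\sum_{i=0}^{k-1}n_i\,\highbound(i)<\lowbound(k)$, so
\[
\sum_{i=0}^{k}n_i\,\lowbound(i)<\lowbound(k)+n_k\,\lowbound(k)=(1+n_k)\lowbound(k)\le 2^{n_k}\lowbound(k)=\highbound(k),
\]
invoking $1+n_k\le 2^{n_k}$. I do not expect a serious obstacle here: the only real idea is to carry the strict inequality through the induction and to notice that the doubly-exponential gap in the exponents ($2|\level_{<i}|$ versus $2|\level_{<i}|+n_i$) is calibrated exactly so that the accumulated lower-level slack — the ``$+1$'' appearing at each step — never overtakes the current level. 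The one point to watch is the indexing of the base level and confirming that empty or singleton levels cause no trouble, which holds precisely because both elementary inequalities are already valid at $n=0$.
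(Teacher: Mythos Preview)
Your argument is correct. The paper itself omits the proof entirely, remarking only that ``it follows from manipulation of exponential sums,'' so there is no detailed approach in the paper to compare against; your induction via the recurrences $\highbound(i)=2^{n_i}\lowbound(i)$ and $\lowbound(i+1)=2^{2n_i}\lowbound(i)$ together with the elementary bounds $1+n\le 2^{n}$ and $1+n2^{n}\le 2^{2n}$ is exactly the kind of exponential-sum manipulation the authors have in mind, and it cleanly delivers the strict inequalities. One small remark: in the paper every level $\level_i$ is nonempty by construction (each $z_i$ is a \emph{distinct} utility-target actually attained), so $n_i\ge 1$ throughout and the $n=0$ boundary case you flag never actually arises in the inductive step --- but it does no harm to have checked it.
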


We omit the proof; it follows from manipulation of exponential sums.

\textbf{Witness outcomes.}
Recall from Algorithm~\ref{alg:s-egal} that utility-targets for any given player are raised until the CEF constraint for some outcome $\outcome$ is violated. These outcomes have an important role to play in the egalitarian equilibrium --- they are the reason that a bidder cannot achieve any more utility. We will call them \emph{witness} outcomes.

Three properties of these witness outcomes are important for us. First, bidder $i$ values the witness at less than her egalitarian bid, hence she would be `losing' if it was chosen above the egalitarian winning ad; that all bidders with higher utility value it at at least their utility-target; and that with the final winning bids, the total bid of each is tied. We define \emph{witness} outcomes precisely as follows:

\begin{definition}
\label{def:witness}
Outcome $\witness$ is a \emph{witness} outcome for bidder $\bidderidx$ at the egalitarian utility-targets $\optsbid{}$ if its total bid is tied with the egalitarian outcome, $\bidderidx$ asking for more utility at the egalitarian equilibrium results in a higher total bid for $\witness$ than for the optimal egalitarian ad and $\bidderidx$ is the highest-utility bidder to lose if $\witness$ wins over $\optoutcome$. 
\end{definition}

Recall the intuition behind these outcomes: they are the reason that a bidder cannot achieve more utility at the egalitarian equilibrium.  If there is no witness for a bidder who must pay something, then the bidder could ask for more utility, and higher utility bidders could effectively `pick up the slack', resulting in a more egalitarian outcome.

\begin{claim}
\label{clm:at-least-one-witness}
At the egalitarian equilibrium, every bidder $\bidderidx$ s.t. $\optsbid{i} < \val{i}(\optoutcome)$ has at least one witness.
\end{claim}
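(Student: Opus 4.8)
The plan is to argue by contradiction, exploiting the fact that at the egalitarian equilibrium the CEF constraints collapse into a clean form. First I would record a simplification: since Algorithm~\ref{alg:s-egal} caps every target at $\optsbid{i}\le \val{i}(\optoutcome)$, every bidder is a winner at $\optoutcome$, so $u_i(\optoutcome)=\optsbid{i}$ and, by Lemma~\ref{lem:allwin-cef}, the only CEF constraints that can bind are the tied-outcome constraints $\sum_i b_i(\outcome)\le\sum_i b_i(\optoutcome)$. In this regime, Definition~\ref{def:witness} says an outcome $\witness$ is a witness for $\bidderidx$ exactly when it is tied with $\optoutcome$, bidder $\bidderidx$ loses there, and $\bidderidx$ is a highest-utility loser there. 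The goal is then to show that a paying bidder who lacks such a witness could have her utility raised at the expense of strictly richer bidders, producing a more equal CEF profile and contradicting lexicographic optimality.

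So suppose some $\bidderidx$ with $\optsbid{\bidderidx}<\val{\bidderidx}(\optoutcome)$ (equivalently $b_\bidderidx(\optoutcome)>0$) has no witness. Raise $\optsbid{\bidderidx}$ by a small $\delta>0$. Because $b_\bidderidx(\optoutcome)>0$, this lowers $\sum_i b_i(\optoutcome)$ by exactly $\delta$, and lowers $\sum_i b_i(\outcome)$ by $\delta$ on outcomes where $\bidderidx$ pays and by $0$ where she loses. Hence the only constraints that can break are those of tied outcomes $\outcome$ at which $\bidderidx$ loses. If there are none, then raising $\bidderidx$ alone already yields a strictly more equal, still-CEF profile, which is the desired contradiction; so assume such threatened outcomes exist.

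For each threatened outcome $\outcome$, conditions (i) and (ii) of Definition~\ref{def:witness} hold by construction, so the only way $\outcome$ fails to be a witness for $\bidderidx$ is that condition (iii) fails: there is a loser $k$ at $\outcome$ with $\optsbid{k}>\optsbid{\bidderidx}$. Since all bidders win at $\optoutcome$, such a $k$ can be lowered: decreasing $\sbid{k}$ by $\delta$ raises $b_k(\optoutcome)$ by $\delta$ while (for $\delta$ small) keeping $k$ a loser at $\outcome$, so $b_k(\outcome)=0$ is unchanged and the slack removed from $\outcome$'s constraint is exactly restored. The key sanity check is that lowering any such $k$ cannot help a third outcome overtake $\optoutcome$: it raises $\sum_i b_i(\optoutcome)$ by $\delta$ and raises any other outcome's total by at most $\delta$, so every gap $\sum_i b_i(\optoutcome)-\sum_i b_i(\outcome')$ is nondecreasing. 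Assigning one such compensator per threatened outcome and lowering it by $\delta$ thus keeps the profile in $\cefset$.

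The net effect of this transfer is to increase $u_\bidderidx$ by $\delta$ while decreasing the utilities of the chosen compensators, each of which started strictly above $\optsbid{\bidderidx}$; for $\delta$ small enough no two utilities cross, so this is a Pigou–Dalton equalizing step and $u_\uparrow$ strictly increases lexicographically, contradicting egalitarian optimality. The main obstacle is the bookkeeping of the compensation step: one must choose $\delta$ and an assignment of compensators to threatened outcomes so that (a) every threatened constraint is simultaneously restored, (b) no compensator is lowered below the relevant $\val{k}(\outcome)$ (so its bid there stays $0$) or below a level that would invert its order with $\bidderidx$, and (c) no previously-slack outcome is pushed past $\optoutcome$. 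Verifying that all of these hold at once, i.e.\ that the compensation is consistent across the whole set of threatened outcomes, is the crux. The alternative is to read the witness directly off Algorithm~\ref{alg:s-egal} as the constraint that halted $\bidderidx$'s ascent; that route trades the consistency bookkeeping above for a careful argument that $\bidderidx$ is a \emph{strict} highest-utility loser at the halting constraint (in particular, that later-fixed bidders stop exactly at $\optsbid{k}=\val{k}(\outcome)$ and so are not strict losers there).
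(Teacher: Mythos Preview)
Your approach is essentially the paper's own: argue by contradiction, raise $\bidderidx$'s utility-target by a small amount so that only previously-tied outcomes where $\bidderidx$ loses become threatened, observe that each such outcome must have a strictly higher-utility loser (else it would be a witness), lower those compensators to restore the ties, and conclude that the resulting profile is CEF and strictly more egalitarian. The paper's proof is in fact terser than yours and glosses over exactly the consistency bookkeeping you flag in (a)--(c); your identification of that as the crux is apt, and your alternative route via the halting constraint of Algorithm~\ref{alg:s-egal} is a reasonable complement.
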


\begin{proof}
We will prove via contradiction. Assume at the CEF egalitarian outcome $\optoutcome$ bidder $\bidderidx$ has no witness. Now, let bidder $\bidderidx$ increase her utility-target by a small enough $\epsilon>0$, that only outcomes that were previously tied with $\optoutcome$ win over $\optoutcome$. For each of these outcomes, there must be a higher utility bidder than $\bidderidx$ who does not win with the outcome; otherwise it would be a witness for $\bidderidx$. Decrease the utility-targets of the highest utility bidder not in each of these outcomes by $\epsilon$. At this point, all outcomes will be tied again --- and we can have the optimal outcome win the tiebreaker via having a higher utility, or assume that one player will decrease, then raise their utility-target to ensure that it was the previous outcome to win. These bids will be CEF, and will be more egalitarian than $\optoutcome$, as bidder $\bidderidx$ achieved more utility, and only higher utility bidders achieved less utility.
\end{proof}

Another important property will be that each outcome is only a witness for bidders of a single level:

\begin{claim}
\label{clm:one-level-per-witness}
An outcome is only a witness outcome for bidders of a single level.
\end{claim}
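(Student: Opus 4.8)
The plan is to derive the claim directly from the third defining property of a witness outcome in Definition~\ref{def:witness}: the bidder for whom $\witness$ is a witness is the \emph{highest-utility} bidder who loses when $\witness$ wins over $\optoutcome$. The crucial observation is that the set of bidders who would lose under $\witness$ is determined entirely by $\witness$ and the fixed egalitarian utility-targets $\optsbid{}$, and does not depend on which bidder we are testing. Once this is isolated, the claim reduces to the statement that two bidders who are each ``the highest-utility loser'' must share the same utility, hence the same level.

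Concretely, I would first fix $\witness$ and consider the set $S$ of bidders who are losers when $\witness$ is the winning outcome, i.e. those $i\in\bidders$ with $\val{i}(\witness)<\optsbid{i}$ (equivalently, the bidders whose effective bid satisfies $b_i(\witness)=0$ while they receive their target at $\optoutcome$). Because all bidders bid quasi-truthfully with targets $\optsbid{}$, the set $S$ is well-defined and is the same regardless of which bidder $\bidderidx$ we name. Then I would suppose $\witness$ is simultaneously a witness for two bidders $\bidderidx$ and $\altbidderidx$. Applying the third witness property to $\bidderidx$ gives $\optsbid{\bidderidx}=\max_{i\in S}\optsbid{i}$, and applying it to $\altbidderidx$ gives $\optsbid{\altbidderidx}=\max_{i\in S}\optsbid{i}$, so $\optsbid{\bidderidx}=\optsbid{\altbidderidx}$. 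Since levels are defined by the distinct values $z_1<\cdots<z_k$ of the egalitarian utility-targets, with $\level_i$ the set of bidders whose target equals $z_i$, equal targets place $\bidderidx$ and $\altbidderidx$ in the same level, which is exactly the claim.

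The only point that needs care is the interpretation of ``highest-utility bidder to lose'' together with the accompanying tie-breaking. If several bidders attain the maximum $\max_{i\in S}\optsbid{i}$, they all share this common utility-target and therefore the same level, so the notion of ``the highest-utility bidder'' is well-defined up to level membership, which is all the claim demands; I do not need to argue that the witness bidder is unique. I expect this to be the sole subtle step, and it is resolved simply by noting that the egalitarian utility of a winner equals her utility-target, so ``highest utility'' and ``highest utility-target among $S$'' coincide. The remainder is immediate from the definition, making this a short argument with no substantial calculation.
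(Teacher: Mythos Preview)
Your proposal is correct and follows essentially the same approach as the paper: the paper's proof is a one-line observation that the definition of witness requires $\bidderidx$ to be ``the highest-utility bidder to lose'' under $\witness$, so two bidders in different levels cannot simultaneously satisfy this. You have simply made this explicit by isolating the loser set $S$ and noting it is independent of the candidate bidder, which is a clean formalization of the paper's remark.
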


This really follows from the definition --- bidders in different levels cannot both be the highest utility bidder to not win with an outcome. More intuitively though, if players of different levels were both not in an outcome, and the lower utility bidder had no other witness outcome, then a more egalitarian outcome would involve increasing his utility-target, and decreasing the utility-target of the higher utility bidder.

\textbf{Bidding convergence.}
We now present the core of our convergence result. This convergence is a two step process for bidders in a given level; after the utility bids of all lower level bidders have converged within their bounds, convergence in the given level to at least the lower bound takes place first, and then bids in the given level will converge to below their upper bound.
	
\begin{lemma}
\label{lem:sbid-lowerbound}
Under assumptions \Aa, \Ab, \Ac\  and \Ad, the utility-target of each bidder $\bidderidx$ in level $\level_{\levelidx}$ will converge to at least their lower bounds, $\optsbid{\bidderidx} - \epsilon \cdot \lowbound(\levelidx)$ if for every bidder $\altbidderidx$ in level $\level_{\lowlevelidx}$ s.t. $\lowlevelidx < \levelidx$, $\sbid{\altbidderidx} \leq  \optsbid{\altbidderidx} + \epsilon \cdot \highbound(\lowlevelidx)$.
\end{lemma}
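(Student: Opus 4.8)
The plan is to prove the statement one step at a time, via the following reduction: I will show that whenever bidder $\bidderidx\in\level_\levelidx$ holds a utility-target $\sbid{\bidderidx}\le\optsbid{\bidderidx}-\epsilon\,\lowbound(\levelidx)$ (and every lower-level bidder respects her hypothesized upper bound), bidder $\bidderidx$ is necessarily a \emph{winner}. Granting this, the convergence claim is immediate from the axioms: by (\Aa) a winner never raises her effective bid, so $\sbid{\bidderidx}$ cannot be pushed any lower, and by (\Ac) a winner actively raises her target again; since each move has size $\epsilon$, $\sbid{\bidderidx}$ can dip at most one increment past the threshold before being turned around, and hence converges to at least $\optsbid{\bidderidx}-\epsilon\,\lowbound(\levelidx)$.

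To show that a low target forces $\bidderidx$ to win, I would argue that the egalitarian outcome $\optoutcome$ (at which $\val{\bidderidx}(\optoutcome)\ge\optsbid{\bidderidx}>\sbid{\bidderidx}$, so $\bidderidx$ wins) beats every outcome $o$ at which $\bidderidx$ would lose, i.e. every $o$ with $\val{\bidderidx}(o)<\sbid{\bidderidx}$, forcing $b_\bidderidx(o)=0$. Writing the total-bid comparison $\totalbid_b(\optoutcome)>\totalbid_b(o)$, cancelling $\bidderidx$'s own terms (using $b_\bidderidx(o)=b^*_\bidderidx(o)=0$ since $\val{\bidderidx}(o)<\sbid{\bidderidx}<\optsbid{\bidderidx}$, and $b_\bidderidx(\optoutcome)=\val{\bidderidx}(\optoutcome)-\sbid{\bidderidx}$), and invoking the bid-optimality of $\optoutcome$ at the egalitarian targets ($\sum_k b^*_k(o)\le\sum_k b^*_k(\optoutcome)$), the required inequality collapses to
\[\optsbid{\bidderidx}-\sbid{\bidderidx}\;>\;\sum_{k\neq\bidderidx}\Big(\big(b_k(o)-b^*_k(o)\big)-\big(b_k(\optoutcome)-b^*_k(\optoutcome)\big)\Big),\]
whose right-hand side measures the \emph{extra} help the other bidders give $o$ over $\optoutcome$ relative to their egalitarian bids. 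Since $\optsbid{\bidderidx}-\sbid{\bidderidx}\ge\epsilon\,\lowbound(\levelidx)$ by hypothesis, it suffices to bound that sum below $\epsilon\,\lowbound(\levelidx)$.

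Bounding this deviation sum is the step I expect to be the main obstacle, and it is where the witness structure and Claim~\ref{clm:bbounds} must be used together. First I would reduce the threatening outcome $o$ to a witness $\witness$ for $\bidderidx$: Claim~\ref{clm:at-least-one-witness} supplies one when $\bidderidx$ pays anything (the no-witness case makes $\bidderidx$ a trivial winner). The decisive property of $\witness$ is that $\bidderidx$ is the \emph{highest-utility} bidder to lose when $\witness$ beats $\optoutcome$, so every higher-level bidder values $\witness$ at least her own target and is a winner at both $\witness$ and $\optoutcome$; for such a bidder each bracketed term equals $\val{k}(\witness)-\val{k}(\optoutcome)$ independently of $\sbid{k}$, so her contribution is $\le 0$ and the entire (uncontrolled) set of higher levels drops out. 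What remains are the bidders in levels $\lowlevelidx<\levelidx$ together with the level-$\levelidx$ bidders who are losers at $\witness$ (Claim~\ref{clm:one-level-per-witness} keeps the witness confined to a single level). Each such term I would bound by $|\sbid{k}-\optsbid{k}|$ and then by the converged interval for its level, so that the lower-level contributions total at most $\sum_{\lowlevelidx<\levelidx}|\level_{\lowlevelidx}|\,\epsilon\,\highbound(\lowlevelidx)$, which is below $\epsilon\,\lowbound(\levelidx)$ precisely by the inequality $\lowbound(\levelidx)>\sum_{\lowlevelidx<\levelidx}|\level_{\lowlevelidx}|\,\highbound(\lowlevelidx)$ of Claim~\ref{clm:bbounds}; this is the exact sense in which the exponential bounds of Definition~\ref{def:bounds} were designed so that the lower levels cannot over- or under-bid enough to compensate for $\bidderidx$. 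The fiddly parts requiring care are tracking the sign of each deviation term (using Claim~\ref{clm:winlower}, that a bidder only raised her bid as a loser and only lowered it as a winner, so that only the hypothesized direction of slack is ever invoked) and handling the level-$\levelidx$ bidders who share $\bidderidx$'s target, which are treated collectively since the bound is proved for the whole level at once.
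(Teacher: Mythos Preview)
Your overall plan—show that when $\sbid{\bidderidx}\le\optsbid{\bidderidx}-\epsilon\,\lowbound(\levelidx)$ bidder $\bidderidx$ must be a winner, then invoke (\Aa) and (\Ac)—matches the paper. The gap is in how you control the same-and-higher-level bidders. You attempt to ``reduce the threatening outcome $o$ to a witness $\witness$ for $\bidderidx$,'' but there is no such reduction: witnesses are extremal only at the \emph{egalitarian} targets, whereas you must show that $\optoutcome$ beats the \emph{actually winning} outcome $o$ at the \emph{current} targets $\allsbid$, and a non-witness outcome can easily be the current winner. Even restricting attention to $\witness$ does not save the argument: your claim that each higher-level bidder $k$ is a winner at $\witness$ under the current bids requires $\val{k}(\witness)\ge\sbid{k}$, but the induction has not yet bounded $\sbid{k}$ for levels above $\levelidx$, so $\sbid{k}$ may exceed $\val{k}(\witness)$ (you only know $\val{k}(\witness)\ge\optsbid{k}$). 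The witness machinery is the right tool for the \emph{upper}-bound Lemma~\ref{lem:sbid-upperbound}, not here.

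The paper's proof dispenses with witnesses entirely for this lemma and instead uses axiom (\Ad) directly on the actual winning outcome $o$. If $\bidderidx$ is about to lower her utility-target, (\Ad) says she is the loser with the \emph{highest} current utility-target. Hence any bidder $\altbidderidx$ in level $\ge\levelidx$ with $\sbid{\altbidderidx}>\optsbid{\altbidderidx}$ (so $\sbid{\altbidderidx}>\optsbid{\altbidderidx}\ge\optsbid{\bidderidx}>\sbid{\bidderidx}$) cannot also be a loser at $o$, or she would have acted before $\bidderidx$; being a winner at $o$ then forces $(b_{\altbidderidx}(\optoutcome)-b_{\altbidderidx}^*(\optoutcome))-(b_{\altbidderidx}(o)-b_{\altbidderidx}^*(o))\ge 0$, so her contribution to your deviation sum is nonnegative and drops out. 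With higher levels neutralized this way, the lower-level bidders are bounded by the hypothesized $\epsilon\,\highbound(\lowlevelidx)$ exactly as you wrote, and Claim~\ref{clm:bbounds} closes the argument. In short: replace the witness reduction by the (\Ad)-based observation about who is allowed to be a loser right now.
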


\begin{proof}
Our argument consists of two parts: first, that if a bidder is bidding for utility at or below her lower bound then she will never reduce her utility-target further. Second, she will eventually try raising her bid (by \Ac). These two combined will lead to her eventually raising her bid to at least the lower bound.

\begin{claim}
Under the assumptions of Lemma \ref{lem:sbid-lowerbound}, no bidder $\bidderidx$ in level $\level_{\levelidx}$ with a utility-target of $\sbid{\bidderidx} \leq \optsbid{\bidderidx} - \epsilon \cdot \lowbound(\levelidx)$ will lower her utility-target.
\end{claim}

We will prove via contradiction. Assume for bidder $\bidderidx$ that with a utility-target of $\sbid{\bidderidx} \leq \optsbid{\bidderidx} - \epsilon \cdot \lowbound(\levelidx)$, she wishes to lower her utility-target further. Let $\outcome$ be the winning outcome with bids $\allsbid$. As $i$ will only lower her bid if she is losing (\Aa), $\sbid{\bidderidx} > \val{\bidderidx}(\outcome)$. We will now try to derive the contradiction that the total bid for the optimal outcome is at least the total effective bid for $\outcome$ ($\totalbid(\optoutcome) > \totalbid(\outcome)$), hence she must win and would not care to lower her utility-target.

For $\bidderidx$ to decrease her utility-target, by \Ad\ she must be the highest utility bidder who is losing such that $\sbid{\bidderidx} > \val{\bidderidx}(\outcome)$. By our bound, we know that for every lower utility bidder $\altbidderidx$ in level $\level_{\lowlevelidx}$, $\sbid{\altbidderidx}\leq \optsbid{\altbidderidx} + \epsilon \highbound(\lowlevelidx)$. Since $\optoutcome$ is the optimal winning outcome and $\outcome$ the currently winning outcome, $\mathbf{B}^*(\optoutcome) \geq \mathbf{B}^*(\outcome)$ and $\mathbf{B}(\optoutcome) \leq \mathbf{B}(\outcome)$.

In the egalitarian outcome, every bidder $\bidderidx$ receives the utility she bids for; hence $b_{\bidderidx}^*(\optoutcome) = \val{\bidderidx}(\optoutcome) - \optsbid{\bidderidx}$. By our assumption on the utility-target bounds, for all bidders $\altbidderidx\in \level_{<\levelidx+1}$, $b^*_{\altbidderidx}(\optoutcome) - b_{\altbidderidx}(\optoutcome) \leq \epsilon \highbound(\level({\altbidderidx}))$.

Consider a bidder $\altbidderidx$ in a lower level than $\bidderidx$ and first, is requesting more utility relative to the egalitarian outcome, specifically that $\sbid{\altbidderidx} > \optsbid{\altbidderidx}$. Hence, we will have $0\geq b_{\altbidderidx}(\optoutcome) - b_{\altbidderidx}^*(\optoutcome) \geq -(\sbid{\altbidderidx}-\optsbid{\altbidderidx})$ and $0\geq b_{\altbidderidx}(\outcome) - b_{\altbidderidx}^*(\outcome) \geq -(\sbid{\altbidderidx}-\optsbid{\altbidderidx})$. Hence,
\begin{align} 
(b_{\altbidderidx}(\optoutcome) - b_{\altbidderidx}^*(\optoutcome))-(b_{\altbidderidx}(\outcome) - b_{\altbidderidx}^*(\outcome)) &\geq -(\sbid{\altbidderidx}-\optsbid{\altbidderidx})\\
 					&\geq -\highbound(\level(\altbidderidx)). \label{eq:lowerbound-surplusbid}
\end{align}

Consider the case that $\sbid{\altbidderidx} \leq \optsbid{\altbidderidx}$, that $\altbidderidx$ is requesting less utility than in the egalitarian outcome. Then $b_{\altbidderidx}(\optoutcome) - b_{\altbidderidx}^*(\optoutcome) = -(\sbid{\altbidderidx} - \optsbid{\altbidderidx})\geq 0 $, and $b_{\altbidderidx}(\outcome) - b_{\altbidderidx}^*(\outcome)\leq -(\sbid{\altbidderidx} - \optsbid{\altbidderidx})$. Hence, 

\begin{align}
(b_{\altbidderidx}(\optoutcome) - b_{\altbidderidx}^*(\optoutcome))-(b_{\altbidderidx}(\outcome) - b_{\altbidderidx}^*(\outcome)) \geq 0. \label{eq:lowerbound-lesssurplusbid}
\end{align}

Summing over all lower-level bidders via Equations \eqref{eq:lowerbound-surplusbid} and \eqref{eq:lowerbound-lesssurplusbid} gives $(\totalbid_{<\levelidx}(\optoutcome) - \totalbid_{<\levelidx}^*(\optoutcome)) - (\totalbid_{<\levelidx}(\outcome) - \totalbid_{<\levelidx}^*(\outcome)) \geq -\sum_{\altlevelidx < \level(\bidderidx)} \highbound(\altlevelidx)$ and hence by Claim~\ref{clm:bbounds},
\begin{equation}\label{eq:lowerbidderbound}
(\totalbid_{<\levelidx}(\optoutcome) - \totalbid_{<\levelidx}^*(\optoutcome)) - (\totalbid_{<\levelidx}(\outcome) - \totalbid_{<\levelidx}^*(\outcome)) > - \lowbound(\levelidx).
\end{equation}

Now, consider a bidder $\altbidderidx$ in the same or a higher level than $\bidderidx$. If $\altbidderidx$ is overbidding and not winning in outcome $\outcome$ with bids $b$, then she would have decreased her utility-target faster than $\bidderidx$. She could however be overbidding and winning in $\outcome$; in which case the decrease in bids for $\optoutcome$ must be bounded by the decrease for $\outcome$, hence: $(b_{\altbidderidx}(\optoutcome) - b_{\altbidderidx}^*(\optoutcome)) - (b_{\altbidderidx}(\outcome) - b_{\altbidderidx}^*(\outcome)) \geq 0$. If she is requesting less utility, $\optoutcome$ will see the full increase in bid while $\outcome$ may not. Denote the total bid of all bidders aside from $\bidderidx$ in the same or higher level as $\bidderidx$ as $\totalbid_{\geq \levelidx\setminus \bidderidx}(\outcome)$. Then, summing over all such bidders gives

\begin{equation}\label{eq:higherbidderbound}
\totalbid_{\geq \levelidx\setminus \bidderidx}(\optoutcome) - \totalbid_{\geq \levelidx\setminus \bidderidx}^*(\optoutcome)) - (\totalbid_{\geq \levelidx\setminus \bidderidx}(\outcome) - \totalbid_{\geq \levelidx\setminus \bidderidx}^*(\outcome)) \geq 0.
\end{equation}

Our original assumption on $\bidderidx$ gives $(b_{\bidderidx}(\optoutcome) - b_{\bidderidx}^*(\optoutcome)) - (b_{\bidderidx}(\outcome) - b_{\bidderidx}^*(\outcome))\leq \lowbound(\levelidx)$. Now, taking the sum over this and equations \eqref{eq:lowerbidderbound} and \eqref{eq:higherbidderbound} gives $(\totalbid(\optoutcome) - \totalbid^*(\optoutcome)) - (\totalbid(\outcome) - \totalbid^*(\outcome)) > -\highbound(\level(j)) + \highbound(\level(j))=0$. By our assumption that $\optoutcome$ is the egalitarian winning outcome, we have $\totalbid^*(\optoutcome) - \totalbid^*(\outcome) \geq 0$. Adding these yields
\begin{equation}
\totalbid(\optoutcome) - \totalbid(\outcome) > 0.
\end{equation}
This is in violation of our assumption that $\outcome$ wins with bids $b$. Hence, no such bidder $\bidderidx$ can ever wish to lower her utility-target past the lower bound when all lower-level agents have bids within their upper bounds. By Assumption \Ac,  she will eventually try and lower her bid when winning, hence her bid will converge above her lower bound.
\end{proof}	

\begin{lemma}
\label{lem:sbid-upperbound}
Under assumptions \Aa, \Ab, \Ac\ and \Ad, the utility-target $\sbid{\bidderidx}$ of each bidder $\bidderidx$ in level $\level_{\levelidx}$ will converge to at most the upper bound, $\optsbid{\bidderidx} + \epsilon \cdot \highbound(\levelidx)$ if for every bidder $\altbidderidx$ in level $\level_{\lowlevelidx}$ s.t. $\lowlevelidx\leq \levelidx$, $\utiltarget_{\altbidderidx} \geq  s^*_{\altbidderidx} - \epsilon \cdot \lowbound(\lowlevelidx)$. 
\end{lemma}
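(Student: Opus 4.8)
The plan is to prove this as the mirror image of Lemma~\ref{lem:sbid-lowerbound}, interchanging the roles of winners and losers and of raising and lowering the effective bid, with the witness outcomes of Claim~\ref{clm:at-least-one-witness} supplying the key new ingredient. As in that proof I would argue in two parts. The central claim is that, as long as the hypothesised lower bounds hold for every level $\altlevelidx\leq\levelidx$, no bidder $\bidderidx\in\level_\levelidx$ with $\sbid{\bidderidx}\geq\optsbid{\bidderidx}+\epsilon\,\highbound(\levelidx)$ will ever lower her effective bid (raise $\sbid{\bidderidx}$) further. Granting this, the second part is short: by \Aa\ lowering one's effective bid is a winner's move, and the central claim forbids it once $\bidderidx$ is at the threshold, so her target never climbs across the bound; moreover an all-winners configuration with $\sbid{\bidderidx}\geq\optsbid{\bidderidx}+\epsilon\,\highbound(\levelidx)$ will be shown impossible, while Lemma~\ref{lem:allwin-finite} guarantees the auction keeps reaching all-winners configurations, so $\bidderidx$'s target must fall below the threshold (pushed down by \Aa\ whenever she is a loser) before any such configuration is reached. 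Hence $\sbid{\bidderidx}$ converges to at most $\optsbid{\bidderidx}+\epsilon\,\highbound(\levelidx)$.

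To prove the central claim I would argue by contradiction. Suppose $\bidderidx$ is about to lower her effective bid while $\sbid{\bidderidx}\geq\optsbid{\bidderidx}+\epsilon\,\highbound(\levelidx)$. By \Aa\ she is a winner, and by \Ab\ the fact that she (a winner) is the one moving means no losers remain, so all bidders are winners; by Lemma~\ref{lem:allwin-cef} the configuration is then CEF, and by Claim~\ref{clm:welfare-max} together with the single-optimum assumption the winning outcome is the egalitarian outcome $\optoutcome$ and $\bidderidx$ is a winner there, i.e. $\val{\bidderidx}(\optoutcome)\geq\sbid{\bidderidx}$. In the degenerate case $\val{\bidderidx}(\optoutcome)=\optsbid{\bidderidx}$ this is already contradictory, since $\sbid{\bidderidx}\geq\optsbid{\bidderidx}+\epsilon\,\highbound(\levelidx)>\val{\bidderidx}(\optoutcome)$. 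Otherwise $\optsbid{\bidderidx}<\val{\bidderidx}(\optoutcome)$ and Claim~\ref{clm:at-least-one-witness} furnishes a witness outcome $\witness$ for $\bidderidx$: at the egalitarian targets $\totalbid^*(\witness)=\totalbid^*(\optoutcome)$, bidder $\bidderidx$ values $\witness$ below $\optsbid{\bidderidx}$ (so $b^*_{\bidderidx}(\witness)=b_{\bidderidx}(\witness)=0$), and every higher-utility bidder $\altbidderidx$ values $\witness$ at least her own egalitarian target, $\val{\altbidderidx}(\witness)\geq\optsbid{\altbidderidx}$.

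I would then compare $\totalbid(\witness)$ with $\totalbid(\optoutcome)$, measuring every effective bid against its egalitarian value and writing $P_{\altbidderidx}=(b_{\altbidderidx}(\witness)-b^*_{\altbidderidx}(\witness))-(b_{\altbidderidx}(\optoutcome)-b^*_{\altbidderidx}(\optoutcome))$ for each bidder. The baseline gap $\totalbid^*(\witness)-\totalbid^*(\optoutcome)$ vanishes, so it suffices to sign $\sum_{\altbidderidx}P_{\altbidderidx}$. Because $b_{\bidderidx}(\witness)=0$ is unaffected while $\bidderidx$'s inflated target cuts $b_{\bidderidx}(\optoutcome)$ by her full surplus, $P_{\bidderidx}=\sbid{\bidderidx}-\optsbid{\bidderidx}\geq\epsilon\,\highbound(\levelidx)$. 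Using that all bidders are winners on $\optoutcome$ (so each has egalitarian-relative bid $b_{\altbidderidx}(\optoutcome)-b^*_{\altbidderidx}(\optoutcome)=\optsbid{\altbidderidx}-\sbid{\altbidderidx}$), a short case analysis on whether a bidder's target exceeds her value for $\witness$ shows that every higher-utility bidder has $P_{\altbidderidx}\geq0$ --- this is exactly where the witness property $\val{\altbidderidx}(\witness)\geq\optsbid{\altbidderidx}$ is used to rule out the only adverse case --- while a bidder in a level $\altlevelidx\leq\levelidx$ can make $P_{\altbidderidx}$ negative only by over-bidding, and then only by at most $\epsilon\,\lowbound(\altlevelidx)$ thanks to the hypothesised lower bound $\sbid{\altbidderidx}\geq\optsbid{\altbidderidx}-\epsilon\,\lowbound(\altlevelidx)$. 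Summing and invoking the gap inequality~\eqref{eq:bplusbound} of Claim~\ref{clm:bbounds}, $\sum_{\altbidderidx}P_{\altbidderidx}\geq\epsilon\,\highbound(\levelidx)-\epsilon\sum_{\altlevelidx\leq\levelidx}|\level_\altlevelidx|\,\lowbound(\altlevelidx)>0$, so $\totalbid(\witness)>\totalbid(\optoutcome)$, contradicting that $\optoutcome$ is the winning outcome.

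The main obstacle is precisely this bid accounting, and in particular controlling the bidders on whom the lemma's hypothesis says nothing --- those in levels above $\levelidx$. The resolution is the interplay between \Ab\ and the witness structure: \Ab\ guarantees that the instant $\bidderidx$ (a winner) can act, everyone is a winner, which both fixes the winner as $\optoutcome$ and puts every bidder's egalitarian-relative bid on $\optoutcome$ into the clean form $\optsbid{\altbidderidx}-\sbid{\altbidderidx}$; then the defining property of a witness, $\val{\altbidderidx}(\witness)\geq\optsbid{\altbidderidx}$ for all higher-utility bidders (together with Claim~\ref{clm:one-level-per-witness}, which keeps $\witness$ attached to level $\levelidx$), is exactly what prevents a high-level bidder from ever contributing negatively, so no bound on her target is needed. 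Getting the four sign-cases and the direction of each inequality right, and verifying that the $\epsilon$ of slack in the threshold indeed prevents the target from creeping across the bound, is the delicate bookkeeping I would then carry out in full.
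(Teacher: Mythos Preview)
Your proposal is correct and follows essentially the same approach as the paper: both set up the contradiction by noting (via \Aa, \Ab) that when $\bidderidx$ would increase $\sbid{\bidderidx}$ all bidders are winners so $\optoutcome$ wins, then invoke the witness outcome $\witness$ from Claim~\ref{clm:at-least-one-witness} and compare $\totalbid(\optoutcome)$ with $\totalbid(\witness)$ bidder-by-bidder, handling higher-level bidders through the witness property and lower/same-level bidders through the hypothesised lower bounds together with Claim~\ref{clm:bbounds}. Your $P_{\altbidderidx}$ is simply the negative of the paper's tracked quantity, and your use of Lemma~\ref{lem:allwin-finite} in place of the paper's appeal to Theorem~\ref{thm:converge-sink} for the convergence half is an equivalent route to the same conclusion.
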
		

\begin{proof}

By Assumptions \Aa, \Ab\ and Observation \ref{obs:lower-allwin}, a bidder will only request more utility from a set of bids $b$ with winning outcome $\outcome$ if all other bidders are winning with bids $b$, and by Lemma \ref{lem:allwin-cef}, $b$ must be CEF.%

Our proof will proceed by showing that in any such $\outcome$, $\sbid{\bidderidx} < \optsbid{\bidderidx} + \epsilon \cdot \highbound(\levelidx)$, and hence her utility-target must stay below $\optsbid{\bidderidx} + \epsilon \cdot \highbound(\levelidx)$ in winning outcomes. Furthermore, by Theorem \ref{thm:converge-sink} bids will become CEF; hence $i$ will be forced to decrease her utility-target.

By Claim~\ref{clm:at-least-one-witness}, there is a witness outcome $\witness$ which includes every bidder $\altbidderidx$ in a strictly higher level $\highlevelidx$ than $\bidderidx$.  We will now show that if all other players are winning with the egalitarian winning outcome, then $\bidderidx$'s utility-target must be below her upper bound, otherwise the witness outcome $\witness$ would win over $\optoutcome$.

By Definition \ref{def:witness}, $\totalbid^*(\witness) = \totalbid^*(\optoutcome)$. Consider the quantity $\totalbid(\optoutcome) - \totalbid^*(\optoutcome)$, and break it into sums over bidders in levels at or below bidder $\bidderidx$, $\bidderidx$ and bidders in levels above $\bidderidx$:
\begin{align*}
(\totalbid(\optoutcome) - \totalbid^*(\optoutcome)) 
	= &(\totalbid_{\leq \levelidx \setminus \bidderidx}(\optoutcome) - \totalbid^*_{\leq \levelidx \setminus \bidderidx}(\optoutcome)) \\
	&+ (b_{\bidderidx}(\optoutcome) - b_{\bidderidx}^*(\optoutcome))\\ 
	&+ (\totalbid_{>\levelidx}(\optoutcome) - \totalbid^*_{>\levelidx}(\optoutcome))
\end{align*}

We will now proceed by separately considering bidders in higher and lower levels than bidder $\bidderidx$. We will bound the change in bids from each, and see that there is no way for bidder $\bidderidx$ to ask for utility above her upper bound and still be in the winning outcome.

\textbf{Higher-level bidders.}
By properties of witness sets, any such bidder $\altbidderidx$ must be winning in the witness outcome at both the egalitarian bids and the current bids, hence $b_{\altbidderidx}(\witness) - b_{\altbidderidx}^*(\witness) = -(\sbid{\altbidderidx} - \optsbid{\altbidderidx})$. Since we know that at the egalitarian bids, such a bidder must be winning in the egalitarian outcome,  $b_{\altbidderidx}(\optoutcome) - b_{\altbidderidx}^*(\optoutcome) = -(\sbid{\altbidderidx} - \optsbid{\altbidderidx}) \leq  b_{\altbidderidx}(\witness) - b_{\altbidderidx}^*(\witness)$. Summing over all such bidders yields 
\begin{equation}
(\totalbid_{> \levelidx}(\optoutcome)-\totalbid^*_{> \levelidx}(\optoutcome)) - (\totalbid_{> \levelidx}(\witness)-\totalbid^*_{> \levelidx}(\witness)) \leq 0 \label{eq:highbidderser}
\end{equation}

\textbf{Lower-level bidders.}
By our initial assumption that bidding has converged above lower bounds for these bidders, for any bidder $\altbidderidx$ in $\level_{\leq \levelidx}$, $ \sbid{\altbidderidx} \geq \optsbid{\altbidderidx} - \epsilon \lowbound(\level(\altbidderidx))$, and hence $b_{\altbidderidx}(\optoutcome) \leq b_{\altbidderidx}^*(\optoutcome) + \epsilon \lowbound(\level(\altbidderidx))$ and $b_{\altbidderidx}(\witness) \leq b_{\altbidderidx}^*(\witness) + \epsilon \lowbound(\level(\altbidderidx))$. 

Recall that all bids $b_{\altbidderidx}^*(\optoutcome)$ and $b_{\altbidderidx}(\optoutcome)$ are winning by assumption --- since $\optoutcome$ is the egalitarian outcome, and no player wishes to decrease their utility-target in the current bids. If for some bidder $\altbidderidx$, $v_{\altbidderidx}(\witness) \geq v_{\altbidderidx}(\optoutcome)$, then $(b_{\altbidderidx}(\witness)- b_{\altbidderidx}^*(\witness)) = (b_{\altbidderidx}(\optoutcome)- b_{\altbidderidx}^*(\optoutcome)) = -(\utiltarget_{\altbidderidx} - \utiltarget_{\altbidderidx}^*)$.

Consider then the case that $\val{\altbidderidx}(\witness) < \val{\altbidderidx}(\optoutcome)$; that is, that $\altbidderidx$ values the witness $\outcome$ less than the egalitarian outcome. We will consider two cases: that her utility-target is lower or higher than her egalitarian utility-target respectively. 
\begin{enumerate}

\item[$\mathbf{(\sbid{\altbidderidx} < \optsbid{\altbidderidx})}$] If the bidder $\altbidderidx$ bids for less utility than in the egalitarian outcome, then that increase in effective bid will be bounded by the increase in the bid for the egalitarian outcome. That is, we have $b_{\altbidderidx}(\witness) - b_{\altbidderidx}^*(\witness) = \max(v_{\altbidderidx}(\witness), \utiltarget_{\altbidderidx}) - \utiltarget_{\altbidderidx} - \max(v_{\altbidderidx}(\witness), \utiltarget_{\altbidderidx}^*) + \utiltarget_{\altbidderidx}^*$, and hence $b_{\altbidderidx}(\witness) - b_{\altbidderidx}^*(\witness) = -(\utiltarget_{\altbidderidx}-\utiltarget_{\altbidderidx}^*) + (\max(v_{\altbidderidx}(\witness), \utiltarget_{\altbidderidx}) - \max(v_{\altbidderidx}(\witness), \utiltarget_{\altbidderidx}^*))$. As $b_{\altbidderidx}(\optoutcome) - b_{\altbidderidx}^*(\optoutcome) = -(\utiltarget_{\altbidderidx}-\utiltarget_{\altbidderidx}^*)$, we then have:
\begin{equation}
0 \leq b_{\altbidderidx}(\witness) - b_{\altbidderidx}^*(\witness) \leq b_{\altbidderidx}(\optoutcome) - b_{\altbidderidx}^*(\optoutcome) = -(\utiltarget_{\altbidderidx} - \utiltarget_{\altbidderidx}^*).
\end{equation} 
Furthermore, since $\utiltarget_{\altbidderidx}\geq \utiltarget_{\altbidderidx}^* - \epsilon \lowbound(L({\altbidderidx}))$ by assumption, we have:
\begin{equation}
0 \leq b_{\altbidderidx}(\witness) - b_{\altbidderidx}^*(\witness) \leq b_{\altbidderidx}(\optoutcome) - b_{\altbidderidx}^*(\optoutcome) \leq \epsilon \lowbound(L({\altbidderidx}))
\end{equation}
and
\begin{equation}\label{eq:smallersurplusbid}
0 \leq (b_{\altbidderidx}(\optoutcome) - b_{\altbidderidx}^*(\optoutcome)) - (b_{\altbidderidx}(\witness) - b_{\altbidderidx}^*(\witness)) \leq \epsilon \lowbound(L({\altbidderidx})).
\end{equation}

\item[$\mathbf{(\utiltarget_{\altbidderidx} \geq \utiltarget_{\altbidderidx}^*)}$] If bidder ${\altbidderidx}$ instead is bidding for at least as much utility as in the egalitarian outcome, the decrease in total bid is bounded by the change in bids for the egalitarian outcome, hence the change in utility-targets will be between $b_{\altbidderidx}(\optoutcome) - b_{\altbidderidx}^*(\optoutcome) = \utiltarget_{\altbidderidx}^*-\utiltarget_{\altbidderidx}$ and $0$. Hence, 
\begin{equation}
b_{\altbidderidx}(\optoutcome) - b_{\altbidderidx}^*(\optoutcome) \leq b_{\altbidderidx}(\witness) - b_{\altbidderidx}^*(\witness) \leq 0
\end{equation}
and 
\begin{equation}\label{eq:largersurplusbid}
-(\utiltarget_{\altbidderidx}-\utiltarget_{\altbidderidx}^*) \leq (b_{\altbidderidx}(\optoutcome) - b_{\altbidderidx}^*(\optoutcome)) - (b_{\altbidderidx}(\witness) - b_j^*(\witness)) \leq 0.
\end{equation}
\end{enumerate}

We now have upper bounds on $-(\utiltarget_{\altbidderidx}-\utiltarget_{\altbidderidx}^*) \leq (b_{\altbidderidx}(\optoutcome) - b_{\altbidderidx}^*(\optoutcome)) - (b_{\altbidderidx}(\witness) - b_j^*(\witness))$ for all lower-level bidders. Taking the sum across all members of $\level_{\leq \levelidx}$ via Equations \eqref{eq:smallersurplusbid} and \eqref{eq:largersurplusbid} gives:
\begin{equation}
\sum_{\altbidderidx \in \level_{\leq \levelidx}} (b_{\altbidderidx}(\optoutcome) - b_{\altbidderidx}^*(\optoutcome)) - (b_{\altbidderidx}(\witness) - b_{\altbidderidx}^*(\witness)) \leq \sum_{\altbidderidx\in \level_{\leq \levelidx}} \epsilon \lowbound(L(\altbidderidx))
\end{equation}

Rearranging and noting that $\sum_{\altbidderidx\in \level_{\leq \levelidx}} \epsilon \lowbound(L(\altbidderidx)) < \highbound(\levelidx)$ by Claim \ref{clm:bbounds} gives
\begin{equation}
\label{eq:smallerlevel-increasebound}
(\totalbid_{\leq \levelidx \setminus \bidderidx}(\optoutcome) - \totalbid_{\leq \levelidx\setminus \bidderidx}^*(\optoutcome)) - (\totalbid_{\leq \levelidx\setminus \bidderidx}(\witness) - \totalbid_{\leq \levelidx\setminus \bidderidx}^*(\witness)) < \epsilon \highbound(\levelidx).
\end{equation}

Summing over equations \eqref{eq:smallerlevel-increasebound} and \eqref{eq:highbidderser} gives us:

\begin{align}
\label{eq:smallerbiglevel-increasebound}
&(\totalbid_{\leq \levelidx \setminus \bidderidx}(\optoutcome) - \totalbid_{\leq \levelidx\setminus \bidderidx}^*(\optoutcome)) - (\totalbid_{\leq \levelidx\setminus \bidderidx}(\witness) - \totalbid_{\leq \levelidx\setminus \bidderidx}^*(\witness)) \notag\\ 
&\ \ \ + (\totalbid_{>\levelidx}(\optoutcome) - \totalbid_{>\levelidx}^*(\optoutcome)) - (\totalbid_{>\levelidx}(\witness) - \totalbid_{>\levelidx}^*(\witness)) < \epsilon \highbound(\levelidx).
\end{align}

By assumption, $(b_{\bidderidx}(\optoutcome) - b_{\bidderidx}^*(\optoutcome)) = -(\utiltarget_{\bidderidx}-\utiltarget_{\bidderidx}^*) \leq - \epsilon \highbound(\levelidx)$ and $b_{\bidderidx}(\witness) = b_{\bidderidx}^*(\witness)) = 0$. Thus, $(b_{\bidderidx}(\optoutcome) - b_{\bidderidx}^*(\optoutcome)) - (b_{\bidderidx}(\witness) - b_{\bidderidx}i^*(\witness)) \leq - \epsilon \highbound(\levelidx)$. Adding this to \eqref{eq:smallerbiglevel-increasebound} gives:
\begin{equation}
(\totalbid(\optoutcome) - \totalbid^*(\optoutcome)) - (\totalbid(\witness) - \totalbid^*(\witness)) < \epsilon \highbound(\levelidx) - \epsilon \highbound(\levelidx)=0
\end{equation}
By our initial assumption that $\witness$ is a witness outcome, $\totalbid^*(\optoutcome) - \totalbid^*(\witness)=0$. Adding this to  the above equation yields
\begin{equation}
\totalbid(\optoutcome)  < \totalbid(\witness)
\end{equation}
This contradicts our assumption that $\optoutcome$ is a winning set with bids $b(\cdot)$. Hence, $\bidderidx$ will be forced to decrease her utility-target to at most $\utiltarget_{\bidderidx}^* + \epsilon \highbound(\levelidx)$ before the egalitarian winning set $\optoutcome$ is winning again. 
\end{proof}

Combining Lemma~\ref{lem:sbid-lowerbound} and Lemma~\ref{lem:sbid-upperbound} gives us convergence of each bidder in each level $\levelidx$ to within their bounds as soon as lower level bidders have all converged. It follows then from straightforward induction on levels that all bids converge to within their bounds.
\end{proofof}

\end{document}